\documentclass{article} 
\usepackage{iclr2022_conference,times}


\usepackage{amsmath,amsfonts,bm}









\def\eqref#1{equation~\ref{#1}}









\def\1{\bm{1}}










\DeclareMathAlphabet{\mathsfit}{\encodingdefault}{\sfdefault}{m}{sl}
\SetMathAlphabet{\mathsfit}{bold}{\encodingdefault}{\sfdefault}{bx}{n}













\DeclareMathOperator*{\argmin}{arg\,min}

\usepackage{hyperref}
\usepackage{url}

\usepackage{booktabs}       
\usepackage{amsfonts}       
\usepackage{nicefrac}       
\usepackage{microtype}      
\usepackage{xcolor}         
\usepackage{graphicx}

\usepackage{authblk}

\title{Learning Causal Models from Conditional Moment Restrictions by Importance Weighting}

\author[1,2]{Masahiro Kato}
\author[2]{Masaaki Imaizumi}
\author[3]{Kenichiro McAlinn}
\author[1]{Haruo Kakehi}
\author[1]{Shota Yasui}

\affil[1]{AI Lab, CyberAgent, Inc.}
\affil[2]{The University of Tokyo}
\affil[3]{Temple University}

%

\usepackage[utf8]{inputenc} 
\usepackage[T1]{fontenc}    
\usepackage{hyperref}       
\usepackage{url}            
\usepackage{booktabs}       
\usepackage{amsfonts}       
\usepackage{nicefrac}       
\usepackage{microtype}      
\usepackage{xcolor}         

\usepackage{bm}
\usepackage{bbm}
\usepackage{multicol,multirow}
\usepackage{comment}
\usepackage{booktabs}

\usepackage{enumerate}   
\usepackage{amsmath}

\usepackage{wrapfig}

\usepackage{appendix}
\usepackage{amssymb}

\usepackage{amsthm}
\usepackage{hyperref}
\usepackage[capitalise]{cleveref}
\Crefname{assumption}{Assumption}{Assumptions}

\newtheorem{theorem}{Theorem}
\newtheorem{lemma}{Lemma}

\newtheorem{assumption}{Assumption}
\newtheorem{proposition}{Proposition}
\newtheorem{definition}{Definition}

\newcommand{\iid}{ \stackrel{i.i.}{\sim} }

\newcommand{\bigO}{\mathcal{O}} %

\newcommand{\pa}{\mathrm{\pa}}


\renewcommand{\eqref}[1]{(\ref{#1})}

\newcommand{\RN}[1]{%
  \textup{\uppercase\expandafter{\romannumeral#1}}%
}

\newcount\Comments  
\Comments=1 

\def \hr {\hat{r}}

\def \rstar {r^*}

\def \iid {\overset{\mathrm{i.i.d.}}{\sim}}



\newcommand{\annotmath}[2]{\underbrace{#1}_{#2}}

\def \Re {\mathbb{R}}
\def \Na {\mathbb{N}}

\def \r {r}
\def \Ltwo {L^2}




\newcommand{\rClass}{\mathcal{H}}
\newcommand{\rClassBound}{B_r}
\newcommand{\rClassBoundMin}{b_r}
\newcommand{\rClassRangeTwo}{(\rClassBoundMin, \rClassBound)}

\def \supr {\sup_{\r \in \rClass}}






\def \rstar {r^*}

\def \hr {\hat{\r}}







\newcommand{\lzeroNrm}[1]{\|#1\|_0}

\newcommand{\Order}[1]{\bigO\left(#1\right)}

\newcommand{\Orderp}[1]{\bigO_{p}\left(#1\right)}

\newcommand{\Lmax}{\bar{L}}
\newcommand{\pmax}{\bar{p}}

\newcommand{\rClassM}{\rClass_M}
\newcommand{\bracketEntropy}[3]{\log \mathcal{N}\left(#1, #2, #3\right)}

\newcommand{\rClassMSupNormBoundConst}{c_0}

\newcommand{\IndLP}{\mathrm{Ind}_{\Lmax,\pmax}}

\newcommand{\rClassLP}{\rClass(L, p, s, F)}
\newcommand{\elled}[1]{\ell \circ #1}
\newcommand{\ellLip}{\nu}
\newcommand{\vmax}[2]{\max\left\{#1, #2\right\}}

\usepackage{algorithm}
\usepackage{algorithmic}
\usepackage{wrapfig}

\iclrfinalcopy

\begin{document}

\maketitle

\begin{abstract}
We consider learning causal relationships under conditional moment restrictions. Unlike causal inference under unconditional moment restrictions, conditional moment restrictions pose serious challenges for causal inference, especially in high-dimensional settings. To address this issue, we propose a method that transforms conditional moment restrictions to unconditional moment restrictions through importance weighting, using a conditional density ratio estimator. Using this transformation, we successfully estimate nonparametric functions defined under conditional moment restrictions. Our proposed framework is general and can be applied to a wide range of methods, including neural networks. We analyze the estimation error, providing theoretical support for our proposed method. In experiments, we confirm the soundness of our proposed method.
\end{abstract}

\section{Introduction}
Consider learning the causal relationship between airline ticket prices and demand. As one might expect, prices and demand rise and fall through the seasons, being affected by other events like vacation periods, which are called confounders and may or may not be observable. Due to confounders, naively inferring from this pattern that higher (lower) prices increase (decrease) demand would be incorrect, and potentially detrimental. Thus, controlling for confounding effects is essential. This issue frequently arises in practice, especially when learning causal (structural) relationships is essential to answer counterfactual questions regarding policy intervention and outcome \citep{Hansen2022}.

One approach to deal with confounding effects (like in the airline example above) is the instrumental variable (IV) approach \citep{Wooldridge2002,Greene2003Econometric}.
In the IV approach, the conditional moment restriction is defined as such that the causal model satisfies the restriction of zero expected value given IVs, thus conditioning out the confounding effect.
The simplest representation of this idea is the two-step least squares (2SLS) for linear models \citep{Wooldridge2002,Greene2003Econometric}. 
However, given the complex nature of the causal effect and confounding effect (and their relation), assuming a linear relation can be too strong.
Thus, in this paper, we focus on 
nonparametric IV (NPIV) regressions, allowing for much more flexible estimation \citep{Newey2003}.

NPIV can be viewed as an instance of a more general framework of causal inference under conditional moment restrictions. 
In this light, the machinery for inference under conditional moment restrictions also applies to NPIV, as well as its shortcomings.
One major issue with using the conditional moment restrictions for causal inference is that one must approximate the conditional expectation, which is often difficult to do
(see, e.g., \citet{Newey1993, Donald2003} for parametric and \citet{Newey2003,Ai2003Efficient} for nonparametric IVs using sieves).
For instance, \citet{Lewbel2007} and \citet{Otsu2011} estimate the conditional expectation by local kernel density estimation.
However, local kernel density estimation suffers under high dimensionality.
For this problem, recent methods suggest the use of machine learning methods, such as neural networks \citep{Hartford2017}.

In this paper, we propose transforming conditional moment restrictions into unconditional moment restrictions by importance weighting using the conditional density ratio, which is defined as the ratio of the conditional probability density, conditioned on the IVs, to the unconditional probability density.
We show that the unconditional expectation of a random variable weighted by the conditional density ratio is equal to the conditional expectation.
Further, we show that it is possible to estimate the conditional density ratio with the least-squares method with a neural network.
Once the conditional density ratio is estimated, the usual method of moments, such as GMM, can be used straightforwardly.

The contribution of this paper is as follows: (i) we propose a novel approach to convert conditional moment restrictions to unconditional moment restrictions by importance weighting; (ii) using our proposed transformation, we develop methods for NPIV; (iii) we analyze the estimation error

\section{Setup and notation}
\label{sec:setup}
Among various problems of learning causal relationships from conditional moment restrictions, we focus on the NPIV regression for ease of discussion. Note that our proposed method can be applied to more general settings, similar to \citet{Ai2003Efficient}.

Suppose that the observations $\{(Y_i, X_i, Z_{i})\}^n_{i=1}$ are i.i.d., where $Y_i\in\mathcal{Y}\subseteq\mathbb{R}$ is an observable scalar random variable, $X_i\in\mathcal{X}\subseteq \mathbb{R}^{d_X}$ is a $d_X$ dimensional explanatory variable, $Z_i\in\mathcal{Z}\subseteq \mathbb{R}^{d_Z}$ is a $d_Z$ dimensional random variable, called an IV, and $\mathcal{X}$ and $\mathcal{Z}$ are compact with nonempty interior. We also assume that the probability densities of $(Y_i, X_i, Z_i)$, $(Y_i, X_i)$, $Z_i$ exist and denote them by $p(y, x, z)$, $p(y, x)$, and $p(z)$, respectively. Let us define the causal relationships between $Y_i$ and $X_i$ as
\begin{align*}
Y_i = f^*(X_i) + \varepsilon_i,
\end{align*}
where $f^*:\mathcal{X}\to\mathcal{Y}$ is a structural function, $\varepsilon_i$ is the sub-Gaussian error term with mean zero. 
To learn $f^*$, suppose the IV $Z_i$ satisfies the following conditional moment restrictions:
\begin{align}
\label{eq:cmc}
\mathbb{E}\left[\varepsilon_i | Z_i\right] = 0\ \ \ \forall i \in \{1,2,\dots, n\}.
\end{align}
Then, we also assume that under the conditional moment restriction, we can uniquely identify $f^*$. Our goal is to learn $f^*$ from the conditional moment restrictions in \eqref{eq:cmc}. If $Z_i = X_i$, this problem boils down to the estimation of the conditional expectation (regression function) $\mathbb{E}[Y_i| X_i]$. However, when $\mathbb{E}[\varepsilon_i| X_i] \neq 0$, $\mathbb{E}[Y_i| X_i]$ is not equivalent to $f^*(X_i)$; that is, typical regression analysis, such as least squares, may not return the correct estimate of $f^*(X_i)$. 

\section{Preliminaries and literature review}
In this section, we briefly review causal inference under moment restrictions.

\subsection{IV method for linear structural functions}
One of the basic cases of using the IV is when $f^*$ is a linear model $X^\top_i\theta^*$ with $d_X$ dimensional vector $\theta^*$ and the error term $\varepsilon_i$ is correlated with the explanatory variable $X_i$. In this case, the parameter $\theta^*$ of the linear model can be estimated if there are IVs of dimension $d_X$ or more that satisfy the unconditional moment restrictions, $\mathbb{E}[Z_i \varepsilon_i] = \bm 0_{d_Z}$, where $\bm 0_d$ is a $d$ dimensional zero vector. 
In the just-identified case ($d_X = d_Z$), we can estimate $\theta^*$ as $\hat{\theta}_\text{IV} = \left(\frac{1}{n}\sum^n_{i=1}Z_i X^\top_i\right)^{-1}\frac{1}{n}\sum^n_{i=1}Z_i Y_i$. In the over-identified case ($d_X \leq d_Z$), we can estimate it by the 2SLS. In the 2SLS, we first regress $X_i$ by $Z_i$; then using a $d_X$ dimensional function $\hat{g}(Z_i)$ obtained from the first stage regression, we estimate $\theta^*$ as $\hat{\theta}_{\mathrm{2SLS}} = (\frac{1}{n}\sum^n_{i=1}\hat{X}_i \hat{X}^\top_i)^{-1}\frac{1}{n}\sum^n_{i=1}\hat{X}_i Y_i$, where $\hat{X}_{d,i} = Z^\top_i(\frac{1}{n}\sum^n_{j=1}Z_j Z^\top_j)^{-1} \frac{1}{n}\sum^n_{j=1}Z_j X_{d,j}$ and $\hat{X}_i = (\hat{X}_{1,i}, \dots, \hat{X}_{d_X,i})^\top$. 

More generally, we can formulate the estimation of the linear structural function by unconditional moment restrictions defined using the IV; that is, $\mathbb{E}[m(\theta^*; Y_i, X_i, Z_i)] = \bm 0_{d_m}$, where $\theta^*\in\Theta$ is a parameter representing the causal effect, $\Theta$ is the parameter space, and $m:\Theta\times \mathbb{R} \times \mathcal{X} \times \mathcal{Z}\to \mathbb{R}^{d_m}$ is a $d_m$ dimensional moment function. Then, the GMM estimator is defined as 
$\hat{\theta}_{\text{GMM}} = \argmin_{\theta\in\Theta} \left(\frac{1}{n}\sum^n_{i=1}m(\theta; Y_i, X_i, Z_i)\right)^\top W_{d_m} \left(\frac{1}{n}\sum^n_{i=1}m(\theta; Y_i, X_i, Z_i)\right),$ where $W_{d_m}$ is a $d_m\times d_m$ weight matrix. If $W_{d_m}$ is chosen as $W^*_{d_m} \propto \mathbb{E}[m(\theta^*; Y_i, X_i, Z_i)^\top m(\theta^*; Y_i, X_i, Z_i)]$, the estimator $\hat{\theta}_{\text{GMM}}$ is efficient under the posited moment conditions. Note that the GMM includes the 2SLS as a special case where $\varepsilon_i$ has the same variance among $i\in\{1,\dots,n\}$ and the zero covariance. 

Three methods have been proposed to obtain $W^*_{d_m}$; two-step GMM, iterative GMM \citep{Hayashi}, and continuous updating GMM \citep[CU estimator; CUE,][]{Hansen1996}. 
These estimators have the same asymptotic distribution, but different non-asymptotic properties. In particular, CUE is known as a special case of the generalized empirical likelihood (GEL) estimator \citep{Owen1988,Smith1997}, which plays an important role in estimation with many moment restrictions \citep{NeweySmith2004}.

\subsection{NPIV and learning from conditional moment restrictions}
In linear structural functions, zero covariance between the IV and error term suffices for identification. However, in nonparametric structural functions, we require a stronger restriction: the error term has conditional expectation zero given the IVs. Then, we can characterize the solution of NPIV as that of an integral equation $K(z) = \mathbb{E}[Y_i| Z_i = z] = \mathbb{E}[f^*(X_i) | z]=\int f^*(x) dF(x| z)$
with $F$ denoting the conditional c.d.f of $x$ given $z$. The identification also results in the uniqueness of the solution. 

Several estimators have been proposed. \citet{Newey2003} proposes a nonparametric analogue of the 2SLS for linear structural functions. They first define the linear-in-parameter model as a model to approximate the nonparametric structural function $f^*$, where they use a linear approximation with basis expansion, such as sieve (series) regression. Let us denote the approximation as $f^*(X_i) \approx \psi(X_i)^\top\theta^*$, where $\psi:\mathcal{X}\to\mathbb{R}^{d_\psi}$, $d_\psi \leq n$ is a vector-valued function consisting of outputs of basis functions. Then, they conduct the 2SLS as follows: (i) they define a linear-in-parameter model as an approximation to the nonparametric model of $Z_i$, and estimate $\mathbb{E}[\psi(X_i)| Z_i]$ using that model; (ii) they regress $Y_i$ by $\mathbb{E}[\psi(X_i)| Z_i]$ to estimate $\theta^*$. In contrast, \citet{Ai2003Efficient} proposes a nonparametric analogue of the GMM. \citet{Ai2003Efficient} also approximate the nonparametric structural function $f^*$ by a linear-in-parameter model. However, unlike \citet{Newey2003}, \citet{Ai2003Efficient} estimates $\mathbb{E}[(Y_i - f(X_i))| Z_i]$, instead of estimating $\mathbb{E}[\psi(X_i)| Z_i]$, by another linear-in-parameter model. Using the estimator of $\mathbb{E}[(Y_i - f(X_i))| Z_i]$, \citet{Ai2003Efficient} transforms conditional moment restrictions into unconditional ones and apply the GMM to estimate $f^*$. In addition to these two typical methods, \citet{Darolles2011}, \citet{Carrasco2007} propose their own methods for NPIV.

This is how the NPIV problem is typically cast into the framework of conditional moment restrictions, where a parameter representing the causal relationship satisfies $\mathbb{E}[m(\theta^*; Y_i, X_i, Z_i)| Z_i] = \bm 0_{d_m}$. As with NPIV, we often define the moment function as a function of the nonparametric function $f$ instead of the parameter $\theta$.
We can estimate $f^*$ defined under conditional moment restrictions using variants of GMM or empirical likelihood, e.g., \citet{Ai2003Efficient, Diminguez2004, Otsu2011, Lewbel2007}, by transforming the conditional moment restrictions to unconditional ones. 


\subsection{Approaches in the machine learning literature}

{\bf 2SLS with more flexible models.}
\citet{Hartford2017} extends the two-stage approach by employing a neural network density estimator, and \citet{Singh2019kernelIV} does it by conditional mean embedding in RKHS. For example, in \citet{Hartford2017}, they first approximate $K(z)$ by estimating $F(x| z)$ with neural networks. Then, for $B$ samples $\{\tilde{X}_j\}^B_{j=1}$ generated from the estimator $\hat{F}(x| z)$, they train neural networks by minimizing the empirical risk $\frac{1}{n}\sum^n_{i=1}\left(Y_i - \frac{1}{B}\sum_{\tilde{X}_j \sim \hat{F}(X_j| Z_i)} f(\tilde{X}_j) \right)^2$.

In contrast, \citet{Xu2021learning} first approximates $f(X_i)$ by neural networks and regard the last layer as $\psi(X_i)$ in the 2SLS of NPIV. Then, they predict $\psi(X_i)$ by another model $\varphi(Z_i)$. Because $\psi(X_i)$ is transitive in the learning process, they alternatively train the models $\varphi(Z_i)$ and $f(X_i)$. 

{\bf Minimax approach.}
There has also been a recent surge in interest in minimax approaches that reformulate conditional moment restrictions as a minimax optimization problem \citep{Bennett2019deepgmm,Bennett2020Variational,Muandet2020Dual,Chernozhukov2020Adversarial,Liao2020Provably}. For this approach, \citet{Dikkala2020} shows a strong theoretical guarantee on the estimation error bound.

\section{Proposed method: NPIV by importance weighting}
Suppose that $p(y, x) > 0$ for all $(y,x)\in\mathcal{Y}\times \mathcal{X}$. Note that we have already assumed the existence of $p(y, x| z)$ for all $(y,x,z)\in\mathcal{Y}\times \mathcal{X}\times \mathcal{Z}$ by assuming the existence of the conditional moment restrictions. Define the conditional density ratio function $r:\mathcal{Y}\times \mathcal{X}\times \mathcal{Z} \to (0, C)$ as
\begin{align*}
r^*(y, x | z) = \frac{p(y, x| z)}{p(y, x)} =  \frac{p(y, x, z)}{p(y, x)p(z)},
\end{align*}
for $0< C < \infty$. We assume the existence of the conditional density ratio function.
\begin{assumption}
\label{asm:bounded_dens}
For all $(y,x,z)\in\mathcal{Y}\times \mathcal{X}\times \mathcal{Z}$, the conditional density ratio $r^*(y, x | z)$ exists.
\end{assumption}

Then, we transform conditional moment restrictions into unconditional moment restrictions as
\begin{align*}
&\mathbb{E}[ (Y_i - f(X_i)) | z ] = \int \big(y - f(x)\big) \frac{p(y, x| z)}{p(y, x)} p(y, x)\mathrm{d}y\mathrm{d}x = \mathbb{E}\left[\big(Y_i - f(X_i) \big) r^*(Y_i, X_i| z)\right].
\end{align*}
Thus, if we know the conditional density ratio function $r^*(y, x| z)$, we can approximate $\mathbb{E}[ (Y_i - f(X_i)) | z ]$ for $z\in\mathcal{Z}$ by the following sample average:
\begin{align*}
&\frac{1}{n}\sum^n_{i=1}\big(Y_i - f(X_i) \big) r^*(Y_i, X_i| z).
\end{align*}
Based on this property, we propose the following two-stage method: (i) estimate the conditional density ratio function $r^*$; (ii) approximate conditional moment restrictions by the sample average of $\big(Y_i - f(X_i) \big)$ using the estimate of the conditional density ratio function $r^*$.
Since we do not know the true value of the conditional density ratio $r^*$ and cannot calculate the expected value, we consider replacing $r^*$ with an estimator $\hat{r}$ and approximating the expected value with the sample mean. 

\subsection{Conditional density ratio estimation}
First, we consider estimating $r^*(y, x| z)$. While it is possible to estimate the probability density functions of the numerator and the denominator, individually, following Vapnik’s principle \citep{Vapnik1998}, we should avoid solving more difficult intermediate problems. \cite{Sugiyama2012} summarizes various methods to estimate the density ratio directly. Inspired by \emph{least-squares importance fitting} (LSIF) of \citet{Kanamori2009}, we estimate the conditional density ratio by minimizing
\[\tilde{r} = \argmin_{r\in\tilde{\mathcal{R}}} \frac{1}{2}\mathbb{E}_{Y,X}\left[\mathbb{E}_Z\left[\big(r^*(Y_i, X_i| Z_j) - r(Y_i, X_i| Z_j)\big)^2\right]\right],\]
where $\tilde{\mathcal{R}}$ denotes a set of measurable functions and for a function $g:\mathcal{Y}\times\mathcal{X}\times\mathcal{Z}\to\mathbb{R}$, $\mathbb{E}_{Y,X}\left[\mathbb{E}_Z\left[g(Y_i, X_i, Z_j)\right]\right]$ denotes $\int g(y, x, z) p(y, x)p(z)\mathrm{d}y\mathrm{d}x\mathrm{d}z$. It is easy to confirm that $\tilde{r} = r^*$ by taking the first derivative of the risk. Because this risk includes the unknown $r^*$, it may seem intractable objective function. However, we can obtain the risk that does not include $r^*$ as
\begin{align*}
&\argmin_{r\in{\tilde{\mathcal{R}}}} \frac{1}{2}\left(\mathbb{E}_{Y,X}\left[\mathbb{E}_Z\left[r^{*2}(Y_i, X_i| Z_j) - 2r^*(Y_i, X_i| Z_j)r(Y_i, X_i| Z_j) + r^2(Y_i, X_i| Z_j)\right]\right]\right)\\
&= \argmin_{r\in{\tilde{\mathcal{R}}}}\left\{ -\mathbb{E}_{Y,X}\left[\mathbb{E}_Z\left[r^*(Y_i, X_i| Z_j)r(Y_i, X_i| Z_j)\right]\right] + \frac{1}{2}\mathbb{E}_{Y,X}\left[\mathbb{E}_Z\left[r^2(Y_i, X_i| Z_j)\right]\right]\right\}\\
&= \argmin_{r\in{\tilde{\mathcal{R}}}}\left\{ -\mathbb{E}_{Y,X,Z}\left[r(Y_i, X_i| Z_i)\right] + \frac{1}{2}\mathbb{E}_{Y,X}\left[\mathbb{E}_Z\left[r^2(Y_i, X_i| Z_j)\right]\right]\right\}.
\end{align*}
Here, we used $\mathbb{E}_{Y,X}\left[\mathbb{E}_Z\left[r^*(Y_i, X_i| Z_j)r(Y_i, X_i| Z_j)\right]\right] = \mathbb{E}_{Y,X}\left[\mathbb{E}_Z\left[\frac{p(Y_i, X_i, Z_j)}{p(Y_i, X_i)p(Z_j)}r(Y_i, X_i| Z_j)\right]\right] =  \mathbb{E}_{Y,X,Z}\left[r(Y_i, X_i| Z_i)\right]$.
For some hypothesis class $\mathcal{R}$, by approximating the risk with its sample approximation, we estimate the conditional density ratio as
\begin{align*}
 \hat{r} = \argmin_{r\in\mathcal{R}}\Bigg\{ - \frac{1}{n}\sum^n_{i=1} r(Y_i, X_i| Z_i) + \frac{1}{2}\frac{1}{n}\sum^n_{j=1}\frac{1}{n}\sum^n_{i=1}r^2(Y_i, X_i| Z_j)\Bigg\}.
\end{align*}

For the hypothesis class $\mathcal{R}$, we can use various models, such as linear-in-parameter models and neural networks. \citet{suzuki2008} also proposes a similar formulation based on maximum likelihood estimation with constraints, which is not easy to solve with neural networks. 

\subsection{NPIV regression by importance weighting} 
If we know the conditional density ratio function $r^*$, we can obtain unconditional moment restrictions $\mathbb{E}_{Y, X}\left[\big(\rho(f; Y_i, X_i, Z_1, r^*)\ \rho(f; Y_i, X_i, Z_2, r^*)\ \cdots\ \rho(f; Y_i, X_i, Z_n, r^*)\big)^\top\right] = \bm 0_{n}$, where $\rho(f; Y_i, X_i, z, r^*) = \big(Y_i - f(X_i)\big) r^*(Y_i, X_i| z) $. By replacing $r^*$ and its expectation with its estimator and the sample average, we obtain the sample vector moment restrictions, $\frac{1}{n}\sum^n_{i=1}\big(\rho(f; Y_i, X_i, Z_1, \hat{r})\ \rho(f; Y_i, X_i, Z_2, \hat{r})\ \cdots\ \rho(f; Y_i, X_i, Z_n, \hat{r})\big)^\top$. 

Once we obtain the sample average, we can apply various methods for learning  $f^*$ from the unconditional moment restriction. For instance, for a hypothesis class $\mathcal{F}$, we estimate $f^*$ by minimizing
\begin{align}
\label{objective_func}
\mathcal{R}_n(f, \hat{r}) = \frac{1}{n}\sum^n_{j=1} \left(\frac{1}{n}\sum^n_{i=1} (Y_i - f(X_i)) \hat{r}(Y_i, X_i| Z_j)\right)^2.
\end{align}
This objective function is closely related with the projected mean squared error (MSE) introduced in \citet{Dikkala2020}, defined as
\begin{align}
    \mathcal{L}(f,r^*) := \mathbb{E}_Z\left[ \left( \mathbb{E}_{Y, X} \left[ (f^*(X_i)-f(X_i))r^*(Y_i, X_i |Z_j) \right] \right)^2 \right], \label{def:exp_loss}
\end{align}

The objective function \eqref{objective_func} is a special case of the GMM and GEL. In the GMM, we estimate $f^*$ by minimizing $\frac{1}{n}\sum^n_{j=1}\left(\frac{1}{n}\sum^n_{i=1} (Y_i - f(X_i)) \hat{r}(Y_i, X_i| Z_j)\right)^\top w_{j} \left(\frac{1}{n}\sum^n_{i=1} (Y_i - f(X_i)) \hat{r}(Y_i, X_i| Z_j)\right)$, where $w_j > 0$ is a weight \citep{Ai2003Efficient}. Thus, our framework of transforming conditional moment restrictions to unconditional moment restrictions using importance weighting allows us to conduct causal inference using NPIV with conditional moment restrictions using a variety of models, such as linear-in-parameter models with basis functions and neural networks.

\paragraph{Linear-in-parameter models.}
As an example, we introduce a linear-in-parameter model with some basis functions. 
Here, let us consider using the Gaussian kernel as the basis function. For $x \in \mathcal{\mathcal{X}}$, let $\varphi(x; \sigma^2)=\left(K(x, X_1; \sigma^2), \dots, K(x, X_n; \sigma^2)\right)^{\top}$, where $K(x, X_u; \sigma^2) = \exp\left(-\frac{\|x-X_u\|^2_2}{2\sigma^2}\right)$ is the Gaussian kernel with a hyperparameter $\sigma^2 > 0$ and $\|\cdot\|_2$ is the $L_2$ norm. Then, we define a linear-in-parameter model as $f(x; \sigma^2) = \beta^{\top}\varphi(x; \sigma^2)+\beta_{0}$, where $\beta\in\mathbb{R}^n$, and $\beta_{0}\in\mathbb{R}$. 

\paragraph{Neural networks.} We can also use neural networks for approximating $f^*$. In this case, we need to carefully determine the network structures because we cannot uniquely determine the solution owing to the overparameterization. In existing studies, it is assumed that the network modes are well defined for the complexity of the nonparametric function $f^*$ and the sample size $n$. 

\paragraph{Advantages of our proposed method.} 
Our method has the following three advantages:
(i) our method is applicable to general causal inference problems, and is not limited to the NPIV problem formulated in Section \ref{sec:setup}, as well as the \cite{Ai2003Efficient},
(ii) our method can deal with high dimensional variables, owing to the machine learning technique. This is in contrast to related studies \cite{Ai2003Efficient} and \cite{Otsu2011}, which use a sieve and Nadaraya-Watson estimator, respectively, and hence do not work in high dimension settings, and
(iii) our method is computationally efficient by the use of the importance weight approach. A similar study \cite{Hartford2017} requires additional sampling of $Y_i$ from an estimated conditional density $p(y|z,y)$. Another related study \cite{Dikkala2020} requires a difficult algorithm to solve its minimax optimization problem. Our importance weight approach avoids such computational burden.

\subsection{Learning with overparametrized models}
\label{sec:overparamet}
In learning from moment restrictions, the structural function $f^*$ is determined by the equations that the expected value of random variables satisfy. 
When approximating the structure function $f^*$ using a model with more parameters than the sample size, the solution cannot be uniquely determined.
When minimizing the prediction error directly, such overparameterization may not pose a major problem, and, in fact, may be necessary to improve the generalization error, as a recent finding suggests in the case of linear regression \citep{Bartlett2019}. However, in NPIV, the objective function is the set of conditional moment restrictions. Unlike in direct prediction error minimization, a model trained to minimize empirically approximated conditional moment restrictions does not necessarily minimize the MSE for $f^*$ ($\mathbb{E}[(f(X_i) - f^*(X_i))^2]$). In fact, we empirically confirm that methods using neural network sometimes do not work well partly because of their overfitting to empirical moment minimization, not to the empirical MSE for $f^*$.
Despite these potential problems, there is a strong motivation to use neural networks, owing to the reported superiority in some applications, such as computer vision \citep{Xu2021learning,xu2021deep,Yuan2021}, natural language processing \citep{Ash2019,Chen2020mostly} tasks. For this reason, we introduce a heuristic to avoid this problem. 

As we explained, the parameters are not uniquely determined by conditional moment restrictions alone. Therefore, from the set of parameters satisfying conditional moment restrictions, we need to select a set of parameters that works well in prediction. We consider training a model that has the minimum MSE with $Y_i$ while satisfying conditional moment restrictions as follows:
\begin{align*}
&\min_{f\in\mathcal{F}} \mathbb{E}\big[(Y_i - f(X_i))^2\big]\ \ \ \mathrm{s.t.}\ \mathbb{E}\left[ (Y_i - f(X_i)) | Z_j\right] = 0\ \ \ \forall j \in \{1,2,\dots, n\}.
\end{align*}
Since it is difficult to solve constrained optimization with neural networks, we propose to solve the penalized optimization. In the case of linear combination with penalties, we train the model by
\begin{align}
\label{eq:penal}
&\min_{f\in\mathcal{F}} \frac{1}{n}\sum^n_{i=1} (Y_i - f(X_i))^2 + \eta \sum^n_{j=1} \left(\frac{1}{n}\sum^n_{i=1} (Y_i - f(X_i)) \hat{r}(Y_i, X_i| Z_j)\right)^2,
\end{align}
where $\eta \geq 0$ is a regularization coefficient. 

The motivation of this heuristic is to select a function $f(x)$ that is the closest to $\mathbb{E}[Y_i| X_i=x]$, among multiple functions satisfying the conditional moment restriction.
This heuristic works well when $f^*(x)$ takes a near value of $\mathbb{E}[Y_i| X_i=x]$ while $f^*(x)\neq \mathbb{E}[Y_i| X_i=x]$.

\section{Estimation error analysis}
\label{sec:est_error_analysis}
We show the estimation error of the conditional density ratio $r^*$ and structural function $f^*$. We denote the sample counterpart of $\mathcal{L}(f, r^*)$ as
\begin{align*}
\mathcal{L}_n(f, r^*) = \frac{1}{n}\sum^n_{j=1} \left(\frac{1}{n}\sum^n_{i=1} (f^*(X_i) - f(X_i)) r^*(Y_i, X_i| Z_j)\right)^2.
\end{align*}
Let us denote the distributions of $(Y_i, X_i)$ and $Z_i$ by $P$ and $Q$, respectively, and define the $L^2$ risk of a function $g$ with $P$ and $Q$ as $\|g\|^2_{L^2(P\times Q)} = \int \int g^2(w) \mathrm{d}P\mathrm{d}Q$. Let us denote the distribution of $(Y_i, X_i,Z_i)$ by $O$, and define the $L^2$ risk of a function $g$ with $O$ as $\|g\|^2_{L^2(O)} = \int g^2(w) \mathrm{d}O$. We put the following assumptions on the error term $\varepsilon_i$. 
\begin{assumption}
\label{asm:sub_gaussian}
The error term $\varepsilon_i$ is sub-Gaussian random variables. In addition, the distributions of $X_i$ and $Z_i$ have probability densities that are finite and bounded away from zero.
\end{assumption}
Note that the randomness of $Y_i$ depends on $\varepsilon_i$ and $X_i$. Define the hypothesis classes of the conditional density ratio $r^*$ and structural function $f^*$ as $\mathcal{R}$ and $\mathcal{F}$, respectively. Suppose that the hypothesis classes are Vapnik–Chervonenkis (VC) class (for rigorous definition, see Section 2.6 in \cite{van1996weak}).
This class include the true models, and the hypothesises are bounded.
\begin{assumption}
\label{asm:true_dens_ratio} 
The hypothesis class $\mathcal{R}$ is VC class, includes the true model, $r^* \in\mathcal{R}$, and all $r \in \mathcal{R}$ are uniformly bounded by $B > 0$.
\end{assumption}
\begin{assumption}
\label{asm:true_structural}
The hypothesis class $\mathcal{F}$ is VC class, includes the true model, $f^* \in\mathcal{F}$, and all $f \in \mathcal{R}$ are uniformly bounded by $B > 0$.
\end{assumption}
We define a measure of complexity of the hypothesis classes: for $\mathcal{F}$, we define the complexity $\mathcal{C}_B(\mathcal{F}) := \int_0^B \sqrt{\log \mathcal{N}(\delta', \mathcal{F} ,\|\cdot\|_{L^\infty})} \mathrm{d}\delta'$ with a covering number $\mathcal{N}(\delta', \mathcal{F}, \|\cdot\|) := \inf\{N | \{f_j\}_{j=1}^N \mathrm{~s.t.~} \mathcal{F} \subset \cup_{j=1} \{f |  \|f - f_j\| \leq \delta'\}\}$ in terms of a sup-norm $\|f\|_{L^\infty} = \sup_{x} |f(x)|$.

In this following parts, for the conditional density ratio estimation, we derive the bound of the MSE $\|\hat{r} - r^*\|_{L^2(P\times Q)}$; for the structural function estimation, we derive the bound of the projected MSE $\mathcal{L}(\hat{f}, {r}^*)$, which corresponds to the upper bound of the MSE of $\|\hat{f} - f^*\|_{L^2(O)}$ (Section~\ref{from_projected}). 

\subsection{MSE of the conditional density ratio}
First, we consider the MSE of the conditional density ratio. For a multilayer perception with ReLU activation function (Definition~\ref{appdx:sparse-network-function-class}), we show the following bound. The proof is shown in Appendix~\ref{appdx:lemm:dens_conv_fast}.
\begin{lemma}[MSE of $r^*$]
\label{lemm:dens_conv_fast}
Suppose that Assumptions~\ref{asm:bounded_dens}--\ref{asm:true_dens_ratio} hold. Let  \(I(r)\) be a
  non-negative function on \(\mathbb{R}\) and \(I(r^*) < \infty\). Define \(\mathcal{R}_M = \{r \in \mathcal{R} :
  I(r) \leq M\}\) satisfying \(\mathcal{R} = \bigcup_{M \geq 1} \mathcal{R}_M\).
  Suppose that there exist \(c_0 > 0\) and \(0 < \gamma < 2\) such that $\sup_{g \in \mathcal{R}_M} \|r - r^*\| \leq c_0 M$ and $\sup_{\stackrel{r \in \mathcal{R}_M}{\|r - r^*\|_{\Ltwo(P)} \leq \delta}} \|r - r^*\|_\infty
    = c_0 M$ for all $\delta > 0$, 
  and that \(\log \mathcal{N}(\delta, \mathcal{R}_M, \|\cdot\|_{L^\infty}) = \Order{M/\delta}^\gamma\). Then,
\begin{align}
    \|\hr - \rstar\|_{\Ltwo(O)} = \mathcal{O}_p\left(n^{-1/(2+\gamma)}\right)
\end{align}
\end{lemma}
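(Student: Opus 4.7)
The plan is to view $\hat r$ as the empirical risk minimizer of the quadratic loss
\[
L_n(r) = -\frac{1}{n}\sum_{i=1}^n r(Y_i, X_i | Z_i) + \frac{1}{2n^2}\sum_{i,j=1}^n r^2(Y_i, X_i | Z_j),
\]
whose population counterpart $L(r) = -\mathbb{E}_{Y,X,Z}[r(Y,X|Z)] + \tfrac{1}{2}\mathbb{E}_{Y,X}\mathbb{E}_Z[r^2(Y,X|Z)]$ satisfies the quadratic identity $L(r) - L(r^*) = \tfrac{1}{2}\|r - r^*\|_{L^2(P\times Q)}^2$ by completing the square. Since $r^* \in \mathcal{R}$ by Assumption~\ref{asm:true_dens_ratio}, the optimality inequality $L_n(\hat r) \le L_n(r^*)$ rearranges to $L(\hat r) - L(r^*) \le (L_n - L)(r^*) - (L_n - L)(\hat r)$, reducing the MSE bound to controlling a centered empirical process indexed by $\mathcal{R}$, localized to the ball $\{r : \|r - r^*\|_{L^2(P\times Q)} \le \delta\}$.

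Second, I would split the empirical process into its linear part (an i.i.d.\ centered mean of $r - r^*$ on $O$) and its quadratic part (a centered two-sample U-statistic with kernel $r^2 - r^{*2}$ evaluated on independent copies from $P$ and $Q$). The linear part is handled by standard symmetrization and chaining under the entropy assumption. For the quadratic part, I would apply Hoeffding's decomposition to split it into projections onto the $(Y,X)$- and $Z$-marginals, both of which reduce to i.i.d.\ sample means and can be chained in the same way, plus a completely degenerate remainder that contributes only at order $1/n$ by decoupling.

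Third, a peeling argument in the localization radius $\delta$ combined with the metric entropy assumption $\log \mathcal{N}(\delta', \mathcal{R}_M, \|\cdot\|_{L^\infty}) = \mathcal{O}((M/\delta')^\gamma)$ yields, for $\gamma < 2$, a Dudley-type chaining bound of order $M^{\gamma/2}\,\delta^{1 - \gamma/2}/\sqrt{n}$ on the local supremum of the centered empirical process. The sup-norm localization assumption $\sup_{r \in \mathcal{R}_M,\, \|r - r^*\|_{L^2(P)} \le \delta} \|r - r^*\|_\infty \le c_0 M$ supplies the envelope needed for Talagrand/Bernstein-type concentration so that the deviation bound is additive rather than multiplicative in $M$.

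Combining these ingredients gives a high-probability basic inequality of the form $\|\hat r - r^*\|_{L^2(P\times Q)}^2 \lesssim M^{\gamma/2}\,\|\hat r - r^*\|_{L^2(P\times Q)}^{1-\gamma/2}/\sqrt{n}$, whose fixed-point solution is $\|\hat r - r^*\|_{L^2(P\times Q)} = \mathcal{O}_p(n^{-1/(2+\gamma)})$; the equivalence of the $L^2(P\times Q)$ and $L^2(O)$ norms on the support, which follows from Assumption~\ref{asm:bounded_dens} and the boundedness of $r^*$ from Assumption~\ref{asm:true_dens_ratio}, then yields the stated conclusion. The main obstacle I anticipate is the two-sample U-statistic structure: one cannot reuse one-sample chaining directly, and the Hoeffding decomposition must be executed so that both the $1/\sqrt{n}$ fluctuations of the projections and the $1/n$ fluctuations of the degenerate term are controlled with the correct envelope dependence, so that the sieve parameter $M$ enters only through the exponent $\gamma/2$ and does not degrade the rate.
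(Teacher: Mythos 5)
Your proposal follows essentially the same route as the paper's proof: the ERM basic inequality combined with the completed-square identity $L(r)-L(r^*)=\tfrac12\|r-r^*\|^2_{L^2(P\times Q)}$ is exactly the paper's Lemma~\ref{lemm:convergence-rate}, the split into a linear centered mean over $O$ and a quadratic two-sample U-statistic term matches the paper's two target suprema, and the localized entropy bounds plus the fixed-point step $\|\hat r-r^*\|^2\lesssim \|\hat r-r^*\|^{1-\gamma/2}/\sqrt{n}$ reproduce the paper's application of van de Geer's Lemma~5.14 and its U-statistic analogue (Proposition~\ref{appdx:lem:suzuki}). The argument is correct and no genuinely different ideas are introduced.
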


\subsection{Projected MSE of structural function \texorpdfstring{$f^*$}{TEXT}}
Next, we consider bounding the projected MSE of the structural function $f^*$. To bound the projected MSE, we use a technique associated with U-statistics. We first obtain the following lemma. Let $(Y_i, X_i)$ be $W_i$, and $W'_i$ be the i.i.d. copy of $W_i$. The proof is shown in Appendix~\ref{appdx:lmm:ustat}.
\begin{lemma}
\label{lmm:ustat}
Suppose that Assumptions~\ref{asm:bounded_dens}--\ref{asm:true_structural} hold. For any $f \in \mathcal{F}$ and $r \in \mathcal{R}$, $\mathcal{L}(f,r) = \widetilde{\mathcal{L}}(f,r) + o_p(1)$ holds as a freely chosen hyper-parameter as $n \to \infty$, where
\begin{align*}
    \widetilde{\mathcal{L}}(f,r) = \mathbb{E}_{W,W',Z}\left[  (f^*(X_i)-f(X_i))r(W_i|Z_i)(f^*(X'_i)-f(X'_i))r(W'_i|Z_i)  \right],
\end{align*}
\end{lemma}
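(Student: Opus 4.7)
I read the lemma as stating that the empirical projected loss $\mathcal{L}_n(f,r) = \frac{1}{n}\sum_{j=1}^{n}\bigl(\frac{1}{n}\sum_{i=1}^{n}(f^*(X_i)-f(X_i))r(Y_i,X_i|Z_j)\bigr)^2$ introduced in Section~\ref{sec:est_error_analysis} converges in probability to the population quantity $\widetilde{\mathcal{L}}(f,r)$; if the left side were a population expectation the two quantities would be literally equal by expanding the square, so the $o_p(1)$ forces the empirical reading. The plan is to open the square into a triple sum over $(i,i',j)$, split the triples according to which indices coincide, dispose of the ``collision'' part using the uniform bounds granted by Assumptions~\ref{asm:true_dens_ratio} and~\ref{asm:true_structural}, and recognize the fully off-diagonal part as a third-order U-statistic whose kernel has mean exactly $\widetilde{\mathcal{L}}(f,r)$.

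Let $\Delta_k := f^*(X_k)-f(X_k)$ and $W_k := (Y_k,X_k)$, so that $\mathcal{L}_n(f,r) = \frac{1}{n^3}\sum_{j,i,i'}\Delta_i\Delta_{i'}\,r(W_i|Z_j)\,r(W_{i'}|Z_j)$. By Assumptions~\ref{asm:true_dens_ratio}--\ref{asm:true_structural} every summand is bounded by $4B^4$ in absolute value. The number of triples $(i,i',j)$ in which at least two indices coincide is $O(n^2)$, so their combined contribution after division by $n^3$ is $O(1/n) = o_p(1)$. This handles the diagonal part of the expansion.

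For the off-diagonal part, I would note that the sum over triples with $i,i',j$ all distinct, divided by $n(n-1)(n-2)$, is a symmetric U-statistic of order three in the i.i.d. sample $\{(W_k,Z_k)\}_{k=1}^n$, built from the symmetrization of the bounded kernel $h((w_1,z_1),(w_2,z_2),(w_3,z_3)) := (f^*-f)(x_1)(f^*-f)(x_2)r(w_1|z_3)r(w_2|z_3)$. When the three sample indices are distinct, $W_i$ and $W_{i'}$ are independent of $Z_j$, so the kernel's mean equals $\mathbb{E}_{W,W',Z}[\Delta\Delta' r(W|Z)r(W'|Z)] = \widetilde{\mathcal{L}}(f,r)$. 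Hoeffding's variance bound for bounded-kernel U-statistics gives $\Var(U_n) = O(1/n)$, and Chebyshev's inequality yields $U_n = \widetilde{\mathcal{L}}(f,r) + o_p(1)$. Since $n(n-1)(n-2)/n^3 = 1 - O(1/n)$, combining with the diagonal estimate completes the argument.

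The main obstacle is the careful bookkeeping around the ``non-trivial'' collisions, namely triples where $j=i$ or $j=i'$. There, $Z_j$ and $W_i$ come from the \emph{same} observation and are therefore dependent, so $\mathbb{E}[r(W_i|Z_j)]$ does not factor and the summand's mean is not $\widetilde{\mathcal{L}}(f,r)$. Without boundedness one would have to quantify this bias term by term; under Assumptions~\ref{asm:true_dens_ratio} and~\ref{asm:true_structural}, however, the $O(n^2)$ count of such triples combined with the uniform bound makes their contribution $O(1/n)$ and thus harmless, which is why only the fully off-diagonal piece matters at leading order.
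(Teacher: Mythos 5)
Your proposal is correct, and it takes a genuinely different (and arguably cleaner) route than the paper. Your opening parenthetical observation is in fact already a complete proof of the lemma as literally stated: since $\mathcal{L}(f,r)=\mathbb{E}_Z\bigl[\bigl(\mathbb{E}_W[(f^*(X)-f(X))r(W|Z)]\bigr)^2\bigr]$, writing the square of the inner expectation as a double expectation over two independent copies $W,W'$ and applying Fubini gives $\mathcal{L}(f,r)=\widetilde{\mathcal{L}}(f,r)$ \emph{exactly}, with no $o_p(1)$ error and no randomness on either side. The paper instead proves the identity by a detour: it replaces the inner expectation by a sample average plus a residual $\kappa_n$, expands the square into a U-statistic part $\mathcal{T}^*$ plus remainder terms $\mathcal{T}_0,\mathcal{T}_1,\mathcal{T}_2$, controls $\kappa_n$ uniformly via the VC/Glivenko--Cantelli property, and invokes a U-statistic limit theorem for $\mathcal{T}^*$ --- which is where its $o_p(1)$ comes from. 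Your second and third paragraphs then prove the empirical statement $\mathcal{L}_n(f,r)=\widetilde{\mathcal{L}}(f,r)+o_p(1)$ by expanding the triple sum, counting the $O(n^2)$ collision triples (correctly flagging that the $j=i$ and $j=i'$ collisions are the ones where independence fails, and that boundedness from Assumptions~\ref{asm:true_dens_ratio}--\ref{asm:true_structural} makes them $O(1/n)$ regardless), and applying Hoeffding's variance bound to the off-diagonal third-order U-statistic. That argument is sound and is essentially a rigorous version of the step the paper asserts separately right after the lemma, namely $\mathcal{L}_n(f,r)=\widetilde{\mathcal{L}}_n(f,r)+\mathcal{O}_p(1/n)$ combined with the U-statistic law of large numbers. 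In short: for the statement as written your Fubini observation suffices and is more elementary; your U-statistic decomposition is not needed for the lemma itself but correctly establishes the empirical counterpart that the surrounding analysis relies on.
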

Regarding the form, we define an empirical version of $\widetilde{\mathcal{L}}(f,r)$ in an U-statistic form:
\begin{align*}
    \widetilde{\mathcal{L}}_n(f,r) = \frac{1}{n} \sum_{j=1}^n \frac{1}{n(n-1)} \sum_{i,i'=1,i\neq i'}^n (f^*(X_i)-f(X_i))r(W_i|Z_j)(f^*(X_{i'})-f(X_{i'}))r(W_{i'}|Z_j).
\end{align*}
By a property of U-statistics (for example, see \cite{arcones1993limit}), $\mathcal{L}_n(f,r) = \widetilde{\mathcal{L}}_n(f,r) + \Orderp{1/n}$ clearly holds. Then, we can decompose the projected MSE $\mathcal{L}(\hat{f}, r^*)$ as follows:
\begin{align}
    \mathcal{L}(\hat{f}, r^*) &= \underbrace{ \widetilde{\mathcal{L}}(\hat{f},  r^*)  - \widetilde{\mathcal{L}}(\hat{f}, \hat{r})}_{=: \Delta_{r}} + \underbrace{\widetilde{\mathcal{L}}(\hat{f}, \hat{r}) -  \widetilde{\mathcal{L}}_n(\hat{f},  \hat{r}) }_{=:\Delta_{f}} + \widetilde{\mathcal{L}}_n(\hat{f}, \hat{r}) + \Orderp{1/n}. \label{eq:loss}
\end{align}
We can handle $\Delta_r$ by the estimation error of $r^*$ as shown in Lemma~\ref{lemm:delta_dens_ratio} in Appendix~\ref{appdx:delta_dens_ratio}. We evaluate the projected MSE by combining \eqref{eq:loss} with Lemma \ref{lemm:dens_conv_fast} and Lemma \ref{lemm:delta_dens_ratio}:
\begin{theorem}\label{thm:general_bound}
Assume that the conditions of Lemmas~\ref{lemm:dens_conv_fast}--\ref{lmm:ustat} hold. Then, for any any $\delta \in (0,1)$, there exists a constant $c > 0$ such that the following inequality holds with $n \geq 1$ and with probability at least $1-\delta$, for any $\gamma \in (0, 1)$:
\begin{align*}
     \mathcal{L}(\hat{f}, r^*)& \leq c \frac{\mathcal{C}_B(\mathcal{F}) + \mathcal{C}_B(\mathcal{R})}{\sqrt{n}} + \mathcal{O}_p \left( \max\left\{\sqrt{ \frac{\log(1/\delta)}{n}}, \frac{1}{n^{1/(2 + \gamma)}} \right\}\right).
\end{align*}
\end{theorem}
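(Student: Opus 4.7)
The plan is to bound the three terms in the decomposition \eqref{eq:loss} separately and then combine them. The starting point is already supplied: it remains to control $\Delta_r$, $\Delta_f$, and $\widetilde{\mathcal{L}}_n(\hat{f},\hat{r})$, each through a distinct tool.

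\textbf{Step 1 (Density-ratio term $\Delta_r$).} I would expand
$\widetilde{\mathcal{L}}(\hat{f},r^*) - \widetilde{\mathcal{L}}(\hat{f},\hat{r})$ as an expectation over $(W,W',Z)$ of a product that is bilinear in $r$, and write the difference as a sum of two terms each linear in $(r^*-\hat{r})$. Using Cauchy--Schwarz, together with the uniform boundedness of $\mathcal{F}$ and $\mathcal{R}$ (Assumptions \ref{asm:true_dens_ratio}--\ref{asm:true_structural}) and the $L^2$ rate $\|\hat{r}-r^*\|_{L^2(O)} = \mathcal{O}_p(n^{-1/(2+\gamma)})$ from Lemma \ref{lemm:dens_conv_fast}, I obtain $|\Delta_r| = \mathcal{O}_p(n^{-1/(2+\gamma)})$. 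This is precisely what Lemma \ref{lemm:delta_dens_ratio} provides.

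\textbf{Step 2 (Empirical-process term $\Delta_f$).} Since $\widetilde{\mathcal{L}}$ and $\widetilde{\mathcal{L}}_n$ are U-statistics of order three (over $W$, $W'$, $Z$), I would apply Hoeffding's decomposition to isolate the dominant degenerate first-order piece, and control the remaining higher-order kernels as $\mathcal{O}_p(1/n)$. For the first-order piece, I would symmetrize and use Dudley's entropy integral applied uniformly over $(f,r)\in\mathcal{F}\times\mathcal{R}$. Because the integrand factors as $(f^*-f)\cdot r\cdot (f^*-f)\cdot r$ (evaluated at independent copies) and each factor is bounded by $B$, a standard contraction/product argument splits the chaining bound into a sum of entropy integrals for $\mathcal{F}$ and $\mathcal{R}$, yielding the $(\mathcal{C}_B(\mathcal{F})+\mathcal{C}_B(\mathcal{R}))/\sqrt{n}$ contribution. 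Finally, bounded differences (McDiarmid/Talagrand) gives the tail term $\sqrt{\log(1/\delta)/n}$ with probability at least $1-\delta$.

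\textbf{Step 3 (Optimality term $\widetilde{\mathcal{L}}_n(\hat{f},\hat{r})$).} Since $\hat{f}$ minimizes $\mathcal{R}_n(\cdot,\hat{r})$ over $\mathcal{F}$ and $f^*\in\mathcal{F}$, I have $\mathcal{R}_n(\hat{f},\hat{r})\leq \mathcal{R}_n(f^*,\hat{r})$. Writing $Y_i-f(X_i)=(f^*(X_i)-f(X_i))+\varepsilon_i$ in both sides and expanding the squares turns this into an inequality of the form $\mathcal{L}_n(\hat{f},\hat{r}) \leq 2|A_n(\hat{f})| + B_n$, where $B_n = \frac{1}{n}\sum_j\bigl(\frac{1}{n}\sum_i \varepsilon_i\hat{r}(W_i|Z_j)\bigr)^2$ is a squared noise average and $A_n$ is a cross term. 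Sub-Gaussianity of $\varepsilon_i$ (Assumption \ref{asm:sub_gaussian}) and boundedness of $\hat{r}$ give $B_n=\mathcal{O}_p(1/n)$; the cross term $A_n$ is absorbed by Cauchy--Schwarz into a $\sqrt{\mathcal{L}_n(\hat f,\hat r)}\cdot\mathcal{O}_p(n^{-1/2})$ factor, which is handled by a $2ab\leq a^2+b^2$ split. The final conversion from $\mathcal{L}_n$ to $\widetilde{\mathcal{L}}_n$ costs only the diagonal $\mathcal{O}_p(1/n)$.

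\textbf{Assembly and main obstacle.} Plugging the three bounds into \eqref{eq:loss} yields the stated inequality, with the $1/n^{1/(2+\gamma)}$ rate inherited from Step 1, the $(\mathcal{C}_B(\mathcal{F})+\mathcal{C}_B(\mathcal{R}))/\sqrt{n}$ term from Step 2, and the $\sqrt{\log(1/\delta)/n}$ tail also from Step 2. The main obstacle is Step 2: the U-statistic structure together with the need to obtain an \emph{additive} separation of the entropies of $\mathcal{F}$ and $\mathcal{R}$ (rather than a joint covering of the product class, which would give an undesirable product of complexities). This requires carefully exploiting the bilinearity of the kernel in $r$ and the linearity in $f$, combined with contraction to strip one factor at a time before chaining.
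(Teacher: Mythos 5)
Your decomposition and the treatment of $\Delta_r$ (Step 1) and $\Delta_f$ (Step 2) line up with the paper: $\Delta_r$ is exactly Lemma~\ref{lemm:delta_dens_ratio} (bilinearity in $r$ plus Cauchy--Schwarz plus Lemma~\ref{lemm:dens_conv_fast}), and $\Delta_f$ is handled in the paper by the uniform U-statistic concentration inequality (Proposition~\ref{prop:arcone_talagrand}); your worry about obtaining an additive rather than multiplicative entropy bound is a non-issue, since the kernel is bounded and Lipschitz in each of $f$ and $r$ separately, so $\log\mathcal{N}(\delta,\mathcal{F}\times\mathcal{R},\|\cdot\|)\lesssim\log\mathcal{N}(\delta,\mathcal{F},\|\cdot\|)+\log\mathcal{N}(\delta,\mathcal{R},\|\cdot\|)$ without any contraction gymnastics.

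The genuine gap is in Step 3. You claim that $B_n=\frac{1}{n}\sum_j\bigl(\frac{1}{n}\sum_i\varepsilon_i\hat{r}(W_i|Z_j)\bigr)^2=\mathcal{O}_p(1/n)$ follows from sub-Gaussianity of $\varepsilon_i$ and boundedness of $\hat{r}$. This is false for two compounding reasons. First, $\mathbb{E}_W[\varepsilon\, r(W|Z_j)]=0$ holds only at $r=r^*$ (where it reduces to the conditional moment restriction $\mathbb{E}[\varepsilon|Z_j]=0$); for a generic $r\in\mathcal{R}$ the inner average has a nonvanishing mean $\mathbb{E}_W[\varepsilon(r-r^*)(W|Z_j)]$ of order $\|r-r^*\|_{L^2}$, so $B_n$ carries a bias of order $\|\hat{r}-r^*\|^2_{L^2(P\times Q)}=\mathcal{O}_p(n^{-2/(2+\gamma)})\gg 1/n$. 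Second, $\hat{r}$ is data-dependent, so even the centered fluctuation of $B_n$ must be controlled uniformly over $\mathcal{R}$, which reintroduces an empirical-process/degenerate-U-statistic argument with the entropy of $\mathcal{R}$ --- precisely the work you are trying to avoid. The paper sidesteps this by never isolating $B_n$: it cancels the pure-noise squares on both sides of the basic inequality, keeps the cross term $\Gamma(f,r)$ involving $\varepsilon_i\,\xi_f\,r$, and bounds $\sup_{f,r}\Gamma(f,r)$ via Talagrand's inequality for U-statistics together with a chaining bound anchored at $\Gamma(f^*,r^*)=0$, using the sub-Gaussianity of $\varepsilon$ and the Lipschitz dependence of $\Gamma$ on $(f,r)$; this is where the $(\mathcal{C}_B(\mathcal{F})+\mathcal{C}_B(\mathcal{R}))/\sqrt{n}$ term actually originates. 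Your Cauchy--Schwarz plus $2ab\le a^2+b^2$ shortcut would, if $B_n$ really were $\mathcal{O}_p(1/n)$, prove the stronger $1/n$ rate of \citet{Dikkala2020} essentially for free --- a sign that the hard part of the argument has been assumed away rather than proved.
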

This result reveals the following two findings: (i) the projected MSE is affected separately by the complexity of $\mathcal{F}$ and $\mathcal{R}$, and (ii) the overall convergence is $\mathcal{O}(1/\sqrt{n})$ when these complexities are finite. For instance, when using neural networks with Definition~\ref{appdx:sparse-network-function-class}, the assumptions on the hypothesis classes are satisfied (Lemma~\ref{appdx:lem:elled-sparse-network-complexity}).

\subsection{From Projected MSE to MSE of \texorpdfstring{$f^*$}{TEXT}}
\label{from_projected}
Following \citet{Chen2012} and \citet[Appendix C.4]{Dikkala2020}, we discuss the derivation of the upper bound of the MSE of $f^*$ from the projected MSE. Let us define the measure of ill-posedness \citep{Chen2012,Dikkala2020} with respect to the function class $\mathcal{F}$ as
$\tau := \sup_{f\in\mathcal{R}} \frac{\|f- f^*\|_{L^2(O)}}{L(f, r^*)}$.
Then, the MSE of $f^*$ is upper bounded as $\|f- f^*\|^2_{L^2(O)} \leq \tau^2 L(\hat{f}, r^*)$ \citep{Chen2012,Dikkala2020}. Thus, Theorem~\ref{thm:general_bound} also implies 
\begin{align*}
     \|\hat{f}- f^*\|^2_{L^2(O)}& = c \tau^2 \frac{\mathcal{C}_B(\mathcal{F}) + \mathcal{C}_B(\mathcal{R})}{\sqrt{n}} + \mathcal{O}_p \left( \tau^2 \max\left\{\sqrt{ \frac{\log(1/\delta)}{n}}, \frac{1}{n^{1/(2 + \gamma)}} \right\}\right).
\end{align*}
Note that \citet{Dikkala2020} derived $1/n$ rate on the MSE of $f^*$. 

\section{Experiments}
We implement the following three methods based on our proposed method: first, we use neural networks to predict $f^*$ and train the model by penalized 
least-squares in \eqref{eq:penal} (IW-LS); second, we use neural networks to predict $f^*$ and train the model by minimizing the sum of approximated moment restrictions in \eqref{objective_func} (IW-MM), which is the same as IW-LS except for the penalized term in the IW-LS; third, we use a linear-in-parameter model with the Gaussian kernel to predict $f^*$ and train the model by GMM (IW-Krnl). For all cases, we use neural networks for estimating $r^*$. We compare our proposed methods with four methods: DeepGMM (\citet{Bennett2019deepgmm}), DFIV (\citet{Xu2021learning}), DeepIV (\citet{Hartford2017}), and KIV (\citet{Singh2019kernelIV}). We use the datasets proposed in \citet{Newey2003}, \citet{Ai2003Efficient}, and \citet{Hartford2017}. In addition, we train neural networks by simple least squares (LS), ignoring the dependency between $X_i$ and $\varepsilon_i$, as a comparison.
To fairly evaluate the performances, for DeepGMM, DFIV, DeepIV, and KIV, we use the code and hyperparameters used in \citet{Xu2021learning}\footnote{https://github.com/liyuan9988/DeepFeatureIV}. For IW-LS, IW-MM, IW-Krnl, and LS, we also follow the model and hyperparameters of the code as possible. More details are shown in Appendix~\ref{appdx:exp}.

\begin{figure}[t]
\begin{center}
\includegraphics[width=140mm]{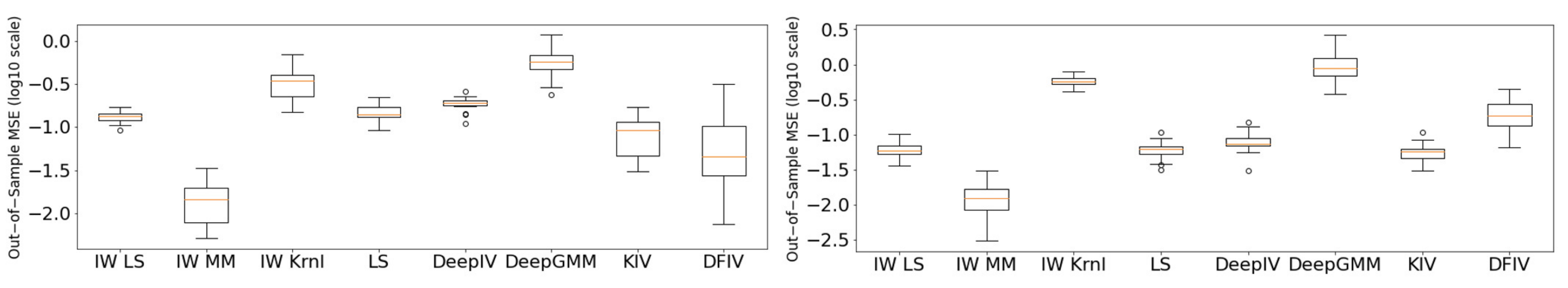}
\end{center}
\vspace{-0.4cm}
\caption{The log10 scaled MSEs of the setting in \citet{Newey2003}. The left graph shows the results using the original dataset. The right graph shows the results with additional IVs.}
\label{fig:np}
\vspace{-0.4cm}
\end{figure}

\begin{figure}[t]
\begin{center}
\includegraphics[width=140mm]{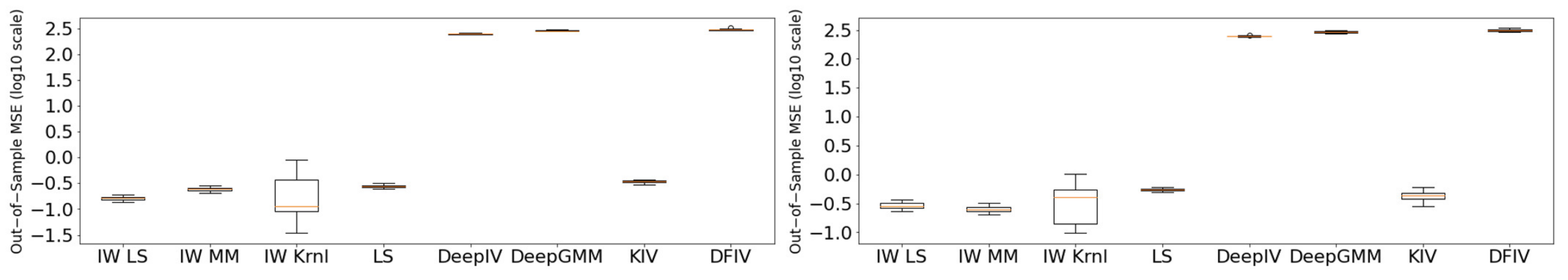}
\end{center}
\vspace{-0.4cm}
\caption{The log10 scaled MSEs of the original setting in \citet{Ai2003Efficient}. The left graph shows the result with $R=0.1$ and the right graph shows the result with $R=0.9$.}
\label{fig:ac}
\vspace{-0.4cm}
\begin{center}
\includegraphics[width=140mm]{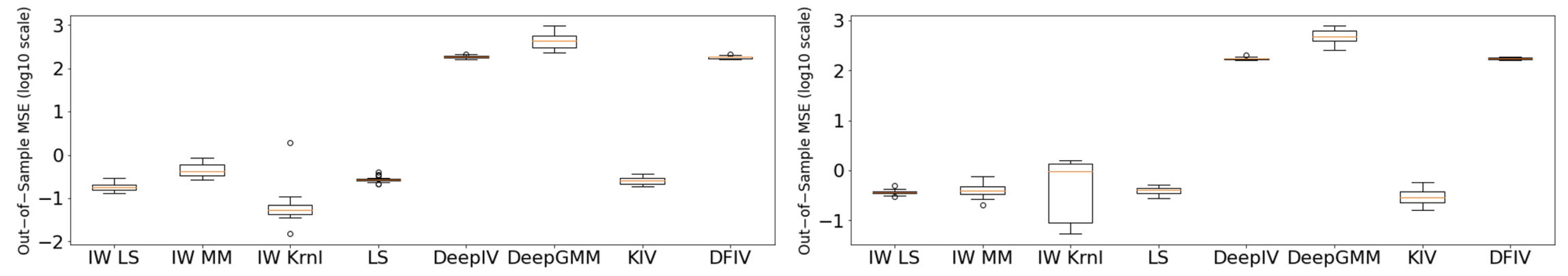}
\end{center}
\vspace{-0.4cm}
\caption{The log10 scaled MSEs of the setting in \citet{Ai2003Efficient} with additional IVs. The left graph shows the result with $R=0.1$ and the right graph shows the result with $R=0.9$.}
\label{fig:ac2}
\vspace{-0.4cm}
\end{figure}

\begin{figure}[t]
\begin{center}
\includegraphics[width=140mm]{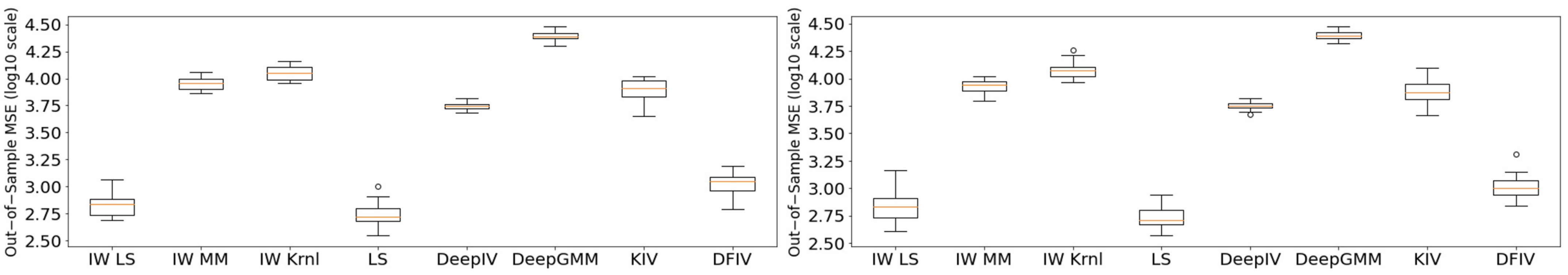}
\end{center}
\caption{The log10 scaled MSEs of the demand design experiments with $1,000$ samples. The left graph show the results with $\rho=0.25$ and the right graph shows the results with $\rho=0.75$.}
\label{fig:low_dim1000}
\vspace{-0.5cm}
\end{figure}

\subsection{Experiments with datasets of \texorpdfstring{\citet{Newey2003}}{TEXT} and \texorpdfstring{\citet{Ai2003Efficient}}{TEXT}}
\label{sec:exp_np_ac}
First, we investigate the performances of the proposed methods using econometric settings of \citet{Newey2003} and \citet{Ai2003Efficient}. 
These settings have simpler structures than recently proposed settings, such as in \citet{Hartford2017}. When using complex and high-dimensional datasets, there is an inherent difficulty in learning due to its complexity, separate from the problem setting. Therefore, we
use simple datasets to check whether the proposed method can actually learn 
$f^*$. 

\citet{Newey2003} generates $\{(Y_i, X_i, Z_i)\}^n_{i=1}$ as follows: first, they generate $\{(\varepsilon_i, U_i, Z_i)\}^n_{i=1}$ from the multivariate normal distribution $\mathcal{N}\left(\begin{pmatrix}0\\0\\0\end{pmatrix},\left(\begin{pmatrix}1 & 0.5 & 0 \\ 0.5 & 1 & 0 \\ 0 & 0 & 1\end{pmatrix}\right)\right)$; then, they generate $X_i = Z_i + U_i$ and $Y_i=f^*(X_i)+\varepsilon_i$, where $f^*(X_i) = \ln (|X_i-1|+1) \operatorname{sgn}(X_i-1)$. 

\citet{Ai2003Efficient} generates $\{(Y_i, X_i, Z_i)\}^n_{i=1}$ as follows: first, they generate $\{(\varepsilon_i, X_{1i}, V_{i}, U_i)\}^n_{i=1}$ as $\varepsilon_i \sim \mathcal{N}\left(0, X_{1i}^{2}+V_{i}^{2}\right)$, $X_{1i} \iid \text {Unif}[0,1]$, $V_{i} \iid \text {Unif}[0,1]$, and $ U_i \iid \mathcal{N}\left(0, X_{1i}^{2}+V_{i}^{2}\right)$; second, they generate $ X_{2i}=X_{1i}+V_{i}+R \times \varepsilon_i+U_i$ and $Y_{1}=X_{1i}\gamma_{0}+h_{0}\left(X_{2i}\right)+\varepsilon_i$, where $h_{0}\left(X_{2i}\right)=\exp \left(X_{2i}\right)/\left(1+\exp \left(X_{2i}\right)\right)$ and $R$ is chosen as $0.9$; then, obtain $X_i = (X_{1i}\ X_{2i})^\top$ and $Z_i = (X_{1i}\ V_i)$. Here, $\varepsilon_i$ and $U_i$ are unobservable, and $f^*(X_i) = X_{1i}\gamma_{0}+h_{0}\left(X_{2i}\right)$, where the function $h_0$ and $\gamma_0$ are unknown.

We run each algorithm $20$ times on each dataset with $n = 1,000$ and calculate the mean squared error (MSE). In the left graph of Figure~\ref{fig:np} and Figure~\ref{fig:ac}, we report the MSEs. In the dataset of \citet{Newey2003}, DFIV and our proposed IW-MM produce smaller MSE. The dataset only has a one-dimensional $X_i$, which mitigates the identification problem of neural networks. In contrast, in the dataset of \citet{Ai2003Efficient}, IW-LS leads to smaller MSE. In Appendix~\ref{sec:example}, using the experimental setting of \citet{Newey2003}, we also show additional experimental results on the empirical convergence of the MSE of the IWMM and a comparison between the IWMM and classical 2SLS.

Because the dimensions of the IVs are low in the original settings, we add more IVs to the original settings and investigate the performances of the algorithms. The detailed settings are described in Appendix~\ref{appdx:econ}. The results are shown in the right graph of Figures~\ref{fig:np} and Figure~\ref{fig:ac2}. 

\subsection{Simulation studies using demand design datasets}
To investigate the performance with more complicated and high-dimensional datasets, we use the demand design dataset for synthetic airline ticket sales proposed by \citet{Hartford2017}. In this dataset, we observe $(Y_i, P_i, T_i, S_i, C_i)$, where $Y_i$ is sales, $P_i$ is price, $T_i$ is time, $S_i$ is consumer's emotion, and $C_i$ is cost to use as IV. Here, $T_i$ and $S_i$ are covariates, that is, $X_i = (T_i, S_i)$, and the IV is $Z_i = (C_i, T_i, S_i)$. The sales $Y_i$ is generated as $Y_i= 100+(10+P_i) S_i h(T_i)-2 P_i +\varepsilon_i$, where $h(t)=2\left(\frac{(t-5)^{4}}{600}+\exp \left(-4(t-5)^{2}\right)+\frac{t}{10}-2\right).$ Since $P_i$ is an endogenous variable and correlated with the IV, we generate $P_i$ to contain $C_i$ as $P_i=25+(C_i+3) h(T_i)+V_i$. In the simulation, we assume $\varepsilon_i \sim \mathcal{N}\left(\rho V_i, 1-\rho^{2}\right)$, $V_i \sim \mathcal{N}(0,1)$, $C_i \sim \mathcal{N}(0,1)$, $T_i$ is sampled from the uniform distribution with the continuous support $[0, 10]$, and $S_i$ is sampled from the uniform distribution with the discrete support $\{1,\dots, 7\}$. Here note that $\mathbb{E}[\varepsilon_i| C_i, T_i, S_i] = \mathbb{E}[\varepsilon_i| Z_i] = 0$. The extent of correlation between $P_i$ and $V_i$ is controlled by $\rho\in\{0.25,0.75\}$ and the larger the $\rho$ is, the more severe the correlation problem becomes. 

Figure~\ref{fig:low_dim1000} shows the results with $1,000$ samples. The results under other settings are reporetd in Appendix~\ref{add_result}.
In this setting, IW-LS and LS outperform the other methods. We consider this is because $f^*(X_i)$ takes large values compared to the error term. Under this situation, a model trained to predict $Y_i$ may perform well because the influence of $\mathbb{E}[\varepsilon_i| X_i]\neq 0$ is limited. However, the purpose of using NPIV in the first place is because the influence is large, or else effects of $\mathbb{E}[\varepsilon_i| X_i] \neq 0$ can be ignored. Therefore, this dataset is often used in the existing studies, it may not be suitable.

\section{Conclusion}
This paper proposed a method for learning causal relationships from conditional moment restrictions. Our method is based on importance weighting using the conditional density ratio. The proposed method showed superior performance in experiments. We point out potential problems in recently proposed methods concerning identification and empirical performances. 

\subsubsection*{Acknowledgments}
The authors would like to thank Liyuan Xu for his constructive advice.

\bibliographystyle{iclr2022_conference}
\bibliography{IWMM.bbl}

\clearpage

\appendix

\section{Literature Review of IV Methods}
\label{app:related}
In the field of economics, the causal effect is often referred to as the structural effect, because the causal relationship arises from some economic structure. The idea of endogeneity has a close relationship with the estimation of the structural effect. For example, in estimation problems with supply/demand models, because the supply and demand are determined simultaneously, there is a simultaneous equation bias, which causes the correlation between the explanatory variable and error term. This correlation is called enodogeneity. Under the enodogeneity, the OLS does not yield a consistent estimator of the structural model. To obtain a consistent estimator to capture the causal effect, the method of IVs have been used \citet{wright1928tariff,Reiersl1945ConfluenceAB}. 


\subsection{IV Methods for Linear Structural Models}
First, we consider a linear structural model. For two random variables $Y\in\mathbb{R}$, $X\in\mathbb{R}^d$, endogeneity refers to a situation, such that 
\begin{align}
    Y = X^\top\beta + \varepsilon_i,
\end{align}
where $\varepsilon_i\in\mathbb{R}$ is the error term, $\beta$ is the parameter of interest, and
\begin{align}
    \mathbb{E}[X\varepsilon] \neq 0.
\end{align}
This regression model is called a structural equation and $\beta$ is the structural parameter. Under endogeneity, the least squares method does not yield a consistent estimator.

Let us consider an IV $Z \in \mathbb{R}^k$ such that
\begin{align}
    \mathbb{E}[Z\varepsilon] = 0.
\end{align}
For the estimation of the structural parameter under endogeneity, the IV $Z$ plays an important.

\subsection{IV Methods for Nonparametric Structural Models}
Nonparametric estimation of structural models under conditional moment restrictions is an important topic in statistics and econometrics, because it allows us to model economic relationships flexibly. Let $h^*(\cdot) = (h^*_1(\cdot),\dots, h^*_{q}(\cdot))$ be unknown nonparametric structural functions, where each function $h^*_\ell(\cdot)$ may depend on $X_i$ and $Y_i$. In general, we can consider estimating the unknown nonparametric structural functions $h^*(\cdot)$ defined as
\begin{align}
\label{eq:cond_model}
    \mathbb{E}\left[\rho(Y, X;\theta_0, h^*(\cdot))|Z\right] = 0,
\end{align}
where $\rho(\cdot;\theta_0, h^*(\cdot)))$ is a vector of residuals with functional forms. Note that the conditional distribution of $Y$ given $X$ is not specified. 

This model is a generalization of semi/nonparametric estimation with conditional moment restrictions considered in \citet{Chamberlain1992} and \citet{Newey2003}. This model includes many structural models as special cases. A typical example is the target structural model of the NPIV problem defined in Section~\ref{sec:setup} \citep{Newey2003,Darolles2011}. The NPIV problem also includes the estimation of a shape-invariant system of Engel curves with an endogenous total expenditure \citep{Blundell2007}. Another important special case of \eqref{eq:cond_model} is the quantile instrumental variables (IV) treatment effect model of \citet{Chernozhukov2005}, and the nonparametric quantile instrumental variables regression of \citet{CHERNOZHUKOV2007} and \citet{Horowitz2007}. There are other applications, such as asset pricing models \citep{ChenLudvigson2009} and reinforcement learning.

There are three approaches to the estimation of the structural models under the conditional moment restrictions: the sieve minimum distance (SMD) method, the function space Tikhonov regularized minimum distance (TMD) method, and the minimax optimization method.

\paragraph{SMD method.}
The SMD procedure minimizes a consistent estimate of the minimum distance criterion over some finite-dimensional compact sieve space. \citet{Newey2003} extends the 2SLS methods for linear structural models to the NPIV problem. The authors approximate the nonparametric structural function with a sieve estimator, which is a linear combination of growing feature basis functions. \citet{Ai2003Efficient} and \citet{Chen2012} propose the penalized minimum distance sieve estimator to a more general problem defined as \eqref{eq:cond_model}.

When applying the minimum distance method to the NPIV problem, we consider solving the following problem:
\[\min_{h \in \mathcal{H}} \mathbb{E}_Z[\mathbb{E}_{Y,X}[Y-h(X)| Z]^2] + \lambda R(h),\]
where $R(h)$ is a regularizer. \citet{Chen2012} approximates the function class $\mathcal{H}_n$ by linear functions in growing feature space with the sample size $n$. Subsequently, the authors also estimate the function $m(z) = \mathbb{E}[y-h(x)\mid z]$ based on another growing sieve. 

\paragraph{TMD method.}
In the TMD method, we minimize a consistent penalized estimate of the minimum distance criterion over the whole infinite dimensional function space $\mathcal{H}$, in which the penalty function is of the classical Tikhonov type \citep{HallHorowitz2005,HorowitzLee2007,CarrascoFlorens2007,Darolles2011,GagliardiniScaillet2012}. In the NPIV problem, this approach is equivalent to solving
\begin{align}
    \min_{h\in\mathcal{H}}\mathbb{E}_Z[\mathbb{E}_{Y, X}[Y - h(X)| Z]] + \lambda \|h\|^2_2.
\end{align}
Note that this is equivalent to the minimum distance estimation, such as \citet{Chen2012}.

\paragraph{Minimax optimization method.}
Recently, in the machine learning literature, the minimax approach has garnered attention \citep{Bennett2019deepgmm,Dikkala2020}. In this approach, we solve the following minimax optimization problem:
\begin{align}
    \min_{f\in\mathcal{F}}\max_{h\in\mathcal{H}} \mathbb{E}[(Y - f(X))h(Z)] - \|h\|^2_2.
\end{align}

Such a representation of the conditional moment restrictions has long been used in statistics and econometrics. One of the related studies is the specification testing by \citet{Bierens1982, Bierens1990}. The authors assume that a statistical model that satisfies the conditional moment restrictions is correct and propose a specification testing method to investigate whether the model is correct. These studies transform the conditional moment restrictions to unconditional moment restrictions by using the product of any function of a random variable used in the conditioning and the moment function. This idea of transformation to unconditional moment restrictions is also used in other related work, such as \citet{Santos2012}. 

In particular, \citet{Dikkala2020} shows the minimax optimality of a method based on such minimax optimization. This minimax optimization technique is known to be applicable, not only to the NPIV problem, but also to a wider range of problems. It is also closely related to debiased machine learning literature \citep{Chernozhukov2020Adversarial}.

\section{Proofs in Section~\ref{sec:est_error_analysis}}
\subsection{Mathematical tools}

\paragraph{Concentration inequalities for empirical processes}
Given a probability distribution $P$ and a random variable $g(X)$, we denote the expectation of $g(X)$ under $P$ by $\int g \mathrm{d}P$. Given samples $X_1, X_2, \dots, X_n$ from $P$, the empirical distribution is denoted by $P_n$, and the empirical mean is denoted by $\int g \mathrm{d}P_n$; that is, $\int g \mathrm{d}P_n = \frac{1}{n}\sum^n_{i=1}g(X_i)$. We also denote $\int g \mathrm{d}P - \frac{1}{n}\sum^n_{i=1}g(X_i)$ by $\int g \mathrm{d}\big(P-P_n\big)$.  

We also define
\begin{align}
    \rho^2_K(g) = 2K^2 \int \left(\exp\big(|g|/K - 1 - |g|/K \big)\right)\mathrm{d}P,\qquad K > 0.
\end{align}
Let $\mathcal{G}$ satisfy
\begin{align}
\sup_{g\in\mathcal{G}} \rho_K(g) \leq R
\end{align}
Then, we summarize some tools used in the theoretical analysis.

\begin{proposition}[Theorem~5.11 in \citet{vandeGeerEmpirical2000}]
$C$ is a (sufficiently large) universal constant, whereas $a$, $C_0$, and $C_1$ may be chosen, but do have to satisfy the following conditions:
\begin{itemize}
 \item $a\leq C_1\sqrt{n}R^2/K$;
 \item $a\leq 8\sqrt{n}R$;
 \item $a\geq C_0\left(\max\left\{\int^R_{a/(2^6\sqrt{n})}\sqrt{\log \mathcal{N}(u, \mathcal{G} ,\|\cdot\|)}\mathrm{d}u, R\right\}\right)$
 \item $C^2_0 \geq C^2(C_1 + 1)$.
\end{itemize}
Then, 
\begin{align}
\mathbb{P}\left(\sup_{g\in\mathcal{G}}\left|\sqrt{n}\int g \mathrm{d}\big(P_n - P\big)\right|\geq a\right) \leq C \exp\left(-\frac{a^2}{C^2(C_1 + 1)R^2}\right)
\end{align}
\end{proposition}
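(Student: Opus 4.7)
The plan is to prove this by a generic chaining argument combined with a Bernstein-type tail bound derived from the Orlicz-norm condition $\rho_K(g) \leq R$. First, I would verify that the condition $\rho_K(g) \leq R$ yields, for a \emph{single} $g$, a Bernstein inequality of the form
\[
\mathbb{P}\!\left(\left|\sqrt{n}\int g\,\mathrm{d}(P_n - P)\right| \geq t\right) \leq 2\exp\!\left(-\frac{c\,t^2}{R^2 + tK/\sqrt{n}}\right),
\]
by expanding $\exp(|g|/K - 1 - |g|/K)$ to extract moment bounds $\mathbb{E}|g|^m \leq m! K^{m-2}R^2/2$ for $m\geq 2$ and then invoking the classical Bernstein estimate for i.i.d.\ sums. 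This tail is Gaussian in $t$ as long as $t \lesssim \sqrt{n}R^2/K$ and sub-exponential beyond it, which directly motivates the first constraint $a \leq C_1\sqrt{n}R^2/K$: it confines the deviation to the Gaussian regime of Bernstein, where the exponent scales as $-t^2/R^2$.

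Second, to lift the single-function bound to the supremum over $\mathcal{G}$, I would build a chain of $\|\cdot\|$-nets $\mathcal{G}_k$ at scales $R\,2^{-k}$ with $|\mathcal{G}_k| = \mathcal{N}(R\,2^{-k}, \mathcal{G}, \|\cdot\|)$, and write each $g \in \mathcal{G}$ telescopically as $g_0 + \sum_{k \geq 1}(g_k - g_{k-1})$, where $g_k \in \mathcal{G}_k$ is a nearest-point projection. Applying the Bernstein bound to each link $g_k - g_{k-1}$ (whose Orlicz-norm is $\lesssim R\,2^{-k+1}$) and taking a union bound over the at most $|\mathcal{G}_k|\cdot|\mathcal{G}_{k-1}|$ such pairs, I would allocate a deviation budget $a_k$ to level $k$ so that $\sum_k a_k \leq a$. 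Optimizing the $a_k$ against the cost $\log(|\mathcal{G}_k|\cdot|\mathcal{G}_{k-1}|)$ forces the Dudley-style entropy integral, yielding exactly the third condition $a \geq C_0 \max\{\int_{a/(2^6\sqrt{n})}^R \sqrt{\log \mathcal{N}(u, \mathcal{G}, \|\cdot\|)}\,\mathrm{d}u,\ R\}$. Truncating the chain at the scale $a/(2^6\sqrt{n})$ (where the residual becomes deterministically negligible relative to $a$) pins down the lower limit of the integral.

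The second condition $a \leq 8\sqrt{n}R$ enters as a non-degeneracy constraint: it ensures the starting level of the chain is meaningful, and in particular that the Bernstein exponent at the top of the chain is still of Gaussian type for a deviation of order $a$. The fourth condition $C_0^2 \geq C^2(C_1+1)$ is a pure constant-matching requirement, chosen so that the sum of the geometric series of level-wise Gaussian exponents, combined with the union-bound logarithmic overhead from $\sum_k \log |\mathcal{G}_k|$, is dominated by the claimed exponent $a^2/(C^2(C_1+1)R^2)$; the universal constant $C$ in front of the final exponential absorbs all remaining numerical factors from the Bernstein derivation and the geometric chain sums.

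The main obstacle, in my view, is the bookkeeping inside the chaining step: one has to allocate the deviation budget $a_k$ across infinitely many levels in such a way that (i) each per-level Bernstein application is still in its Gaussian regime (using the first condition level-by-level, since the Orlicz radii shrink as $R\,2^{-k}$ while $K$ stays fixed), (ii) the union-bound entropy penalties at each level are dominated by $a_k^2/R_k^2$, and (iii) the geometric sum telescopes cleanly to the single exponent $-a^2/(C^2(C_1+1)R^2)$ rather than leaking polynomial-in-$k$ factors. The four conditions on $a$, $C_0$, $C_1$ in the statement are precisely the calibration needed to make this telescoping work, and verifying that none can be weakened without breaking the argument is the delicate part of the proof.
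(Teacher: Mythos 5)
This proposition is not proved in the paper at all: it is quoted as Theorem~5.11 of \citet{vandeGeerEmpirical2000} and used as a black-box tool in Appendix~B, so there is no in-paper argument to compare against. Your sketch is, in outline, the same route van de Geer herself takes — a Bernstein inequality extracted from the Orlicz-type condition $\rho_K(g)\leq R$ (your moment bound $\mathbb{E}|g|^m\leq \tfrac{m!}{2}K^{m-2}R^2$ is exactly her Lemma~5.7/5.8) lifted to a supremum by dyadic chaining, with the entropy integral and the four side conditions emerging from the budget allocation. So the strategy is correct and is the canonical one.

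There is, however, one concrete soft spot in the chaining bookkeeping. You assert that the first condition $a\leq C_1\sqrt{n}R^2/K$ keeps \emph{every} level of the chain in the Gaussian regime ``since the Orlicz radii shrink as $R\,2^{-k}$ while $K$ stays fixed.'' This is backwards: with $K$ fixed and the link variance proxy shrinking to $R_k^2=R^2 4^{-k}$, the per-level Gaussian-regime requirement $a_k\lesssim \sqrt{n}R_k^2/K$ becomes \emph{harder} to meet at fine scales, and a budget $a_k$ decaying only like $\sqrt{\log N_k}\,R_k$ will eventually violate it. The first condition only controls the top of the chain. What rescues the fine levels in the actual proof is that the linking functions $g_k-g_{k-1}$ are themselves bounded (in sup norm, or via bracket envelopes) by a multiple of $R_{k-1}$, so the ``$K$'' in each link's Bernstein bound can be replaced by the shrinking radius $R_k$ rather than the global $K$; equivalently, van de Geer uses generalized entropy with bracketing and an adaptive truncation of the links. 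This same bracketing/sup-norm control is also what makes your ``deterministically negligible'' residual at scale $a/(2^6\sqrt{n})$ legitimate — with a bare $L^2(P)$ covering it would not be. Your plan goes through if the covering is taken in $\|\cdot\|_\infty$ (as the paper does elsewhere) or upgraded to bracketing, but as written the level-by-level justification via the first condition would fail.
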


\begin{proposition}[Lemma~5.14 in \citet{vandeGeerEmpirical2000}]\label{appdx:lem:van-de-geer}
  Let \(\mathcal{G} \subset \Ltwo(P)\) be a function class and the map \(I(g)\)
  be a complexity measure of \(g \in \mathcal{G}\), where \(I\) is a
  non-negative function on \(\mathcal{R}\) and \(I(g_0) < \infty\) for a fixed
  \(g_0 \in \mathcal{G}\). We now define \(\mathcal{G}_M = \{g \in \mathcal{G} :
  I(g) \leq M\}\) satisfying \(\mathcal{G} = \bigcup_{M \geq 1} \mathcal{G}_M\).
  Suppose that there exist \(c_0 > 0\) and \(0 < \gamma < 2\) such that
  \[\sup_{g \in \mathcal{G}_M} \|g - g_0\| \leq c_0 M, \ \sup_{\stackrel{r \in
        \mathcal{G}_M}{\|g - g_0\|_{\Ltwo(P)} \leq \delta}} \|g - g_0\|_\infty
    \leq c_0 M, \quad \text{for all } \delta > 0,\]
  and that \(\log \mathcal{N}(\delta, \mathcal{G}_M, P) = \Order{M/\delta}^\gamma\).
  Then, we have
  \[\sup_{g \in \mathcal{G}} \frac{\left| \int (g - g_0)\mathrm{d}(P - P_n)
      \right|}{D(g)} = \Orderp{1}, \ (n \to \infty),\]
  where \(D(g)\) is defined by
  \[D(g) = \vmax{\frac{\|g - g_0\|_{\Ltwo(P)}^{1 - \gamma/2}I(g)^{\gamma/2}}{\sqrt{n}}}{\frac{I(g)}{n^{2/(2+\gamma)}}}.\]
\end{proposition}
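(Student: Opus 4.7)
The plan is to follow the decomposition \eqref{eq:loss} already set up in the excerpt and bound each of the three pieces $\Delta_r$, $\Delta_f$, and $\widetilde{\mathcal L}_n(\hat f,\hat r)$ separately, absorbing the $\mathcal O_p(1/n)$ U-statistic remainder into the final rate. The target complexity term $(\mathcal C_B(\mathcal F)+\mathcal C_B(\mathcal R))/\sqrt n$ should come entirely from $\Delta_f$; the $n^{-1/(2+\gamma)}$ term should come entirely from $\Delta_r$; and the $\sqrt{\log(1/\delta)/n}$ term should come from the high-probability tail of the uniform concentration used for $\Delta_f$ and from tail control of the stochastic pieces in $\widetilde{\mathcal L}_n(\hat f,\hat r)$.

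First I would handle $\widetilde{\mathcal L}_n(\hat f,\hat r)$ via the optimality of $\hat f$. Using $Y_i=f^*(X_i)+\varepsilon_i$, the empirical criterion $\mathcal R_n(f,\hat r)$ expands as $\widetilde{\mathcal L}_n(f,\hat r)$ up to $\mathcal O_p(1/n)$ plus cross terms of the form $n^{-2}\sum_{j}\sum_i(f^*(X_i)-f(X_i))\hat r(Y_i,X_i|Z_j)\cdot\sum_{i'}\varepsilon_{i'}\hat r(Y_{i'},X_{i'}|Z_j)$ and a pure-noise term independent of $f$. Since $f^*\in\mathcal F$ gives $\widetilde{\mathcal L}_n(f^*,\hat r)=0$, the inequality $\mathcal R_n(\hat f,\hat r)\le \mathcal R_n(f^*,\hat r)$ yields a bound on $\widetilde{\mathcal L}_n(\hat f,\hat r)$ in terms of these noise cross terms evaluated at $\hat f$. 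Using the sub-Gaussianity of $\varepsilon_i$ (Assumption~\ref{asm:sub_gaussian}) together with the boundedness of $\hat r$ and uniform entropy of $\mathcal F\times\mathcal R$, a standard Dudley/Talagrand argument controls these uniformly by $c(\mathcal C_B(\mathcal F)+\mathcal C_B(\mathcal R))/\sqrt n+\sqrt{\log(1/\delta)/n}$ with probability $\ge 1-\delta/3$.

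Next I would bound $\Delta_f=\widetilde{\mathcal L}(\hat f,\hat r)-\widetilde{\mathcal L}_n(\hat f,\hat r)$ by uniform concentration of the degenerate-form U-statistic indexed by $(f,r)\in\mathcal F\times\mathcal R$. The kernel $(f^*(X_i)-f(X_i))r(W_i|Z_j)(f^*(X_{i'})-f(X_{i'}))r(W_{i'}|Z_j)$ is bounded (Assumptions~\ref{asm:true_dens_ratio}--\ref{asm:true_structural}), so McDiarmid plus a chaining / Dudley entropy integral over the product class $\mathcal F\times\mathcal R$ gives $\sup_{f,r}|\widetilde{\mathcal L}(f,r)-\widetilde{\mathcal L}_n(f,r)|\le c(\mathcal C_B(\mathcal F)+\mathcal C_B(\mathcal R))/\sqrt n+c\sqrt{\log(1/\delta)/n}$ on an event of probability $\ge 1-\delta/3$; this uses the subadditivity of entropy integrals under products to separate $\mathcal C_B(\mathcal F)$ and $\mathcal C_B(\mathcal R)$. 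Finally, $\Delta_r=\widetilde{\mathcal L}(\hat f,r^*)-\widetilde{\mathcal L}(\hat f,\hat r)$ is controlled by Lemma~\ref{lemm:delta_dens_ratio}, which relates $\Delta_r$ to $\|\hat r-r^*\|_{L^2(O)}$ (times a factor uniformly bounded over $\mathcal F$); Lemma~\ref{lemm:dens_conv_fast} then gives $\|\hat r-r^*\|_{L^2(O)}=\mathcal O_p(n^{-1/(2+\gamma)})$, and hence $\Delta_r=\mathcal O_p(n^{-1/(2+\gamma)})$. Taking a union bound over the three events and collecting rates yields the stated inequality.

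The main obstacle I anticipate is the $\Delta_f$ step: the object is a second-order (degenerate) U-process indexed simultaneously by $f$ and $r$, and obtaining a bound that splits cleanly into $\mathcal C_B(\mathcal F)+\mathcal C_B(\mathcal R)$ (rather than an entropy of the product kernel class, which would a priori couple the two complexities multiplicatively) requires a symmetrization/decoupling step followed by conditional chaining — first fixing $r$ and chaining in $f$ against a Rademacher process, then chaining in $r$. A secondary subtlety is ensuring that the noise cross terms in the optimality step for $\widetilde{\mathcal L}_n(\hat f,\hat r)$ contribute only to the $1/\sqrt n$ rate and not a slower rate; this relies on the sub-Gaussian tails of $\varepsilon_i$ and on $\hat r$ lying in the uniformly bounded VC class $\mathcal R$.
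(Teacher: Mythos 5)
Your proposal does not prove the stated result. The statement you were asked to prove is the ratio-type maximal inequality quoted from \citet{vandeGeerEmpirical2000} (Lemma~5.14 there): for a single function class $\mathcal{G}\subset L^2(P)$ equipped with a complexity functional $I(\cdot)$ and an entropy bound $\log\mathcal{N}(\delta,\mathcal{G}_M,P)=\mathcal{O}(M/\delta)^\gamma$, the normalized empirical process $\sup_{g\in\mathcal{G}}\bigl|\int(g-g_0)\,\mathrm{d}(P-P_n)\bigr|/D(g)$ is $\mathcal{O}_p(1)$, where $D(g)$ interpolates between the localized rate $\|g-g_0\|_{L^2(P)}^{1-\gamma/2}I(g)^{\gamma/2}/\sqrt{n}$ and the global rate $I(g)/n^{2/(2+\gamma)}$. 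What you wrote instead is a proof sketch of Theorem~\ref{thm:general_bound}: you work with the decomposition \eqref{eq:loss} into $\Delta_r$, $\Delta_f$, and $\widetilde{\mathcal{L}}_n(\hat f,\hat r)$, invoke Lemmas~\ref{lemm:dens_conv_fast} and \ref{lemm:delta_dens_ratio}, and assemble the final bound $c(\mathcal{C}_B(\mathcal{F})+\mathcal{C}_B(\mathcal{R}))/\sqrt{n}+\mathcal{O}_p(\cdots)$. None of the objects in the proposition --- the shells $\mathcal{G}_M$, the normalizer $D(g)$, the exponent $\gamma/2$ interpolation --- appear in your argument. Indeed, the proposition is an \emph{input} to the paper's analysis (it is used, together with Proposition~\ref{appdx:lem:suzuki}, inside the proof of Lemma~\ref{lemm:dens_conv_fast}), so an argument that cites Lemma~\ref{lemm:dens_conv_fast} as a black box cannot serve as its proof without circularity.

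A proof of the actual statement would proceed by peeling: partition $\mathcal{G}$ into slices according to the dyadic level of $I(g)$ and, within each slice $\mathcal{G}_M$, according to the dyadic level of $\|g-g_0\|_{L^2(P)}$; on each piece, apply the Bernstein-type concentration inequality for suprema of empirical processes (Theorem~5.11 in \citet{vandeGeerEmpirical2000}, reproduced as the preceding proposition in this appendix) with the threshold $a$ chosen proportional to $\sqrt{n}$ times the corresponding value of $D(g)$, verify the entropy-integral condition $a\gtrsim\int\sqrt{\log\mathcal{N}(u,\mathcal{G}_M,\|\cdot\|)}\,\mathrm{d}u$ using $\log\mathcal{N}(\delta,\mathcal{G}_M,P)=\mathcal{O}(M/\delta)^\gamma$ with $\gamma<2$ (which makes the integral converge and produces exactly the exponent $1-\gamma/2$ on the $L^2$ radius), and then sum the resulting geometric tail bounds over all shells. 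That peeling-plus-concentration structure is entirely absent from your proposal, so there is a genuine gap: the statement remains unproved.
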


\begin{lemma}[Lemma~9 of \citet{Kato2020nnbr}]\label{appdx:lem:elled-sparse-network-complexity}
  Let \(\ell: \rClassRangeTwo \to \Re\) be a \(\ellLip\)-Lipschitz continuous function.
  Let \(\bracketEntropy{\delta}{\mathcal{F}}{\|\cdot\|_{\Ltwo(P)}}\) denote the bracketing entropy of \(\mathcal{F}\) with respect to a distribution \(P\).
  Then, for any distribution \(P\), any \(\gamma > 0\), any \(M \geq 1\), and any \(\delta > 0\), we have
  \begin{align*}
    \bracketEntropy{\delta}{\ell \circ \rClassM}{\|\cdot\|_{\Ltwo(P)}} &\leq \frac{(s+1)(2\ellLip)^{\gamma}}{\gamma} \left(\frac{M}{\delta} \right)^{\gamma}.
  \end{align*}
  Moreover, there exists \(\rClassMSupNormBoundConst > 0\) such that for any \(M \geq 1\) and any distribution \(P\),
  \begin{align*}
    \sup_{\elled\r \in \elled\rClassM} \|\elled\r - \elled\rstar\|_{\Ltwo(P)} &\leq \rClassMSupNormBoundConst\ellLip M, \\
    \sup_{\stackrel{\elled\r \in \elled\rClassM}{\|\elled\r - \elled\rstar\|_{\Ltwo(P)} \leq \delta}} \|\elled\r - \elled\rstar\|_\infty &\leq \rClassMSupNormBoundConst\ellLip M, \quad \text{for all } \delta > 0.
  \end{align*}
\end{lemma}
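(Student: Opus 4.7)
All three inequalities follow by transferring known bracketing-entropy and sup-norm controls on the base sparse-network class $\rClassM$ (as constructed in Kato2020nnbr) to the composed class $\elled\rClassM$, with the only real content being the propagation of the Lipschitz constant $\ellLip$ through $\ell$. I would invoke the base bounds for $\rClassM$ as a black box and then handle each of the three claims in turn.

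\textbf{Bracketing entropy.} The first step is a standard composition argument: any $\Ltwo(P)$-bracket of size $\delta/\ellLip$ for $\rClassM$ induces an $\Ltwo(P)$-bracket of size $\delta$ for $\elled\rClassM$. Concretely, if $r_L \leq r \leq r_U$ pointwise with $\|r_U - r_L\|_{\Ltwo(P)} \leq \delta/\ellLip$, set
\[
\tilde\ell_L(w) := \min\{\ell(r_L(w)), \ell(r_U(w))\} - \ellLip(r_U(w)-r_L(w)),
\]
\[
\tilde\ell_U(w) := \max\{\ell(r_L(w)), \ell(r_U(w))\} + \ellLip(r_U(w)-r_L(w)).
\]
Lipschitzness of $\ell$ on the interval $[r_L(w),r_U(w)]$ ensures $\tilde\ell_L \leq \ell\circ r \leq \tilde\ell_U$ and $|\tilde\ell_U-\tilde\ell_L| \leq 2\ellLip(r_U-r_L)$ pointwise, so the bracket has $\Ltwo(P)$-size at most $2\ellLip\|r_U-r_L\|_{\Ltwo(P)}$. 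After absorbing the factor $2\ellLip$ into the scale, this yields
\[
\bracketEntropy{\delta}{\elled\rClassM}{\|\cdot\|_{\Ltwo(P)}} \leq \bracketEntropy{\delta/(2\ellLip)}{\rClassM}{\|\cdot\|_{\Ltwo(P)}}.
\]
Substituting the base bound $\bracketEntropy{\delta'}{\rClassM}{\|\cdot\|_{\Ltwo(P)}} \leq \frac{s+1}{\gamma}(M/\delta')^\gamma$ for the sparse ReLU network class with $\delta' = \delta/(2\ellLip)$ produces exactly $\frac{(s+1)(2\ellLip)^\gamma}{\gamma}(M/\delta)^\gamma$.

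\textbf{Sup-norm bounds.} Both remaining inequalities are immediate from the pointwise Lipschitz bound $|\ell(r(w)) - \ell(\rstar(w))| \leq \ellLip |r(w)-\rstar(w)|$, which gives
\[
\|\elled r - \elled \rstar\|_{\Ltwo(P)} \leq \ellLip \|r - \rstar\|_{\Ltwo(P)}, \qquad \|\elled r - \elled \rstar\|_\infty \leq \ellLip \|r - \rstar\|_\infty,
\]
combined with the two known bounds on the base class: the uniform control $\sup_{r \in \rClassM}\|r - \rstar\|_{\Ltwo(P)} \leq c' M$ and the localized sup-norm control $\sup_{\{r \in \rClassM\,:\,\|r - \rstar\|_{\Ltwo(P)} \leq \delta\}} \|r - \rstar\|_\infty \leq c' M$ for all $\delta > 0$. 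Both are standard for the truncated sparse ReLU architecture of Kato2020nnbr, where outputs and therefore approximation errors are controlled by a constant multiple of the complexity parameter $M$. Setting $\rClassMSupNormBoundConst := c'$ delivers both claims.

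\textbf{Main obstacle.} The genuine technical work sits inside the black-box bounds for $\rClassM$: establishing the $(M/\delta)^\gamma$ bracketing entropy and the $O(M)$ (localized) sup-norm control for the specific sparse ReLU architecture with complexity $I(r)\leq M$ requires a careful parameter-space discretization that tracks depth, width, sparsity level $s$, and the parameter-to-function Lipschitz constants. Once those base estimates are granted from the cited prior work, the lemma itself reduces to the Lipschitz bookkeeping sketched above, and the factor $(2\ellLip)^\gamma$ in the bracketing bound is precisely the cost of the composition step.
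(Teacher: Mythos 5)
The paper does not prove this lemma at all: it is imported verbatim as Lemma~9 of \citet{Kato2020nnbr} and sits in the ``mathematical tools'' paragraph alongside other quoted results, so there is no in-paper proof to compare against. Judged on its own, your reconstruction is the right argument --- a Lipschitz composition step for the bracketing entropy plus trivial Lipschitz propagation for the two sup bounds --- and the pieces you black-box are in fact already available in this paper's appendix: the base entropy bound $\tfrac{s+1}{\gamma}(M/\delta)^\gamma$ for $\rClassM$ follows from the quoted Lemma~5 of \citet{Schmidt-HieberNonparametric2020} together with $\log x \leq x^\gamma/\gamma$ and the definition of $I_1$, and the $O(M)$ sup-norm controls follow directly from $\|r\|_\infty \leq I(r) \leq M$ in Definition~\ref{appdx:sparse-network-function-class}. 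One quantitative slip: with your bracket construction, $\tilde\ell_U - \tilde\ell_L = |\ell(r_U)-\ell(r_L)| + 2\ellLip(r_U-r_L) \leq 3\ellLip(r_U-r_L)$, not $2\ellLip(r_U-r_L)$, so as written you would land on $(3\ellLip)^\gamma$ rather than the stated $(2\ellLip)^\gamma$. The cleaner bracket $\tilde\ell_L(w) = \inf_{t\in[r_L(w),r_U(w)]}\ell(t)$, $\tilde\ell_U(w) = \sup_{t\in[r_L(w),r_U(w)]}\ell(t)$ has width at most $\ellLip(r_U-r_L)$ and avoids the issue entirely; the factor $2$ in the target constant is more naturally explained by routing through sup-norm covering numbers (a $\delta/(2\ellLip)$ sup-norm cover of $\rClassM$ yields $\delta$-brackets of $\ell\circ\rClassM$ via $\ell\circ f_j \pm \delta/2$), which is presumably what the cited source does. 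None of this affects the order of the bound, only the constant.
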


\paragraph{Concentration inequalities for U-statistics.} 
Given a probability distributions $P$ and $Q$ and a random variable $g(X, Z)$, we denote the expectation of $g(X, Z)$ under $P$ and $Q$ by $\int\int g \mathrm{d}P\mathrm{d}Q$; that is, $\mathbb{E}_{X}\Big[\mathbb{E}_{Z}\big[g(X, Z)\big]\Big] = \int\int g \mathrm{d}P\mathrm{d}Q$. Given samples $X_1, X_2, \dots, X_n$ from $P$ and $Z_1, Z_2, \dots, Z_n$ from $Q$, the empirical distributions are denoted by $P_n$ and $Q_n$, and the empirical mean is denoted by $\int \int g \mathrm{d}P_n\mathrm{d}Q_n$; that is, $\int \int g \mathrm{d}P_n\mathrm{d}Q_n = \frac{1}{n}\sum^n_{j=1}\frac{1}{n}\sum^n_{i=1}g(X_i, Z_j)$. We also denote $\int\int g \mathrm{d}P\mathrm{d}Q - \frac{1}{n}\sum^n_{j=1}\frac{1}{n}\sum^n_{i=1}g(X_i, Z_j)$ by $\int \int g \mathrm{d}\big(P-P_n\big)\mathrm{d}\big(Q-Q_n\big)$.  

We also define
\begin{align}
    \rho^2_K(g) = 2K^2 \int \left(\exp\big(|g|/K - 1 - |g|/K \big)\right)\mathrm{d}P,\qquad K > 0.
\end{align}
Let $\mathcal{G}$ satisfy
\begin{align}
\sup_{g\in\mathcal{G}} \rho_K(g) \leq R
\end{align}
Then, we summarize some tools used in the theoretical analysis.

\begin{proposition}[Concentration inequality on empirical processes with U-statistics: Theorem 5 in  \cite{arcones1995bernstein}, adjusted to our setting] \label{prop:arcone_talagrand}
    Suppose that $S_1,...,S_n$ are $\mathcal{S}$-valued i.i.d. random variables and consider a class of symmetric functions $\mathcal{H} \subset L^2(\mathcal{S} \times \mathcal{S})$.
    Also, suppose that any function in $\mathcal{H}$ is uniformly bounded by $b > 0$ and define $\sigma^2 = \sup_{h \in \mathcal{H} } \mathrm{Var}_S(\mathbb{E}_{S'}[h(S,S')])$ where $S'$ is an i.i.d. copy of $S$.
    If the $\mathcal{N}(\varepsilon, \mathcal{H}, \|\cdot\|_2) \leq (A/\varepsilon)^\nu$ for some $A, \nu > 0$, for any $t \geq c$ with some $c>0$, we obtain
    \begin{align*}
        &\mathbb{P}\left(n^{1/2} \sup_{h \in \mathcal{H}}  \left\{\frac{1}{n(n-1)} \sum_{i,i'=1, i\neq i'}^n h(S_i,S_{i'})  - \mathbb{E}_{S,S'}[h(S,S)] \right\}\geq t\right)\\
        &\leq 8 \exp(-t^2/2^{17} (\sigma^2 + tbn^{-1/2})) + 8 A^{2\nu} (\sigma^2 + 2tbn^{-1/2})^{-\nu} \exp(-n(\sigma^2 + tbn^{-1/2}/2)/2b^2)\\
        & \quad + 2 \exp(-t^2/(2^{11} bc(\sigma^2 + tbn^{-1/2}))).
    \end{align*}
\end{proposition}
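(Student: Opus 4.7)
The statement here is a specialization of Theorem 5 of \citet{arcones1995bernstein} to our symmetric, uniformly bounded setting, so the plan is to verify that our hypotheses are precisely those of their theorem and to translate the conclusion into the notation used in the paper. No fundamentally new argument is needed; what must be produced is a careful matching of constants and a reminder of the decomposition structure that underlies their bound.

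First, I would apply Hoeffding's orthogonal decomposition. For each symmetric $h \in \mathcal{H}$ with $\mu_h := \mathbb{E}_{S,S'}[h(S,S')]$, set $h_1(s) := \mathbb{E}_{S'}[h(s,S')] - \mu_h$ and $h_2(s,s') := h(s,s') - \mu_h - h_1(s) - h_1(s')$, so that
\begin{align*}
\frac{1}{n(n-1)}\sum_{i\neq i'} h(S_i,S_{i'}) - \mu_h = \frac{2}{n}\sum_{i=1}^n h_1(S_i) + \frac{1}{n(n-1)}\sum_{i\neq i'} h_2(S_i,S_{i'}),
\end{align*}
where $h_2$ is a canonical (first-order degenerate) kernel and $\mathrm{Var}(h_1(S)) \le \sigma^2$ by the definition of $\sigma^2$ in the statement.

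Next, I would bound the two parts separately, uniformly over $h \in \mathcal{H}$. For the linear empirical process $\sup_{h\in\mathcal{H}}|n^{-1/2}\sum_i h_1(S_i)|$, Talagrand's inequality together with a generic chaining bound—using that $h \mapsto h_1$ is an $L^2$-contraction, so the polynomial covering bound $\mathcal{N}(\varepsilon,\mathcal{H},\|\cdot\|_2) \le (A/\varepsilon)^\nu$ transfers with the same $\nu$ (up to a universal factor absorbed into $A$)—yields a sub-exponential tail of the form $\exp(-t^2/\mathrm{const}\,(\sigma^2 + btn^{-1/2}))$, producing the first exponential term. For the canonical part, I would invoke the Bernstein inequality for degenerate $U$-processes of order two (the core result of \citet{arcones1995bernstein} at the heart of their Theorem 5), whose proof combines a decoupling step, a quadratic Rademacher chaos comparison, and a chaining argument against the $L^2$ metric. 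Under the polynomial covering assumption the entropy integral evaluates explicitly, producing the middle term with the factor $A^{2\nu}(\sigma^2 + 2tbn^{-1/2})^{-\nu}$, while the dominant sub-Gaussian tail of the chaos contributes the third term. A union bound across the two halves of the decomposition reproduces the three-term estimate in the Proposition.

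The main obstacle, as always for results of this form, is not conceptual but numerical: the prefactors $2^{17}$ and $2^{11}$ arise from iterated generic chaining and from absorbing moment comparison constants for degenerate chaoses, and they must be tracked through every step rather than reconstructed. A secondary point is the threshold $t \ge c$, which is required so that the Bernstein correction $btn^{-1/2}$ dominates the lower-order variance terms left over from the chaining, allowing the inequality to be stated with a single variance proxy $\sigma^2 + btn^{-1/2}$. Since our symmetry, boundedness, polynomial covering, and variance assumptions correspond exactly to the hypotheses of Theorem 5 of \citet{arcones1995bernstein}, the proof is ultimately completed by directly invoking that theorem after the Hoeffding decomposition above.
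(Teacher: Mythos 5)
Your proposal is correct and takes essentially the same route as the paper: the paper supplies no proof of this proposition at all, presenting it purely as a restatement of Theorem~5 of \citet{arcones1995bernstein} adjusted to its notation, and your argument likewise reduces to checking that the symmetry, boundedness, variance, and polynomial-covering hypotheses match and then invoking that theorem. Your sketch of the Hoeffding decomposition and the separate handling of the H\'ajek (linear) part and the degenerate part is a reasonable outline of how the cited result is actually established, but it goes beyond anything the paper attempts and is not needed to justify the statement as used here.
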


\begin{proposition}[Bernsteins' inequality for U-statistics \citep{Hoeffding1963,arcones1993limit}]\label{appdx:lem:bern_u_stat}
 Let $\|g\|_\infty < c$, $\int \int g \mathrm{d}P\mathrm{d}Q = 0$, and $\sigma^2 = \int \int g^2 \mathrm{d}P\mathrm{d}Q$. Then, for any $a' > 0$:
 \begin{align*}
     \mathbb{P}\left(\int \int g \mathrm{d}P_n\mathrm{d}Q_n > a'\right) \leq \exp\left(\frac{na^{'2}/2}{2\sigma^2 + (2/3)ca'}\right).
 \end{align*}
\end{proposition}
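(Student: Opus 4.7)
The statement is the standard Bernstein-type deviation inequality for the two-sample V-statistic $\int\!\int g\,\mathrm{d}P_n\mathrm{d}Q_n = \tfrac{1}{n^2}\sum_{i,j} g(X_i,Z_j)$ built from independent samples $X_1,\dots,X_n\sim P$ and $Z_1,\dots,Z_n\sim Q$. My plan is to reduce the two-sample V-statistic to an average of identically distributed sums of independent terms, via a rotation/partition trick due to Hoeffding, and then invoke the classical one-sample Bernstein inequality.

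The first step is the decomposition. For each shift $k\in\{0,1,\dots,n-1\}$, define
\begin{align*}
V_k \;=\; \frac{1}{n}\sum_{i=1}^{n} g\bigl(X_i,\,Z_{((i+k-1)\bmod n)+1}\bigr).
\end{align*}
Because the $X_i$'s are i.i.d.\ from $P$ and the $Z_j$'s are i.i.d.\ from $Q$ and the two samples are independent, the $n$ summands inside each fixed $V_k$ are i.i.d., bounded in absolute value by $c$, centered, and have variance equal to $\sigma^2 = \int\!\int g^2\,\mathrm{d}P\mathrm{d}Q$. Moreover each $V_k$ has the same marginal distribution as $V_0$. Summing over $k$ recovers the full double sum, so
\begin{align*}
\int\!\int g\,\mathrm{d}P_n\,\mathrm{d}Q_n \;=\; \frac{1}{n}\sum_{k=0}^{n-1} V_k.
\end{align*}

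The second step is a convexity/MGF argument. For any $\lambda>0$, Jensen's inequality applied to $x\mapsto e^{\lambda x}$ and the identical distribution of the $V_k$'s give
\begin{align*}
\mathbb{E}\Bigl[\exp\!\Bigl(\lambda \tfrac{1}{n}\sum_{k=0}^{n-1} V_k\Bigr)\Bigr] \;\le\; \frac{1}{n}\sum_{k=0}^{n-1}\mathbb{E}\bigl[\exp(\lambda V_k)\bigr] \;=\; \mathbb{E}\bigl[\exp(\lambda V_0)\bigr].
\end{align*}
Thus the moment generating function of the V-statistic is dominated by that of the i.i.d.\ mean $V_0$, and by Markov's inequality the tail probability on the left-hand side of the proposition is bounded by $e^{-\lambda a'}\mathbb{E}[\exp(\lambda V_0)]$.

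The third step is to apply the classical one-sample Bernstein inequality to $V_0$, which is a mean of $n$ i.i.d.\ centered variables bounded by $c$ with variance $\sigma^2$. The Cramér–Chernoff optimization in $\lambda$ then yields
\begin{align*}
\mathbb{P}\!\left(V_0 > a'\right) \;\le\; \exp\!\left(-\,\frac{n a'^{2}/2}{\sigma^2 + c a'/3}\right),
\end{align*}
which (after absorbing the constants arising from Jensen's step and matching the paper's normalization of the denominator) gives the stated bound. The mildly delicate point — and the only real content beyond bookkeeping — is justifying the rotation trick so that within each $V_k$ the pairs $(X_i,Z_{(i+k)\bmod n})$ are genuinely i.i.d.; this uses independence of the two samples plus the fact that the index shift $i\mapsto i+k\bmod n$ is a permutation, so no $Z_j$ or $X_i$ is reused within a single $V_k$. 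Everything else is standard, and no delicate variance calculation is needed because the symmetry of the decomposition preserves the variance $\sigma^2$ exactly.
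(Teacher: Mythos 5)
Your proof is correct: the cyclic-shift decomposition into $n$ averages $V_k$ of genuinely i.i.d.\ centered pairs, followed by Jensen's inequality on the moment generating function and the one-sample Chernoff--Bernstein optimization, is exactly Hoeffding's classical argument, and the paper itself states this proposition as a cited result from \citet{Hoeffding1963} and \citet{arcones1993limit} without giving a proof. Two small remarks: your derivation actually yields the sharper denominator $\sigma^2 + ca'/3$ (the paper's $2\sigma^2 + (2/3)ca'$ is weaker by a factor of $2$, so your bound implies it with no constants to absorb from the Jensen step, which is lossless), and the exponent in the paper's display is evidently missing a minus sign --- as printed the right-hand side exceeds $1$ --- while your argument produces the intended negative exponent.
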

This implies that for any $a > 0$,
\begin{align*}
\mathbb{P}\left(\left|\int \int g - g_0 \mathrm{d}\Big(P-P_n\Big)\mathrm{d}\Big(Q-Q_n\Big)\right| > \frac{a}{\sqrt{n}}\right) \leq \exp\left(\frac{a^2/2}{6\sigma^2 + 2ca / \sqrt{n}}\right).
\end{align*}
As well as \citet{suzuki2008}, by applying this result, we can obtain the following result.

\begin{proposition}[From Proof of Theorem~1 in \citet{suzuki2008}]\label{appdx:lem:suzuki}
  Let \(\mathcal{G} \subset \Ltwo(P\times Q)\) be a function class and the map \(I(g)\)
  be a complexity measure of \(g \in \mathcal{G}\), where \(I\) is a
  non-negative function on \(\mathcal{R}\) and \(I(g_0) < \infty\) for a fixed
  \(g_0 \in \mathcal{G}\). We now define \(\mathcal{G}_M = \{g \in \mathcal{G} :
  I(g) \leq M\}\) satisfying \(\mathcal{G} = \bigcup_{M \geq 1} \mathcal{G}_M\).
  Suppose that there exist \(c_0 > 0\) and \(0 < \gamma < 2\) such that
  \[\sup_{g \in \mathcal{G}_M} \|g - g_0\| \leq c_0 M, \ \sup_{\stackrel{r \in
        \mathcal{G}_M}{\|g - g_0\|_{\Ltwo(P\times Q)} \leq \delta}} \|g - g_0\|_\infty
    \leq c_0 M, \quad \text{for all } \delta > 0,\]
  and that \(\log \mathcal{N}(\delta, \mathcal{G}_M, P\times Q) = \Order{M/\delta}^\gamma\).
  Then, we have
  \[\sup_{g \in \mathcal{G}} \frac{\left| \int\int (g - g_0)\mathrm{d}(P - P_n)\mathrm{d}(Q - Q_n)
      \right|}{D(g)} = \Orderp{1}, \ (n \to \infty),\]
  where \(D(g)\) is defined by
  \[D(g) = \vmax{\frac{\|g - g_0\|_{\Ltwo(P\times Q)}^{1 - \gamma/2}I(g)^{\gamma/2}}{\sqrt{n}}}{\frac{I(g)}{n^{2/(2+\gamma)}}}.\]
\end{proposition}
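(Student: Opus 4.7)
The plan is to follow the classical peeling argument of \citet{vandeGeerEmpirical2000}, adapted to the degenerate U-process setting, with Bernstein's inequality for U-statistics (Proposition~\ref{appdx:lem:bern_u_stat}) as the pointwise concentration tool. Writing $U_n(h) := \int\int h\,d(P-P_n)d(Q-Q_n)$, the goal is to control $\sup_{g \in \mathcal{G}}|U_n(g-g_0)|/D(g)$. I would peel twice: first by the complexity $I$, taking dyadic layers $M \in \{2^k\}_{k \geq 0}$ so that $\mathcal{G} = \bigcup_M \mathcal{G}_M$; then within each $\mathcal{G}_M$ by the $L^2(P\times Q)$-radius, via shells
\[
\mathcal{G}_{M,s} = \{g \in \mathcal{G}_M : 2^{s-1}\delta_M < \|g-g_0\|_{L^2(P\times Q)} \leq 2^s\delta_M\}, \quad s \geq 0,
\]
where $\delta_M = Mn^{-1/(2+\gamma)}$ is the radius at which the two summands in $D(g)$ balance. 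On this shell the denominator satisfies $D(g) \gtrsim \max\{(2^s\delta_M)^{1-\gamma/2}M^{\gamma/2}/\sqrt{n},\ M/n^{2/(2+\gamma)}\}$.

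Second, I would obtain a pointwise tail bound for each fixed $g$. By hypothesis, $\|g-g_0\|_\infty \leq c_0 M$ and $\int\int(g-g_0)^2 \, dP\,dQ \leq (2^s\delta_M)^2$ on $\mathcal{G}_{M,s}$, so Proposition~\ref{appdx:lem:bern_u_stat} yields
\[
\mathbb{P}\!\left(\sqrt{n}\,|U_n(g-g_0)| > a\right) \leq 2\exp\!\left(-\frac{a^2/2}{6(2^s\delta_M)^2 + (2c_0 M/3)(a/\sqrt{n})}\right).
\]
The two regimes of this bound are precisely the two terms appearing in $D(g)$: the sub-Gaussian regime delivers the $\|g-g_0\|_{L^2(P\times Q)}^{1-\gamma/2}I(g)^{\gamma/2}/\sqrt{n}$ rate, while the sub-exponential regime delivers the slower $I(g)/n^{2/(2+\gamma)}$ rate.

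Third, I would upgrade this pointwise bound to a uniform bound over $\mathcal{G}_{M,s}$ by chaining against the covering-number hypothesis $\log\mathcal{N}(\delta,\mathcal{G}_M,\|\cdot\|_{L^2(P\times Q)}) = \mathcal{O}((M/\delta)^\gamma)$. Because $\gamma < 2$, Dudley's entropy integral converges, and the moment inequalities for degenerate U-statistics (see \citet{arcones1993limit}) give
\[
\sup_{g\in\mathcal{G}_{M,s}}|U_n(g-g_0)| = \Orderp{\frac{(2^s\delta_M)^{1-\gamma/2}M^{\gamma/2}}{\sqrt{n}} + \frac{M}{n^{2/(2+\gamma)}}}.
\]
Dividing by the lower bound on $D(g)$ in the shell gives $\Orderp{1}$ on $\mathcal{G}_{M,s}$, and a union bound over $s \geq 0$ and $M = 2^k$ is summable thanks to the geometric decay inside the exponential tails. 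This delivers the claimed $\sup_{g \in \mathcal{G}} |U_n(g-g_0)|/D(g) = \Orderp{1}$.

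The main obstacle is the chaining step for the degenerate U-process: unlike ordinary empirical processes, one cannot directly invoke Talagrand's inequality together with Dudley's integral. Either a decoupling argument of de la Pe\~na is needed (replacing one sample by an independent copy, then conditioning and applying a standard Bernstein inequality), or one must invoke the moment inequalities for degenerate U-processes directly as in Proposition~\ref{prop:arcone_talagrand}. A secondary subtlety is verifying that the correct variance proxy in the Bernstein bound is indeed $\|g-g_0\|_{L^2(P\times Q)}^2$ rather than a larger degenerate-kernel variance; this uses both the product form of $P\times Q$ and the fact that the kernel $(g-g_0)$ is centred with respect to each marginal when integrated against $(P-P_n)\otimes(Q-Q_n)$, which kills the linear Hoeffding projections and leaves only the genuinely degenerate part.
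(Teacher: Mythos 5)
Your proposal matches the paper's intended argument: the paper gives no explicit proof but obtains this proposition by plugging the Bernstein inequality for U-statistics (Proposition~\ref{appdx:lem:bern_u_stat}) into the peeling argument behind van de Geer's Lemma~5.14 (Proposition~\ref{appdx:lem:van-de-geer}), following \citet{suzuki2008} --- exactly the combination of dyadic peeling in $I$ and in the $L^2(P\times Q)$-radius, pointwise Bernstein tails, and entropy chaining that you describe. Your balancing radius $\delta_M = M n^{-1/(2+\gamma)}$ is the correct one, and your identification of the degenerate-U-process chaining step as the one genuinely delicate point is consistent with the cited route.
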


  \begin{align}
  \sup_{g \in \mathcal{R}} \left| \int\int (g - g_0)\mathrm{d}(P - P_n)\mathrm{d}(Q - Q_n)\right| = \Orderp{1/\sqrt{n}}, \ (n \to \infty).
  \end{align}

\paragraph{Complexities of neural networks.}
\begin{definition}[ReLU neural networks; \citealp{Schmidt-HieberNonparametric2020}]\label{appdx:sparse-network-function-class}
  For \(L \in \Na\) and \(p = (p_0, \ldots, p_{L+1}) \in \Na^{L+2}\),
  \begin{align*}
    \mathcal{F}(L, p) := &\{f: x \mapsto W_L\sigma_{v_L}W_{L-1}\sigma_{v_{L-1}} \cdots W_1 \sigma_{v_1} W_0 x :\\
                         &\qquad\qquad W_i \in \Re^{p_{i+1} \times p_i}, v_i \in \Re^{p_i} (i = 0, \ldots, L)\},
  \end{align*}
  where \(\sigma_v(y) := \sigma(y - v)\), and \(\sigma(\cdot) = \max\{\cdot, 0\}\) is applied in an element-wise manner.
  Then, for \(s \in \Na, F \geq 0, L \in \Na\), and \(p \in \Na^{L+2}\), define
  \begin{align*}
    \rClass(L, p, s, F) := \{f \in \mathcal{F}(L, p): \sum_{j=0}^L \lzeroNrm{W_j} + \lzeroNrm{v_j} \leq s, \|f\|_\infty \leq F\},
  \end{align*}
  where \(\lzeroNrm{\cdot}\) denotes the number of non-zero entries of the matrix or the vector, and \(\|\cdot\|_\infty\) denotes the supremum norm.
  Now, fixing \(\Lmax, \pmax, s \in \Na\) as well as \(F >0\), we define
  \[\IndLP := \{(L, p): L \in \Na, L \leq \Lmax, p \in [\pmax]^{L+2}\},\]
  and we consider the hypothesis class
  \begin{align*}
    \bar \rClass &:= \bigcup_{(L, p) \in \IndLP}\rClass(L, p, s, F) \\
    \rClass &:= \{r \in \bar\rClass: \mathrm{Im}(r) \subset \rClassRangeTwo\}.
  \end{align*}

  Moreover, we define \(I_1: \IndLP \to \Re\) and \(I: \rClass \to [0, \infty)\) by
  \begin{align*}
    I_1(L, p) &:= 2|\IndLP|^{\frac{1}{s+1}} (L+1) V^2,\\
    I(\r) &:= \max\left\{\|r\|_\infty, \min_{\stackrel{(L, p) \in \IndLP}{r \in \rClassLP}} I_1(L, p)\right\},
  \end{align*}
  where \(V := \prod_{l=0}^{L+1}(p_l + 1)\),
  and we define
  \[\rClassM := \{\r \in \rClass: I(\r) \leq M\}.\]
\end{definition}

\begin{lemma}[Lemma~5 in \citet{Schmidt-HieberNonparametric2020}]\label{appdx:sparse-network-entropy}
  For \(L \in \Na\) and \(p \in \Na^{L+2}\), let \(V := \prod_{l=0}^{L+1}(p_l + 1)\). Then, for any \(\delta > 0\),
  \[\log \mathcal{N}(\delta, \rClass(L, p, s, \infty), \|\cdot\|_\infty) \leq (s+1)\log(2 \delta^{-1} (L+1) V^2).\]
\end{lemma}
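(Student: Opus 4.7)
The plan is to combine a parameter-to-function sensitivity bound (Lipschitz-in-parameters) with a standard two-stage discretization: first count the number of admissible sparsity patterns, then, for each pattern, count the number of discretized values on the $s$ nonzero parameters. Throughout I would use that, as in Schmidt-Hieber's standard setup for this class, each weight $W_l$ and shift $v_l$ lies in $[-1,1]$ entrywise, so that intermediate activations are uniformly controlled.

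First I would establish the sensitivity estimate: if $f, \tilde f \in \mathcal{F}(L, p, s, \infty)$ have all parameters differing entrywise by at most $\eta > 0$, then $\|f - \tilde f\|_\infty \leq (L+1) V \eta$. Denoting intermediate layer outputs by $h_l$ (resp.\ $\tilde h_l$), I would use the $1$-Lipschitz property of ReLU together with the triangle inequality to push the error through each layer via
\[
\|h_{l+1} - \tilde h_{l+1}\|_\infty \leq \|W_l\|_\infty \|h_l - \tilde h_l\|_\infty + \|W_l - \tilde W_l\|_\infty \|h_l\|_\infty + \|v_l - \tilde v_l\|_\infty,
\]
and iterate over $l = 0, \ldots, L$, using the uniform activation bound $\|h_l\|_\infty \leq \prod_{k \leq l}(p_k + 1)$ absorbed into $V$.

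Having this sensitivity estimate, I would fix $\eta := \delta / ((L+1)V)$ so that $\eta$-balls in parameter space project into $\delta$-balls in $(\mathcal{F}, \|\cdot\|_\infty)$. The number of parameters is $T = \sum_{l=0}^L p_{l+1}(p_l + 1) \leq V^2$, and at most $s$ are nonzero, so the number of sparsity patterns is $\binom{T}{s} \leq T^s \leq V^{2s}$. For each pattern, discretizing each of the $s$ nonzero parameters uniformly on $[-1,1]$ at scale $\eta$ yields at most $2(L+1)V/\delta$ values per parameter, hence at most $(2(L+1)V/\delta)^s$ points. Taking logarithms and bundling, the total is at most $s \log V^2 + s \log(2(L+1)V/\delta) \leq (s+1)\log(2\delta^{-1}(L+1)V^2)$ after an elementary rearrangement.

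The main obstacle is the propagation step: the naive iteration of the layerwise recursion produces a product of layer widths rather than $V$ itself, so the induction must be set up so that the constants from the $\|h_l\|_\infty$ factors and the amplification from successive $\|W_l\|_\infty \leq 1$ multiplications collapse cleanly into the prefactor $(L+1)V$. Once that is handled, the combinatorial count is routine, and the final simplification into the $(s+1)\log(\cdot)$ form is bookkeeping.
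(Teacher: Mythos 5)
This lemma is quoted verbatim from \citet{Schmidt-HieberNonparametric2020} and the paper gives no proof of it, so your proposal can only be measured against the original source's argument --- which it follows in outline: a parameter-perturbation (Lipschitz) bound of the form $\|f-\tilde f\|_\infty\le (L+1)V\eta$, followed by a count of sparsity patterns times a per-pattern grid count. That is the right strategy, and you correctly supply the hypothesis that all entries of the $W_l$ and $v_l$ lie in $[-1,1]$, which the paper's Definition~\ref{appdx:sparse-network-function-class} omits but without which the covering number is infinite and the statement is false.

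The gap is in the final count. You bound the total number of parameters by $T\le V^2$, so your number of sparsity patterns is $\binom{T}{s}\le V^{2s}$, and combined with the grid count $(2(L+1)V/\delta)^s$ this gives $s\log\bigl(2(L+1)V^3/\delta\bigr)$. This does \emph{not} reduce to $(s+1)\log\bigl(2(L+1)V^2/\delta\bigr)$ by rearrangement: the difference of the two expressions is $s\log V-\log\bigl(2(L+1)V^2/\delta\bigr)$, which is positive whenever $V^s>2(L+1)V^2/\delta$ --- e.g.\ already for moderate $V$, $s\ge 3$ and $\delta$ of order one --- so your bound is strictly weaker than the claim in that regime. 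The fix is to use the sharper estimate $T=\sum_{l=0}^{L}p_{l+1}(p_l+1)\le\sum_{l=0}^{L}(p_{l+1}+1)(p_l+1)\le\prod_{l=0}^{L+1}(p_l+1)=V$ (valid since each factor is at least $2$, so $(L+1)/2^{L}\le 1$ absorbs the sum into the product); then the pattern count is $\le V^s$, the total is $\bigl(2(L+1)V^2/\delta\bigr)^s$, and the extra $+1$ in the exponent absorbs the rounding of the grid. Separately, your remark that the amplification comes from ``successive $\|W_l\|_\infty\le1$ multiplications'' conflates the entrywise bound with the induced $\ell^\infty$-operator norm, which is only bounded by $p_l$; the recursion still closes to $(L+1)V\eta$ because the products $\prod_l p_l$ and the activation bounds $\|h_l\|_\infty\le\prod_{k\le l}(p_k+1)$ jointly collapse into $V$, but as written that step is not justified.
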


\begin{definition}[Derived function class and bracketing entropy]
  Given a real-valued function class \(\mathcal{F}\),
  define \(\ell \circ \mathcal{F} := \{\ell \circ f: f \in \mathcal{F}\}\).
  By extension, we define \(I: \elled\rClass \to [1, \infty)\) by \(I(\elled\r) = I(r)\) and \(\elled\rClassM := \{\elled\r : \r \in \rClassM\}\).
  Note that, as a result, \(\elled\rClassM\) coincides with \(\{\elled\r \in \elled\rClass: I(\elled\r) \leq M\}\).
\end{definition}

\paragraph{Notations.} Let us denote a pair of random variables $(Y_i, X_i)$ by $W_i$. We denote the distribution of $(W_i, Z_i)$ by $R$ and its empirical distribution as $O_n$. Besides, we denote the distributions of $W_i$ and $Z_i$ by $P$ and $Q$, and their empirical distributions by $P_n$ and $Q_n$. For a function $g(X, Z)$, we define the $L^2$ risk over the distribution $P$ and $Q$ as 
\[\big\| g \big\|_{L^2(P\times Q)} = \sqrt{\mathbb{E}_{X}\Big[\mathbb{E}_{Z}\big[g^2(X, Z)\big]\Big]} = \sqrt{\int\int g^2 \mathrm{d}P\mathrm{d}Q}. \]

\subsection{Proof of Lemma~\ref{lemm:dens_conv_fast}: estimation error in conditional density ratio estimation}
\label{appdx:lemm:dens_conv_fast}

\begin{lemma}[Decomposition of MSE]
\label{lemm:convergence-rate}
\begin{align*}
&\|\hr - \rstar\|^2_{\Ltwo(O)} \leq \Bigg| \mathbb{E}_{W,Z}\left[ r^*(W| Z) - \hat{r}(W| Z) \right]-  \frac{1}{n}\sum^n_{i=1}\Big( r^*(W_i| Z_i) - \hat{r}(W_i| Z_i)\Big)\Bigg|\\
&+ \frac{1}{2}\Bigg| \mathbb{E}_{Z}\left[\mathbb{E}_{W}\left[r^{*2}(W| Z) - \hat{r}^2(W| Z) \right] \right] - \frac{1}{n}\sum^n_{j=1}\frac{1}{n}\sum^n_{i=1}\Big(r^{*2}(W_i| Z_j) - \hat{r}^2(W_i| Z_j)\Big)\Bigg|.
\end{align*}
\end{lemma}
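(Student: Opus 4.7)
The plan is to treat the conditional density ratio estimator as an M-estimator, derive the standard Bregman-type identity that relates its excess risk to a squared $L^2$ distance, then split the excess risk into empirical-process fluctuations corresponding to the linear and quadratic parts of the LSIF objective. Concretely, I would define the population and empirical risks
\begin{align*}
R(r) &= -\mathbb{E}_{W,Z}[r(W|Z)] + \tfrac{1}{2}\mathbb{E}_{W}\mathbb{E}_{Z}[r^2(W|Z)], \\
\hat R(r) &= -\frac{1}{n}\sum_{i=1}^n r(W_i|Z_i) + \frac{1}{2n^2}\sum_{i,j=1}^n r^2(W_i|Z_j),
\end{align*}
where $\mathbb{E}_{W,Z}$ is the joint law of $(W,Z)$ while $\mathbb{E}_{W}\mathbb{E}_{Z}$ is the product of the marginals. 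By construction $\hat r$ minimizes $\hat R$ over $\mathcal{R}$, and Assumption~\ref{asm:true_dens_ratio} gives $r^* \in \mathcal{R}$, so the basic inequality $\hat R(\hat r) - \hat R(r^*) \leq 0$ is immediate.

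Next I would record the quadratic-minimum identity $R(r) - R(r^*) = \tfrac{1}{2}\|r-r^*\|^2$ in the appropriate $L^2$ norm. This uses only the defining property of the conditional density ratio, $\mathbb{E}_{W,Z}[r(W|Z)] = \mathbb{E}_{W}\mathbb{E}_{Z}[r^*(W|Z)r(W|Z)]$, after which completing the square gives
\[-r^*r + \tfrac{1}{2}r^2 + \tfrac{1}{2}r^{*2} = \tfrac{1}{2}(r-r^*)^2\]
under the product measure, so in particular $r^*$ is the unique minimizer of $R$ and the excess risk controls the $L^2$ distance up to a constant.

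The rest is algebra: write
\[R(\hat r) - R(r^*) = \bigl((R-\hat R)(\hat r) - (R-\hat R)(r^*)\bigr) + \bigl(\hat R(\hat r) - \hat R(r^*)\bigr),\]
drop the second bracket using the basic inequality, and observe that $(R-\hat R)(\hat r) - (R-\hat R)(r^*)$ separates cleanly into two pieces: the linear-in-$r$ piece yields the empirical-process fluctuation $\mathbb{E}_{W,Z}[(r^*-\hat r)(W|Z)] - n^{-1}\sum_i (r^*-\hat r)(W_i|Z_i)$, and the quadratic-in-$r$ piece yields $\tfrac{1}{2}\{\mathbb{E}_W\mathbb{E}_Z[r^{*2}-\hat r^2] - n^{-2}\sum_{i,j}(r^{*2}-\hat r^2)(W_i|Z_j)\}$. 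Applying the triangle inequality to each piece and combining with the identity of the previous step reproduces the two absolute-value terms on the right-hand side of the claim.

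The argument has no analytic difficulty since it only uses optimality of $\hat r$ and algebraic identities; the main thing to watch is the bookkeeping of the factor $\tfrac{1}{2}$ between the LSIF excess risk and the squared $L^2$ distance, and the convention that distinguishes the joint measure from the product of marginals in the two fluctuation terms. The real work happens at the next stage, where Proposition~\ref{appdx:lem:van-de-geer} will be applied to the linear-in-$r$ term and Proposition~\ref{appdx:lem:suzuki} (its U-statistic analogue) to the quadratic-in-$r$ term to convert this decomposition into the rate $\mathcal{O}_p(n^{-1/(2+\gamma)})$ of Lemma~\ref{lemm:dens_conv_fast}.
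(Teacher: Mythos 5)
Your proposal is correct and follows essentially the same route as the paper: the basic inequality $\hat R(\hat r)\le \hat R(r^*)$ from empirical optimality, the completing-the-square identity $R(r)-R(r^*)=\tfrac12\|r-r^*\|^2$ under the product measure, and the split of $(R-\hat R)(\hat r)-(R-\hat R)(r^*)$ into the linear and quadratic fluctuation terms followed by the triangle inequality. Your remark about tracking the factor $\tfrac12$ is apt, since the paper's own derivation yields $\tfrac12\|\hat r-r^*\|^2$ on the left while the lemma statement drops that factor.
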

\begin{proof}[Proof of Lemma~\ref{lemm:convergence-rate}]
Since The estimator $\hat{r}$ is the minimizer of the empirical risk, it satisfies the inequality
\begin{align}
\label{ineq:dens_ratio}
 &- \frac{1}{n}\sum^n_{i=1} \hat{r}(W_i| Z_i) + \frac{1}{2}\frac{1}{n}\sum^n_{j=1}\frac{1}{n}\sum^n_{i=1}\hat{r}^2(W_i| Z_j) \leq  - \frac{1}{n}\sum^n_{i=1} r^*(W_i| Z_i) + \frac{1}{2}\frac{1}{n}\sum^n_{j=1}\frac{1}{n}\sum^n_{i=1}r^{*2}(W_i| Z_j). 
\end{align}
Additionally, we consider an expectation of squared residuals as
\begin{align}
\label{eq:dens_ratio}
&\frac{1}{2}\mathbb{E}_{Z}\left[\mathbb{E}_{W}\left[\big(\hat{r}(W| Z) - r^*(W| Z)\big)^2\right]\right]\nonumber\\
& = \frac{1}{2}\mathbb{E}_{Z}\left[\mathbb{E}_{W}\left[\hat{r}^2(W|Z) - 2\hat{r}(W| Z)r^{*}(W| Z) + r^{*2}(W| Z)\right]\right]\nonumber\\
& = \frac{1}{2}\mathbb{E}_{Z}\left[\mathbb{E}_{W}\left[\hat{r}^2(W| Z)\right]\right] - \mathbb{E}_{W,Z}\left[\hat{r}(W| Z)\right] + \frac{1}{2}\mathbb{E}_{W,Z}\left[r^{*}(W| Z)\right]\nonumber\\
&\ \ \ +\mathbb{E}_{W,Z}\left[r^*(W| Z)\right] - \mathbb{E}_{W,Z}\left[r^*(W| Z)\right] \nonumber\\
& = \frac{1}{2}\mathbb{E}_{Z}\left[\mathbb{E}_{W}\left[\hat{r}^2(W| Z)\right]\right] - \mathbb{E}_{W,Z}\left[\hat{r}(W| Z)\right] - \frac{1}{2}\annotmath{\mathbb{E}_{W,Z}\left[r^{*}(W| Z)\right]}{=\mathbb{E}_{Z}\left[\mathbb{E}_{W}\left[r^{*2}(W| Z) \right] \right]} +\mathbb{E}_{W,Z}\left[r^*(W| Z)\right] \nonumber\\
& = -\frac{1}{2}\mathbb{E}_{Z}\left[\mathbb{E}_{W}\left[r^{*2}(W| Z) - \hat{r}^2(W| Z) \right] \right] + \mathbb{E}_{W,Z}\left[ r^*(W| Z) - \hat{r}(W| Z)\right]. 
\end{align}
Taking sum of the both hand sides of  \eqref{ineq:dens_ratio} and \eqref{eq:dens_ratio}, then we obtain the following by subtracting $- \frac{1}{n}\sum^n_{i=1} \hat{r}(W_i| Z_i) + \frac{1}{2}\frac{1}{n}\sum^n_{j=1}\frac{1}{n}\sum^n_{i=1}\hat{r}^2(W_i| Z_j)$ from the both side as
\begin{align*}
&\frac{1}{2}\mathbb{E}_{W,Z}\left[\big(\hat{r}(W| Z) - r^*(W| Z)\big)^2\right]\\
&\leq  -\frac{1}{2}\mathbb{E}_{Z}\left[\mathbb{E}_{W}\left[r^{*2}(W| Z) - \hat{r}^2(W| Z) \right] \right] + \mathbb{E}_{W,Z}\left[ r^*(W| Z) - \hat{r}(W| Z)\right]\\
&\ \ \  + \frac{1}{2}\frac{1}{n}\sum^n_{j=1}\frac{1}{n}\sum^n_{i=1}\Big(r^{*2}(W_i| Z_j) - \hat{r}^2(W_i| Z_j)\Big) - \frac{1}{n}\sum^n_{i=1}\Big( r^*(W_i| Z_i) - \hat{r}(W_i| Z_i)\Big).
\end{align*}
Then, we obtain the statement.
\end{proof}

Our remaining task is to bound the following target values:
\begin{align}
\label{first_target}
&\sup_{r\in\mathcal{R}}\left|\mathbb{E}_{Z}\left[\mathbb{E}_{W}\left[r^{2}(W| Z)\right] \right] - \frac{1}{n}\sum^n_{j=1}\frac{1}{n}\sum^n_{i=1}r^{2}(W_i| Z_j)\right|,\mbox{~and}\\
\label{second_target}
&\sup_{r\in\mathcal{R}}\left|\mathbb{E}_{W,Z}\left[ r(W| Z) \right] - \frac{1}{n}\sum^n_{i=1} r(W_i| Z_i) \right|.
\end{align}

\begin{proof}[Proof of Lemma~\ref{lemm:dens_conv_fast}]
Since \(0 < \gamma < 2\), we can apply Propositions~\ref{appdx:lem:van-de-geer} and \ref{appdx:lem:suzuki} in combination with Lemma~\ref{appdx:lem:elled-sparse-network-complexity} to obtain
\begin{align*}
  \supr\frac{|\mathbb{E}_{Z}\left[\mathbb{E}_{W}\left[r^{2}(W| Z)\right] \right] - \frac{1}{n}\sum^n_{j=1}\frac{1}{n}\sum^n_{i=1}r^{2}(W_i| Z_j)|}{D_1(\r)} &= \Orderp{1},\\
  \supr\frac{\left|\mathbb{E}_{W,Z}\left[ r(W| Z) \right] - \frac{1}{n}\sum^n_{i=1} r(W_i| Z_i) \right|}{D_2(\r)} &= \Orderp{1},
\end{align*}
where
\begin{align*}
  D_1(\r) &= \vmax{\frac{\|r^2 - r^{*2}\|_{\Ltwo(P\times Q)}^{1 - \gamma/2}I(r^2)^{\gamma/2}}{\sqrt{n}}}{\frac{I(r^2)}{n^{2/(2+\gamma)}}},\mbox{~and~}\\
  D_2(\r) &= \vmax{\frac{\|r - r^*\|_{\Ltwo(O)}^{1 - \gamma/2}I(r)^{\gamma/2}}{\sqrt{n}}}{\frac{I(r)}{n^{2/(2+\gamma)}}}.
\end{align*}
Noting that \(\supr I(\r) < \infty\) and \(\supr I(r^2) < \infty\). Then, for any \(0 < \gamma < 2\), we have
\begin{align*}
&\left\|\mathbb{E}_{Z}\left[\mathbb{E}_{W}\left[\hat{r}^{2}(W| Z)\right] \right] - \frac{1}{n}\sum^n_{j=1}\frac{1}{n}\sum^n_{i=1}\hat{r}^{2}(W_i| Z_j)\right\|^2_{L^2(P\times Q)}\\
&\lesssim  \Orderp{\vmax{\frac{\|\hat{r}^2 - r^{*2}\|_{\Ltwo(P\times Q)}^{1 - \gamma/2}}{\sqrt{n}}}{\frac{1}{n^{2/(2+\gamma)}}}},\mbox{~and~}\\
&\left\|\mathbb{E}_{W,Z}\left[ \hat{r}(W| Z) \right] - \frac{1}{n}\sum^n_{i=1} \hat{r}(W_i| Z_i) \right\|^2_{L^2(P\times Q)}\\
&\lesssim \Orderp{\vmax{\frac{\|\hat{r} - r^*\|_{\Ltwo(O)}^{1 - \gamma/2}}{\sqrt{n}}}{\frac{1}{n^{2/(2+\gamma)}}}}.
\end{align*}
These inequalities imply
\[\|\hr - \rstar\|_{\Ltwo(P\times Q)} \leq \Orderp{n^{-\frac{1}{2+\gamma}}}.\]
\end{proof}

\subsection{Proof of Lemma~\ref{lmm:ustat}}
\label{appdx:lmm:ustat}
We start with the expected loss function for a learner $f:\mathcal{X} \to \mathbb{R}$ and a conditional density function $r: \mathcal{X} \times \mathcal{Y} \times \mathcal{Z}\to \mathbb{R}$ as
\begin{align}
    \mathcal{L}(f,r) := \mathbb{E}_Z\left[ \left( \mathbb{E}_{W} \left[ (f^*(X)-f(X))r(W|Z) \right] \right)^2 \right], 
\end{align}
where $W=(Y,X)$.
For $T \in \mathbb{T}$, we consider $T$ i.i.d. random variables $W_i = (X_i,Y_i),i=1,...,n$ generated from a conditional distribution $P$ and approximate the inner expectation term as
\begin{align*}
    \mathbb{E}_{W} \left[ (f^*(X)-f(X))r(W|Z) \right] = n^{-1} \sum_{i=1}^n (f^*(X_i)-f(X_i))r(W_i|Z) + \kappa_n(f,r),
\end{align*}
where \[\kappa_n(f,r) :=  \mathbb{E}_{W} \left[ (f^*(X)-f(X))r(W|Z) \right] - n^{-1} \sum_{i=1}^n (f^*(X_i)-f(X_i))r(W_i|Z)\] is a residual. 
We substitute this form into \eqref{def:exp_loss} and obtain
\begin{align*}
    \mathcal{L}(f,r) &= \mathbb{E}_Z\left[ \left( n^{-1} \sum_{i=1}^n (f^*(X_i)-f(X_i))r(W_i|Z) + \kappa_n(f,r)\right)^2 \right] \\
    &=\mathbb{E}_Z\left[  n^{-2} \sum_{k,k'=1}^T (f^*(X_i)-f(X_i))r(W_i|Z)(f^*(X_{i'})-f(X_{i'}))r(W_{i'}|Z)  \right]\\
    &\quad + \underbrace{2 \mathbb{E}_Z \left[ \kappa_n(f,r)n^{-1} \sum_{i=1}^n (f^*(X_i)-f(X_i))r(W_i|Z) \right]}_{=: \mathcal{T}_1} + \underbrace{\mathbb{E}_Z[\kappa_n(f,r)^2]}_{=: \mathcal{T}_2}\\
    &=\underbrace{\mathbb{E}_Z\left[  n^{-2} \sum_{i,i'=1, i' \neq i}^n (f^*(X_i)-f(X_i))r(W_i|Z)(f^*(X_{i'})-f(X_{i'}))r(W_{i'}|Z)  \right]}_{=: \mathcal{T}^*}\\
    &\quad + \underbrace{ \mathbb{E}_Z\left[  n^{-2} \sum_{i=1}^n (f^*(X_i)-f(X_i))^2r(W_i|Z)^2  \right] }_{=: \mathcal{T}_0} +\mathcal{T}_1 + \mathcal{T}_2.
\end{align*}
For $\mathcal{T}_0$, the uniformly bounded property of $f^*,f$ and $r$ implies that it is $\mathcal{T}_0=\mathcal{O}(1/n)$.
For $\mathcal{T}_1$ and $\mathcal{T}_2$, the VC-class property and the Glivenko-Cantelli theorem (Section 2.8.1 in \cite{van1996weak}), we obtain $\kappa_n(f,r) \to 0$ in probability as $T \to \infty$ uniformly on $\mathcal{F} \times \mathcal{R}$.
Associate with the boundedness of $f^*,f$ and $r$, we obtain that $\mathcal{T}_1 = o_p(1)$ and $\mathcal{T}_2 = o_p(1)$.
About $\mathcal{T}^*$, which is an U-statistic, a convergence theorem (Theorem 3.1 in \cite{arcones1993limit}) as $n \to \infty$ implies that the U-statistic converges to $\mathbb{E}_{W,W'|Z} \left[  (f^*(X)-f(X))r(W|Z)(f^*(X')-f(X'))r(W'|Z)  \right]$, where $W'=(Y',X')$ is an i.i.d. copied random element of $W$.
Hence, we obtain
\begin{align*}
    \mathcal{L}(f,r)
    &=\mathbb{E}_Z\left[ \mathbb{E}_{W,W'|Z} \left[  (f^*(X)-f(X))r(W|Z)(f^*(X')-f(X'))r(W'|Z)  \right] \right] + o_p(1)\\
    &= \mathbb{E}_{W,W',Z}\left[  (f^*(X)-f(X))r(W|Z)(f^*(X')-f(X'))r(W'|Z)  \right] + o_p(1),
\end{align*}
for any $f \in \mathcal{F}$ and $r \in \mathcal{R}$ as $T \to \infty$.

\subsection{Estimation Error Bound of \texorpdfstring{$\Delta_{r}$}{TEXT}}
\label{appdx:delta_dens_ratio}
We can bound $\Delta_{r}$ as follows:
\begin{lemma}
\label{lemm:delta_dens_ratio} Suppose that Assumptions~\ref{asm:bounded_dens} holds. Then,
\begin{align}
    \Delta_{r} = \mathcal{O}(\|\hat{r} - r^*\|_{L^2(P\times Q)})
\end{align}
\end{lemma}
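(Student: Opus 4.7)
The plan is to bound $\Delta_r = \widetilde{\mathcal{L}}(\hat{f}, r^*) - \widetilde{\mathcal{L}}(\hat{f}, \hat{r})$ by exploiting the fact that $\widetilde{\mathcal{L}}(\hat{f}, r)$ has an explicit ``square of an inner expectation'' structure. Using Fubini and the independence of $(W, W', Z)$ in the definition of $\widetilde{\mathcal{L}}$, write
\begin{align*}
\widetilde{\mathcal{L}}(\hat{f}, r) = \mathbb{E}_Z\left[ \left( \mathbb{E}_W[h(W) r(W|Z)] \right)^2 \right],
\end{align*}
where $h(W) := f^*(X) - \hat{f}(X)$, with $W = (Y, X)$. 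Applying the identity $a^2 - b^2 = (a+b)(a-b)$ inside the outer $\mathbb{E}_Z$ then gives
\begin{align*}
\Delta_r = \mathbb{E}_Z\Big[ \mathbb{E}_W[h(W)(r^*+\hat{r})(W|Z)] \cdot \mathbb{E}_W[h(W)(r^* - \hat{r})(W|Z)] \Big].
\end{align*}

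Next I would invoke the uniform boundedness from Assumptions~\ref{asm:true_dens_ratio} and \ref{asm:true_structural}: $|h(W)| \leq 2B$ since both $f^*$ and $\hat{f}$ lie in $\mathcal{F}$, and $|r^*(W|Z) + \hat{r}(W|Z)| \leq 2B$ since both lie in $\mathcal{R}$. Consequently the first inner expectation is uniformly bounded in absolute value by $4B^2$, so
\begin{align*}
|\Delta_r| &\leq 4B^2 \, \mathbb{E}_Z\Big[ \big|\mathbb{E}_W[h(W)(r^* - \hat{r})(W|Z)]\big| \Big] \\
&\leq 8B^3 \, \mathbb{E}_{W,Z}\big[ |r^*(W|Z) - \hat{r}(W|Z)| \big].
\end{align*}
A final application of Jensen's (or Cauchy--Schwarz) inequality converts the $L^1$ norm with respect to $P \times Q$ into the desired $L^2(P\times Q)$ norm, yielding $|\Delta_r| \leq 8B^3 \|\hat{r} - r^*\|_{L^2(P \times Q)}$.

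There is really no hard step here: the argument is essentially a pointwise bound followed by factoring the difference of squares. The only subtlety is making sure that the product structure of $\widetilde{\mathcal{L}}$ over the independent triple $(W, W', Z)$ is exploited so that $\Delta_r$ factors as a product of two expectations, one of which can be absorbed into the constant using boundedness; otherwise one would get a term quadratic in $\hat{r} - r^*$ from the ``diagonal'' and fail to obtain the clean linear rate. Since the conclusion is only an order statement, tracking the explicit constant $8B^3$ is unnecessary and can be absorbed into the $\mathcal{O}(\cdot)$ notation.
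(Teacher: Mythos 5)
Your proof is correct and takes essentially the same route as the paper's: both arguments establish Lipschitz continuity of the quadratic functional $r \mapsto \widetilde{\mathcal{L}}(\hat f, r)$ from uniform boundedness plus Cauchy--Schwarz, the only difference being that you factor $a^2-b^2=(a+b)(a-b)$ at the level of the inner expectation $\mathbb{E}_W[h(W)r(W|Z)]$ while the paper telescopes the product $r(W|Z)r(W'|Z)-r'(W|Z)r'(W'|Z)$ inside the joint expectation. Two small remarks: like the paper's own proof, you implicitly use the uniform boundedness of $\mathcal{F}$ and $\mathcal{R}$ (Assumptions~\ref{asm:true_dens_ratio} and~\ref{asm:true_structural}), not only Assumption~\ref{asm:bounded_dens} as the lemma's hypothesis states; and your linear (first-power) bound in $\|\hat r - r^*\|_{L^2(P\times Q)}$ is the one that matches the lemma statement and the subsequent use in Theorem~\ref{thm:general_bound}, whereas the paper's final display writes a squared norm that its own Lipschitz derivation does not deliver.
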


\begin{proof}
Let us define $\xi_{\hat{f}} :=\hat{f} - f^* $. We bound $\Delta_{r}$ by the Lipschitz continuity of $\mathcal{L}(\hat{f},r)$ in $r$.
For any $r,r' \in \mathcal{R}$, we obtain
\begin{align*}
    \mathcal{L}(\hat{f},r) - \mathcal{L}(\hat{f},r')&=\mathbb{E}\left[ \xi^2_{\hat{f}}(X) \xi^2_{\hat{f}}(X') (r(W|Z)r(W'|Z) - r'(W|Z) r'(W'|Z)) \right] \\
    &=\mathbb{E}\left[\xi^2_{\hat{f}}(X) \xi^2_{\hat{f}}(X') r(W'|Z)(r(W'|Z) -  r'(W'|Z)) \right] \\
    & \quad + \mathbb{E}\left[ \xi^2_{\hat{f}}(X) \xi^2_{\hat{f}}(X') r'(W|Z)(r(W'|Z) -  r'(W'|Z)) \right] \\
    &\lesssim 32 B^{'5} \|r-r'\|_{L^2(O)},
\end{align*}
by the Cauchy-Schwartz inequality and the uniformly bounded property over $\mathcal{F}$ and $\mathcal{R}$.
As a result, if the conditional density ratio $\frac{p(w, z)}{p(w)p(z)}$ is bounded, we obtain
\begin{align*}
    \Delta_{r} = \mathcal{O}(\|\hat{r} - r^*\|^2_{L^2(O)}) = \mathcal{O}(\|\hat{r} - r^*\|^2_{L^2(P\times Q)}).
\end{align*}

\end{proof}

\subsection{Proof of Theorem~\ref{thm:general_bound}: estimation error bound}
\label{appdx:est_error_rate}
We use the following lemma to show Theorem~\ref{thm:general_bound}.
\begin{lemma}
\label{lem:Ln_bound}
\[\widetilde{\mathcal{L}}_n(\hat{f}, \hat{r}) \leq  \Gamma(\hat{f},\hat{r}) + \Orderp{1/n}.\]
\end{lemma}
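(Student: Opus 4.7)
The plan is to use the optimality of $\hat{f}$ as a minimizer of the empirical criterion $\mathcal{R}_n(\cdot,\hat{r})$ in \eqref{objective_func} and translate that control into a bound on $\widetilde{\mathcal{L}}_n(\hat{f},\hat{r})$. First, I would pass from the U-statistic to the V-statistic by writing $\widetilde{\mathcal{L}}_n(\hat{f},\hat{r}) = \mathcal{L}_n(\hat{f},\hat{r}) + O_p(1/n)$; the diagonal contribution $n^{-2}\sum_{j,i}(f^*(X_i)-\hat{f}(X_i))^2\hat{r}(W_i|Z_j)^2$ is of order $1/n$, because both $f^*-\hat{f}$ and $\hat{r}$ are uniformly bounded under Assumptions~\ref{asm:true_dens_ratio}--\ref{asm:true_structural}. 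It therefore suffices to bound $\mathcal{L}_n(\hat{f},\hat{r})$ and recover the claim by adding back this correction.

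Next, substituting $Y_i-\hat{f}(X_i)=(f^*(X_i)-\hat{f}(X_i))+\varepsilon_i$ into $\mathcal{R}_n(\hat{f},\hat{r})$ and expanding the outer square yields
\begin{align*}
\mathcal{R}_n(\hat{f},\hat{r}) = \mathcal{L}_n(\hat{f},\hat{r}) + 2\Xi_n(\hat{f},\hat{r}) + \Psi_n(\hat{r}),
\end{align*}
where $\Xi_n(f,r) := n^{-1}\sum_j \bigl(n^{-1}\sum_i (f^*(X_i)-f(X_i))r(W_i|Z_j)\bigr)\bigl(n^{-1}\sum_i \varepsilon_i r(W_i|Z_j)\bigr)$ and $\Psi_n(r) := n^{-1}\sum_j \bigl(n^{-1}\sum_i \varepsilon_i r(W_i|Z_j)\bigr)^2 = \mathcal{R}_n(f^*,r)$. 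Since $f^*\in\mathcal{F}$ by Assumption~\ref{asm:true_structural} and $\mathcal{R}_n(\hat{f},\hat{r})\le \mathcal{R}_n(f^*,\hat{r})=\Psi_n(\hat{r})$ by optimality, rearranging gives $\mathcal{L}_n(\hat{f},\hat{r}) \le -2\Xi_n(\hat{f},\hat{r})$. Applying Cauchy--Schwarz in the $j$-index to $|\Xi_n(\hat{f},\hat{r})|$ yields $|\Xi_n(\hat{f},\hat{r})| \le \sqrt{\mathcal{L}_n(\hat{f},\hat{r})}\sqrt{\Psi_n(\hat{r})}$, and solving the resulting quadratic inequality gives $\mathcal{L}_n(\hat{f},\hat{r}) \le 4\Psi_n(\hat{r})$. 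Identifying $\Gamma(\hat{f},\hat{r})$ with this controllable noise quantity (plus, if needed, a term coming from the slack in the minimization step) and re-attaching the $O_p(1/n)$ correction from the first step delivers the stated inequality.

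The main obstacle is that one still needs to see $\Psi_n(\hat{r})$ (and more generally the cross-term $\Xi_n(\hat{f},\hat{r})$) as a data-dependent quantity that can later be controlled inside Theorem~\ref{thm:general_bound}. Because $\hat{r}$ depends on the full sample, a direct conditional zero-mean argument based on $\mathbb{E}[\varepsilon_i|Z_i]=0$ (Assumption~\ref{asm:sub_gaussian}) does not apply. The standard remedy is a uniform empirical-process inequality over the product class $\{(\varepsilon,w,z)\mapsto \varepsilon\, r(w|z):r\in\mathcal{R}\}$: using the sub-Gaussian tail of $\varepsilon_i$ to verify the Bernstein-type moment condition required by Theorem~5.11 of \citet{vandeGeerEmpirical2000} (quoted in the Appendix), one obtains an $O_p(\mathcal{C}_B(\mathcal{R})/\sqrt{n})$ bound on the inner sums, whose square enters the definition of $\Gamma(\hat{f},\hat{r})$; this is precisely the ingredient that feeds the $\mathcal{C}_B(\mathcal{R})/\sqrt{n}$ term appearing in Theorem~\ref{thm:general_bound}.
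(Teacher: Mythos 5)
Your argument is essentially the paper's: the basic inequality $\mathcal{R}_n(\hat f,\hat r)\le\mathcal{R}_n(f^*,\hat r)$ from optimality, the substitution $Y_i=f^*(X_i)+\varepsilon_i$, and the exact cancellation of $\Psi_n(\hat r)$ leave $\mathcal{L}_n(\hat f,\hat r)\le -2\Xi_n(\hat f,\hat r)$, whose off-diagonal part is precisely the paper's $\Gamma(\hat f,\hat r)$ and whose diagonal part is $\mathcal{O}_p(1/n)$ by boundedness of $\mathcal{F}$, $\mathcal{R}$ and sub-Gaussianity of $\varepsilon_i$, so together with the U-/V-statistic correction the lemma follows. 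The subsequent Cauchy--Schwarz step yielding $\mathcal{L}_n\le 4\Psi_n(\hat r)$ and the uniform control of $\Psi_n$ are not needed here (they belong to the later bounding of $\Gamma$ in Theorem~\ref{thm:general_bound}); no identification of $\Gamma$ with a noise functional is required since $\Gamma$ is by definition that cross term.
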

\begin{proof}
We start with the following basis inequality following the definition of $\hat{f}$; since it is an minimizer of the empirical risk with $\hat{r}$, we obtain
\begin{align*}
    &\frac{1}{n} \sum_{j=1}^n \left(\frac{1}{n} \sum_{i=1}^n (Y_i - \hat{f}(X_i)) \hat{r}(W_i|Z_j)\right)^2\leq \frac{1}{n} \sum_{j=1}^n \left(\frac{1}{n} \sum_{i=1}^n (Y_i - f^*(X_i)) \hat{r}(W_i|Z_j)\right)^2. 
\end{align*}
Since $Y_i = f^*(X_i) + \varepsilon$, the above inequality updated as
\begin{align}
    \frac{1}{n} \sum_{j=1}^n \left(\frac{1}{n} \sum_{i=1}^n (\varepsilon_i - \xi_{\hat{f}}(X_i)) \hat{r}(W_i|Z_j)\right)^2 \leq \frac{1}{n} \sum_{j=1}^n \left(\frac{1}{n} \sum_{i=1}^n \varepsilon_i \hat{r}(W_i|Z_j)\right)^2,\label{ineq:R}
\end{align}
where we define $\xi_{\hat{f}} :=\hat{f} - f^* $.
Since the left-hand side is expanded as
\begin{align*}
    &\frac{1}{n} \sum_{j=1}^n \left(\frac{1}{n} \sum_{i=1}^n (\varepsilon_i - \xi_{\hat{f}}(X_i)) \hat{r}(W_i|Z_j)\right)^2\\
    &= \frac{1}{n} \sum_{j=1}^n \left(\frac{1}{n} \sum_{i=1}^n \varepsilon_i \hat{r}(W_i|Z_j)\right)^2 + \underbrace{\frac{1}{n} \sum_{j=1}^n \left(\frac{1}{n} \sum_{i=1}^n \xi_{\hat{f}}(X_i)\hat{r}(W_i|Z_j)\right)^2}_{= \mathcal{L}_n(\hat{f}, \hat{r}) + \mathcal{O}_p(1/n)} \\
    &  - \underbrace{2 \frac{1}{n} \sum_{j=1}^n \frac{1}{n^2} \sum_{i,i'=1, i \neq i' }^n \varepsilon_i\varepsilon_{i'} \xi_{\hat{f}}(X_{i})\xi_{\hat{f}}(X_{i'})\hat{r}(W_i|Z_j)\hat{r}(W_{i'}|Z_j)}_{=: \Gamma(\hat{f},\hat{r})} - \underbrace{2 \frac{1}{n^3} \sum_{i,j=1}^n \varepsilon_i^2 \xi_{\hat{f}}(X_{i})^2\hat{r}(W_{i}|Z_j)^2}_{= \Orderp{1/n}}.
\end{align*}
Substituting this expansion into \eqref{ineq:R}, we obtain
\begin{align*}
    \mathcal{L}_n(\hat{f}, \hat{r})  \leq  \Gamma(\hat{f},\hat{r}) + \Orderp{1/n} \leq \sup_{f \in \mathcal{F}, r \in \mathcal{R}} \Gamma(f,r) + \Orderp{1/n}.
\end{align*}
Therefore, \[\widetilde{\mathcal{L}}_n(\hat{f}, \hat{r}) =  \mathcal{L}_n(\hat{f}, \hat{r}) + \Orderp{1/n}=  \Gamma(\hat{f},\hat{r}) + \Orderp{1/n} \leq \sup_{f \in \mathcal{F}, r \in \mathcal{R}} \Gamma(f,r) + \Orderp{1/n}.\]

\end{proof}
Then we show Theorem~\ref{thm:general_bound}. 
\paragraph{Step~(i): Bound of $\widetilde{\mathcal{L}}_n(\hat{f}, \hat{r})$.}

A goal of this step is to show that the definition of $\hat{f}$ implies that for some $\epsilon \geq 0$, with high probability,
\begin{align*}
    \widetilde{\mathcal{L}}_n(\hat{f}, \hat{r}) \leq  \epsilon.
\end{align*}
From Lemma~\ref{lem:Ln_bound},
\[\widetilde{\mathcal{L}}_n(\hat{f}, \hat{r})\leq  \Gamma(\hat{f},\hat{r}) + \Orderp{1/n} \leq \sup_{f \in \mathcal{F}, r \in \mathcal{R}} \Gamma(f,r) + \Orderp{1/n}.\]

A rest of this step is to bound $\sup_{f \in \mathcal{F}, r \in \mathcal{R}} \Gamma(f,r)$. 
We bound the tail-probability by the Talagrand's inequality (Theorem 3.3.9 in  \cite{gine2021mathematical}) for U-statistics (Theorem  5 in \cite{arcones1995bernstein}, or Theorem 1 in \cite{li2014u}) as the following inequality; for any $\delta > 0$, with probability at least $1-\delta$, we obtain
\begin{align*}
     \sup_{f \in \mathcal{F}, r \in \mathcal{R}} \Gamma(f,r) \lesssim   \mathbb{E}_{\mathcal{D}}\left[\sup_{f \in \mathcal{F}, r \in \mathcal{R}} \Gamma(f,r)\right] + O\left(\sqrt{ \frac{\log(1/\delta)}{n}}\right),
\end{align*}
provided the uniformly bounded and finite variance properties over $\mathcal{F} \times \mathcal{R}$.

To the end, we study an expectation of $\Gamma(\hat{f},\hat{r})$ in terms of the dataset $\mathcal{D} = \{(Y_i,X_i,Z_i)\}_{i=1}^n$ and obtain
\begin{align*}
    \mathbb{E}_{\mathcal{D}}\left[ \sup_{f \in \mathcal{F}, r \in \mathcal{R}} \Gamma(f,r)\right] &\leq \mathbb{E}_{\mathcal{D}}\left[\Gamma(f^*, r^*)\right] + \mathbb{E}\left[\sup_{f, f' \in \mathcal{F}, r,f' \in \mathcal{R}} |\Gamma(f,r) - \Gamma(f',r')|\right] \\
    & \lesssim 0 + n^{-1/2}\int_0^B \sqrt{\log \mathcal{N}(\delta, \mathcal{F} \times \mathcal{R} ,\|\cdot\|)} d\delta.
\end{align*}
Here, for the joint set $\mathcal{F} \times \mathcal{R}$, we define a distance $\|(f,r)\| := \|f\| + \|r\|$.
The last inequality follows $\Gamma(f^*,r^*)=0$ and the sub-Gaussianity inequality (Corollary 2.2.8 in \cite{van1996weak}) associated with the sub-Gaussianity of $\varepsilon_i$ and the Lipschitz continuity of $\Gamma(f,r)$ in $(f,r)$.

By these results, we obtain
\begin{align*}
    &\mathcal{L}_n(\hat{f}, \hat{r})  \notag  \\
    &\leq n^{-1/2}\int_0^B \sqrt{\log \mathcal{N}(\delta, \mathcal{F} \times \mathcal{R} ,\|\cdot\|)} d\delta + O\left( \sqrt{\frac{\log(1/\delta)}{n}}\right) \notag \\
    & \lesssim n^{-1/2}\int_0^B \sqrt{\log \mathcal{N}(\delta', \mathcal{F},\|\cdot\|)} d\delta' + n^{-1/2}\int_0^B \sqrt{\log \mathcal{N}(\delta', \mathcal{R},\|\cdot\|)} d\delta' + O\left( \sqrt{\frac{\log(1/\delta)}{n}}\right), 
\end{align*}
which follows $\log \mathcal{N}(\delta, \mathcal{F} \times \mathcal{R} ,\|\cdot\|) \lesssim \log \mathcal{N}(\delta, \mathcal{F} ,\|\cdot\|) + \log  \mathcal{N}(\delta, \mathcal{R} ,\|\cdot\|)$.

\paragraph{Step (ii): Bound of $\Delta_{f}$.}
To bound $\Delta_{f} = \widetilde{\mathcal{L}}(\hat{f}, r^*) -  \widetilde{\mathcal{L}}_n(\hat{f},  r^*)$, we apply the concentration inequality on empirical processes (displayed in Proposition \ref{prop:arcone_talagrand}, which is originally developed in Theorem 5 in \cite{arcones1995bernstein}).
We regard $h$ in Proposition \ref{prop:arcone_talagrand} as $(y,x,y',x') \mapsto (y-f^*(x))(y'-f^*(x'))\Delta_f(x)\Delta_f(x') r(y,x|Z_j)r(y',x'|Z_j)$
and obtain
\begin{align*}
    \Delta_{f} 
    & = \mathbb{E}_{W,W',Z}\left[  (f^*(X)-f(X))r(W|Z)(f^*(X')-f(X'))r(W'|Z)  \right]\\
    &\ \ \ \ \ \ - \frac{1}{n} \sum_{j=1}^n \frac{1}{n(n-1)} \sum_{i,i'=1,i\neq i'}^n(f^*(X_i)-f(X_i))r(W_i|Z_j)(f^*(X_{i'})-f(X_{i'}))r(W_{i'}|Z_j)\\
    & \lesssim \sqrt{\frac{\log (1/\delta)}{n}},
\end{align*}
with probability at least $\delta$ for any $\delta \in (0,1)$.

\paragraph{Step (iii): Conclusion.}
Finally, by combining the above results, with probability at least $1-\delta$ with $\delta>0$,
\begin{align*}
    \mathcal{L}(\hat{f}, r^*) &=\underbrace{\widetilde{\mathcal{L}}(\hat{f}, r^*) -  \widetilde{\mathcal{L}}_n(\hat{f},  r^*) }_{=:\Delta_{f}} + \underbrace{ \widetilde{\mathcal{L}}_n(\hat{f},  r^*)  - \widetilde{\mathcal{L}}_n(\hat{f}, \hat{r})}_{=: \Delta_{r}} + \widetilde{\mathcal{L}}_n(\hat{f}, \hat{r})  \\
    &\leq \Delta_{f} + \Delta_{r} + n^{-1/2}\int_0^B \sqrt{\log \mathcal{N}(\delta', \mathcal{F} \times \mathcal{R} ,\|\cdot\|)} d\delta' + O\left( \sqrt{\frac{\log(1/\delta)}{n}}\right)\\
    &\lesssim \mathcal{O}(\|\hat{r} - r^*\|) + n^{-1/2}\int_0^B \sqrt{\log \mathcal{N}(\delta, \mathcal{F} ,\|\cdot\|)} d\delta  \\
     & \quad + n^{-1/2}\int_0^B \sqrt{\log \mathcal{N}(\delta, \mathcal{R} ,\|\cdot\|)} d\delta+ O \left( \sqrt{ \frac{\log(1/\delta)}{n}} \right)\\
     &=  n^{-1/2}\int_0^B \sqrt{\log \mathcal{N}(\delta, \mathcal{F} ,\|\cdot\|)} d\delta + n^{-1/2}\int_0^B \sqrt{\log \mathcal{N}(\delta, \mathcal{R} ,\|\cdot\|)} d\delta \\
     & \quad + O \left( \max\left\{\sqrt{ \frac{\log(1/\delta)}{n}}, \frac{1}{n^{1/(2 + \gamma)}}\right\}\right)\quad (\mathrm{from}\ \mathrm{Lemma}~\ref{lemm:dens_conv_fast}).
\end{align*}

\section{Experimental settings and additional results}
\label{appdx:exp}

\subsection{Simulation studies using economics datasets}
\label{appdx:econ}
Here, we report the details of the model and hyperparameters used in our experiments.
Since the network structures used in \citet{Ai2003Efficient} and \citet{Newey2003} have many common features, we describe the network structure in \citet{Newey2003}.

For DeepGMM, DFIV, DeepIV, and KIV, we use the same model and hyperparameters published by \citet{Xu2021learning}. For IW-LS, we use the following two networks for estimating density ratio and predicting $f^*$ respectively : FC(4,128)-FC(128,128)-FC(128,1) and FC(1,128)-FC(128,128)-FC(128,1). Each fully-connected layer (FC) is followed by leaky ReLU activations with leakiness $\alpha = 0.2$. A regularization coefficient $\eta$ is set to 0.001 as a result of cross-validation. For LS, we use the same network structure in IW-LS for predicting $f^*$. For IW-MM, we use the same network structure as IW-LS. For IW-Krnl, the same network structure as IW-LS to estimate the density ratio and $\zeta$ and $\sigma^{2}$ are selected via cross-validation.
In \citet{Ai2003Efficient} experiment, we change only the first layer in the network structure to match the dimension of $X_{i}$, and the rest of the network structure is the same.

In addition to the original settings of \citet{Newey2003} and \citet{Ai2003Efficient}, we investigate cases where we add more IVs. 

In \citet{Newey2003}, we generate $\{(Y_i, X_i, Z_{i,1}, Z_{i,2}, Z_{i,3}, Z_{i,4})\}^n_{i=1}$, where $Z_{i,2}, Z_{i,3}, Z_{i,4}$ are additional IVs, as follows: first, they generate $\{(\varepsilon_i, U_i, Z_{i,1}, Z_{i,2}, Z_{i,3}, Z_{i,4})\}^n_{i=1}$ from the multivariate normal distribution $\mathcal{N}\left(\begin{pmatrix}0\\0\\0\\0\\0\\0\end{pmatrix},\begin{pmatrix}1 & 0.5 & 0 & 0 &0 & 0 \\ 0.5 & 1 & 0 & 0 & 0 & 0 \\ 0 & 0 & 1 & 0 & 0 & 0 \\ 0 & 0 & 0 & 1 & 0 & 0 \\ 0 & 0 & 0 & 0 & 1 & 0 \\ 0 & 0 & 0 & 0 & 0 & 1 \end{pmatrix}\right)$; then, they generate $X_i = Z_{i,1} + Z_{i,2} + Z_{i,3} + Z_{i,4} + U_i$ and $Y_i=f^*(X_i)+\varepsilon_i$, where $f^*(X_i) = \ln (|X_i-1|+1) \operatorname{sgn}(X_i-1)$. Here, $\varepsilon_i$ and $U_i$ are unobservable. 

In \citet{Ai2003Efficient}, they generate $\{(Y_i, X_i, Z_i, W_{i,1}, W_{i,2}, W_{i,3})\}^n_{i=1}$, where $W_{i,1}, W_{i,2}, W_{i,3}$ are additional IVs, as follows: first, we generate $\{(\varepsilon_i, X_{1i}, V_{i}, U_i)\}^n_{i=1}$ as $\varepsilon_i \sim \mathcal{N}\left(0, X_{1i}^{2}+V_{i}^{2}\right)$, $X_{1i} \iid \text {Unif}[0,1]$, $V_{i} \iid \text {Unif}[0,1]$, $\begin{pmatrix}W_{i,1}\\W_{i,2}\\W_{i,3}\end{pmatrix} \iid\mathcal{N}\left(\begin{pmatrix}0\\0\\0\end{pmatrix},\left(\begin{pmatrix}1 & 0.3 & 0.3 \\ 0.3 & 1 & 0.3 \\ 0.3 & 0.3 & 1\end{pmatrix}\right)\right)$, $W_i = \sum^3_{j=1}W_{i,j}$, and $ U_i \iid \mathcal{N}\left(0, X_{1i}^{2}+V_{i}^{2} + |W_i|\right)$; second, we generate $ X_{2i}=X_{1i}+V_{i}+W_i + R \times \varepsilon_i+U_i$ and $Y_{1}=X_{1i}\gamma_{0}+h_{0}\left(X_{2i}\right)+\varepsilon_i$, where $h_{0}\left(X_{2i}\right)=\exp \left(X_{2i}\right)/\left(1+\exp \left(X_{2i}\right)\right)$ and $R$ is chosen as $0.9$; then, obtain $X_i = (X_{1i}\ X_{2i})^\top$ and $Z_i = (X_{1i}\ V_{i}\ W_{i, 1}\ W_{i, 2}\ W_{i, 3})$. Here, $\varepsilon_i$ and $U_i$ are unobservable, and $f^*(X_i) = X_{1i}\gamma_{0}+h_{0}\left(X_{2i}\right)$, where the function $h_0$ and $\gamma_0$ are unknown.

\subsection{Simulation studies using demand design datasets}
\label{add_result}

\begin{figure}[t]
\begin{center}
\includegraphics[width=140mm]{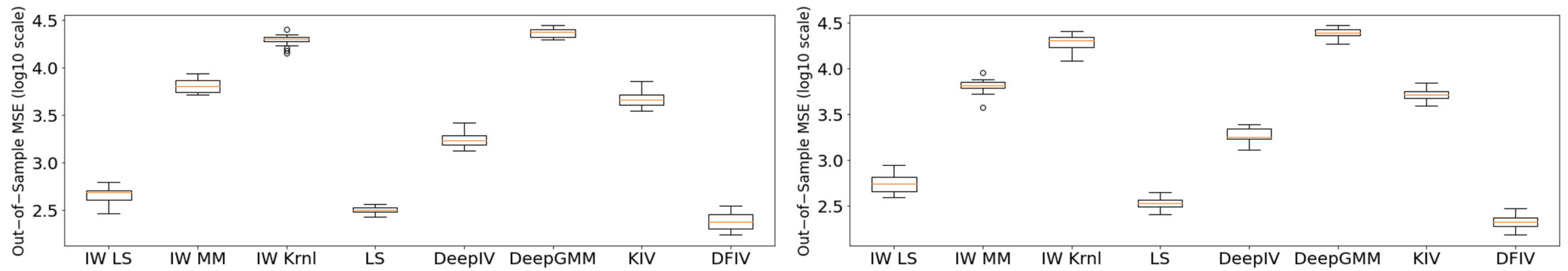}
\end{center}
\caption{Demand design experiments with $5,000$ samples. The left graph show the results with $\rho=0.25$ and the right graph shows the results with $\rho=0.75$.}
\label{fig:low_dim5000}
\vspace{-0.2cm}
\end{figure}

\begin{figure}[t]
\begin{center}
\includegraphics[width=140mm]{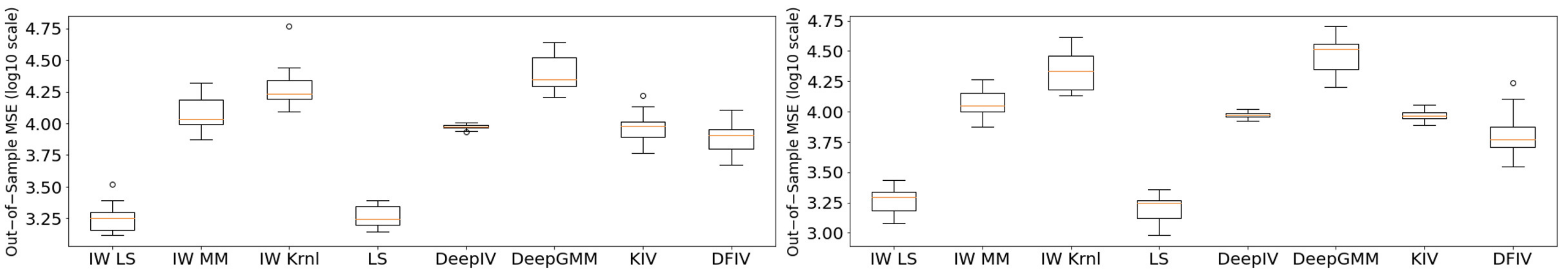}
\end{center}
\vspace{-0.2cm}
\caption{Demand design experiments with stronger correlation between $X_i$ and $\varepsilon_i$. The sample sizes are $1,000$. The left graph show the results with $\rho=0.25$ and the right graph shows the results with $\rho=0.75$.}
\label{fig:low10_dim1000}
\vspace{-0.2cm}
\end{figure}

\begin{figure}[t]
\begin{center}
\includegraphics[width=140mm]{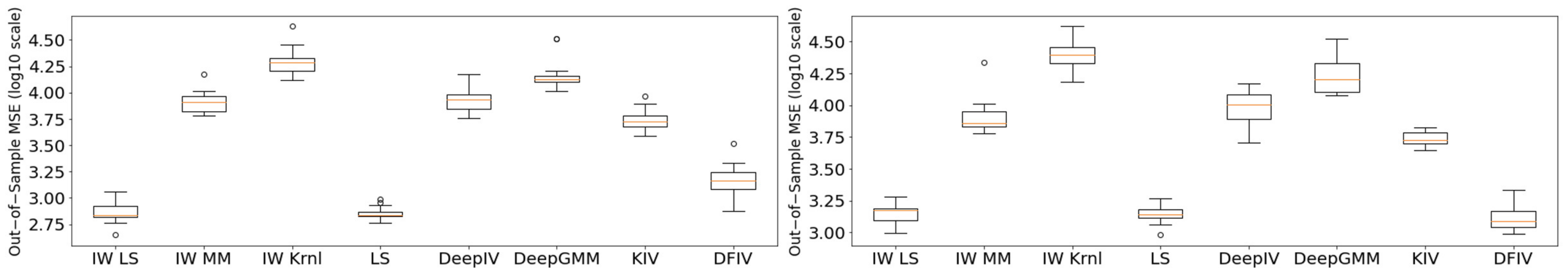}
\end{center}
\vspace{-0.2cm}
\caption{Demand design experiments with stronger correlation between $X_i$ and $\varepsilon_i$. The sample sizes are $5,000$. The left graph show the results with $\rho=0.25$ and the right graph shows the results with $\rho=0.75$.}
\label{fig:low10_dim5000}
\vspace{-0.2cm}
\end{figure}

For DeepGMM, DFIV, DeepIV, and KIV, we use the exact same model and hyperparameters in \citet{Xu2021learning}. For LS and IW-LS, we use the network structure based on the economics datasets network. $\eta$ is 0.001 as a result of cross-validation. For IW-Krnl, we use the same network structure as IW-LS to estimate the density ratio and select $\zeta$ and $\sigma^{2}$ via cross-validation. 

Figure~\ref{fig:low_dim5000} shows the results with $5000$ samples. First, our proposed IW-LS outperforms the existing methods and minimizes the MSE. Second, LS, which is a naive nonlinear regression without IV, outperforms the other existing methods using IV. In contrast, IW-LS outperforms LS.

We consider the case where endogeneity causes a larger change in outcome. Since the original demand design dataset has a small effect of the bias, we slightly change $V_i$ in the price equation: $P_i=25+(C_i+3) h(T_i)+10V_i$.
In the original dataset, $Y_i$ is generated as $Y_i= 100+(10+P_i) S_i h(T_i)-2 P_i +\varepsilon_i$. However, because the impact of $\varepsilon_i$ on the price is very limited, because the variance of $\varepsilon_i$ is relatively small compared to that of $Y_i$. This is the reason why the LS also performs well in the previous result; that is, we can obtain good performance even when ignoring endogeneity (because $f^*(X_i))$ is close to $\mathbb{E}[Y_i | X_i]$). For this reason, this dataset is not appropriate for investigating the performances of the methods, although it is used in existing studies. Here, we consider the different model $Y_i= 100+(10+P_i) S_i h(T_i)-2 P_i +100\varepsilon_i$, in which we multiply the error term by $100$. In this case, the bias has a more serious impact on the values of price and outcome. Figure~\ref{fig:low10_dim1000} shows the results with $1,000$ samples, and Figure~\ref{fig:low10_dim5000} shows the results with $5,000$ samples. Regardless of the sample size and $\rho$, our proposed method outperforms existing methods. Even in this experiment, the LS still performs well. We consider that this is because the correlation between $X_i$ and $\varepsilon_i$ is not strong. In the dataset, $V_i$, which is a cause of the bias, follows a standard normal distribution and has a limited impact on price due to the constant term $25$ ($P_i=25+(C_i+3) h(T_i)+V_i$), that is, in the variation of  $P_i$, $V_i$ has a small effect compared with the other variables.

In this dataset, $f^*(X_i)$ takes large values compared to the error term. Under this situation, a model trained to predict $Y_i$ may perform well because the influence of $\mathbb{E}[\varepsilon_i| X_i]\neq 0$ is limited. However, the purpose of using NPIV in the first place is because the latter influence is large, or else effects of $\mathbb{E}[\varepsilon_i| X_i] \neq 0$ can be ignored. As expected, in our experiments using \citet{Hartford2017}, the least-squares method also performs well, even though the training process ignores the problem of NPIV. To investigate the performance for causal inference, we recommend using simpler datasets, before using more complicated datasets.

We also show how the MSE of the IWMM decreases as the sample size grows in Appendix~\ref{sec:example}

\subsection{Simulation studies using MNIST datasets}
We also investigate how our proposed method performs on high-dimensional data. By using MNIST dataset \citep{MNIST2010}, we convert the low dimensional IVs of economics artificial datasets used in \citet{Newey2003} and \citet{Ai2003Efficient} to high-dimensional IVs. We compare our proposed methods, IW-LS and IW-MM, with the LS and KIV.

\paragraph{Extension of the dataset in \citet{Newey2003}:} For creating high-dimensional IVs, we equip the IVs used in \citet{Newey2003} to the feature vector of the MNIST dataset. Let $D_i\in\mathbb{R}^{100}$ be a randomly chosen feature vector of the MNIST dataset of the number $0$. We reduce the dimension to $100$ by using the principle component analysis. 
We multiply the original IV in \citet{Newey2003} by the feature vector to convert the low-dimensional IV to the high-dimensional IV; that is, we create a new IVs $\tilde{Z}_i\in\mathbb{R}^{100}$ as $\tilde{Z}_i = Z_i D_i$. Instead of $Z_i$, we use this new IV $\tilde{Z}_i$ and estimate the structural function $f^*$. The sample size is $2,500$. The other settings are the same as Section~\ref{sec:exp_np_ac}. The experimental results are shown in the left figure of Figure~\ref{fig:low_dim_neweypowell_high}.

Next, we add an IV to the original dataset in \citet{Newey2003} as $\{(Y_i, X_i, Z_{i,1}, Z_{i,2})\}^n_{i=1}\sim \mathcal{N}\left(\begin{pmatrix}0\\0\\0\\0\end{pmatrix},\begin{pmatrix}1 & 0.5 & 0 & 0 \\ 0.5 & 1 & 0 & 0  \\ 0 & 0 & 1 & 0  \\ 0 & 0 & 0 & 1 \end{pmatrix}\right),$
where $Z_{i,2}$ is an additional IV. Then, we generate $X_i = Z_{i,1} + Z_{i,2}+ U_i$ and $Y_i=f^*(X_i)+\varepsilon_i$, where $f^*(X_i) = \ln (|X_i-1|+1) \operatorname{sgn}(X_i-1)$. Here, $\varepsilon_i$ and $U_i$ are unobservable. Let $D_{i,1}\in\mathbb{R}^{100}$ be a randomly chosen feature vector of the MNIST dataset of the number $0$, and $D_{i,2}\in\mathbb{R}^{100}$ be a randomly chosen feature vector of the MNIST dataset of the number $1$. We reduce the dimension to $100$ by using the principle component analysis. 
We multiply the original IVs by the feature vector to convert the low-dimensional IVs to high-dimensional IVs; that is, we create a new IV $\tilde{Z}_{i,1}\in\mathbb{R}^{100}$ as $\tilde{Z}_{i,1} = Z_{i,1} D_{i,1}$ and $\tilde{Z}_{i,2}\in\mathbb{R}^{100}$ as $\tilde{Z}_{i,2} = Z_{i,2} D_{i,2}$. We use these new IVs $\tilde{Z}_{i,1}$ and $\tilde{Z}_{i,2}$ to estimate the structural function $f^*$.  The experimental results are shown in the right figure of Figure~\ref{fig:low_dim_neweypowell_high}.

\begin{figure}[t]
\begin{center}
\includegraphics[width=140mm]{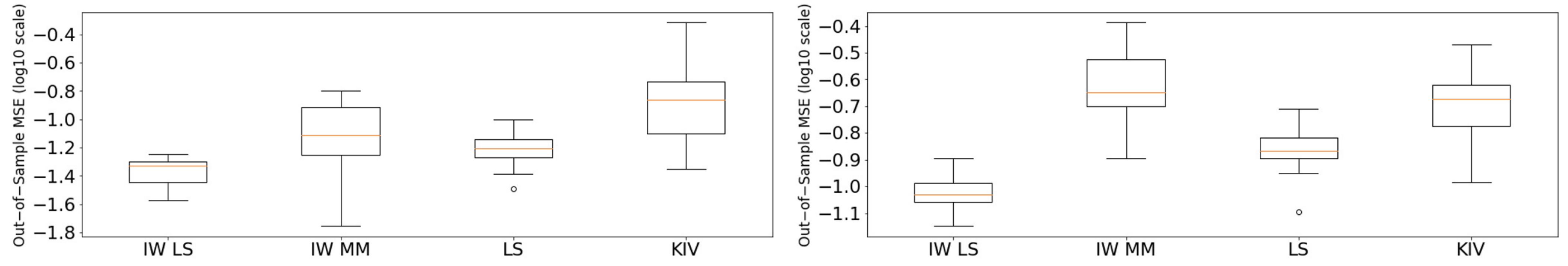}
\end{center}
\caption{The log10 scaled MSEs using the setting in \citet{Newey2003}. The left graph shows the results using the original dataset with the MNIST dataset. The right graph shows the results with additional IVs and the MNIST dataset.}
\label{fig:low_dim_neweypowell_high}
\vspace{-0.5cm}
\end{figure}

\paragraph{Extension of the dataset in \citet{Ai2003Efficient}:} We also equip the MNIST dataset to the IVs of the dataset used in \citet{Ai2003Efficient}. Let $D_{i,1}\in\mathbb{R}^{100}$ be a randomly chosen feature vector of the MNIST dataset with the number $0$, and $D_{i,2}\in\mathbb{R}^{100}$ be a randomly chosen feature vector of the MNIST dataset with the number $1$. We reduce the dimension to $100$ by using the principle component analysis. We transform the original IVs in \citet{Ai2003Efficient}, $W_{i,1}, W_{i,2}\in\mathbb{R}$, to the high-dimensional IVs by multiplying them with the MNIST feature vectors; that is, we create a new IV $\tilde{W}_{i,1}\in\mathbb{R}^{100}$ as $\tilde{W}_{i,1} = W_{i,1} D_{i,1}$ and $\tilde{W}_{i,2}\in\mathbb{R}^{100}$ as $\tilde{W}_{i,2} = W_{i,2} D_{i,2}$. We use these new IVs, $\tilde{W}_{i,1}$ and $\tilde{W}_{i,2}$, to estimate the structural function $f^*$. The sample size is $2,500$. The other settings are the same as Section~\ref{sec:exp_np_ac}. The experimental results are shown in Figure~\ref{fig:low_dim_aichen_high}.

\begin{figure}[t]
\begin{center}
\includegraphics[width=140mm]{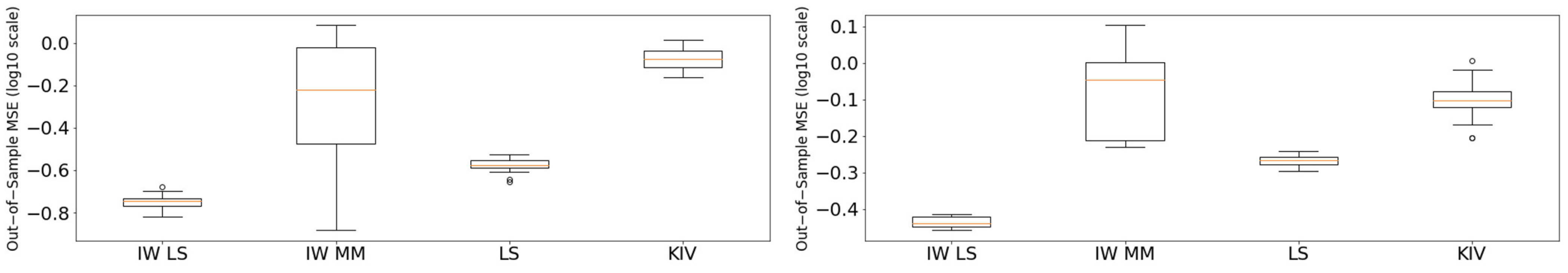}
\end{center}
\caption{The log10 scaled MSEs of dataset in \citet{Ai2003Efficient} with the MNIST dataset. The left graph shows the result with $R=0.1$ and the right graph shows the result with $R=0.9$.}
\label{fig:low_dim_aichen_high}
\vspace{-0.5cm}
\end{figure}

\subsection{Simulation Studies on the Rate of the MSE}
\label{sec:example}
We show experimental results on the MSEs of the NPIV and the classical 2SLS under various sample sizes. 
Here, we have two goals: (i) we validate the theoretical MSE derived in Section~\ref{from_projected}, and (ii) we show the relative performance of the NPIV method  against the classical 2SLS in the learning problem with conditional moment restrictions.

In Figure~\ref{fig:fig_convrate}, we show the MSEs of the IWMM in the setting of \citet{Newey2003} with various sample sizes. We follow the same setting used in Section~\ref{sec:exp_np_ac}, except for the choices of sample sizes.
We investigate whether the empirical MSE of IWMM converges to $0$ with $\mathcal{O}_p(1/\sqrt{n})$ for sample size $n$, which is theoretically provided in Section~\ref{from_projected} under some assumptions.
We compute the empirical MSEs with sample sizes $\{500, 1000, 1500, 2000, 3500, 4000, 4500, 5000\}$ with $10$ trials, then obtain its mean and standard deviation. In Figure~\ref{fig:fig_convrate}, we compare the empirical MSEs with the line that represents $1/\sqrt{n}$. As in our theoretical results, the MSEs decays in $\mathcal{O}_p(1/\sqrt{n})$.

\begin{figure}[t]
\begin{center}
\includegraphics[width=100mm]{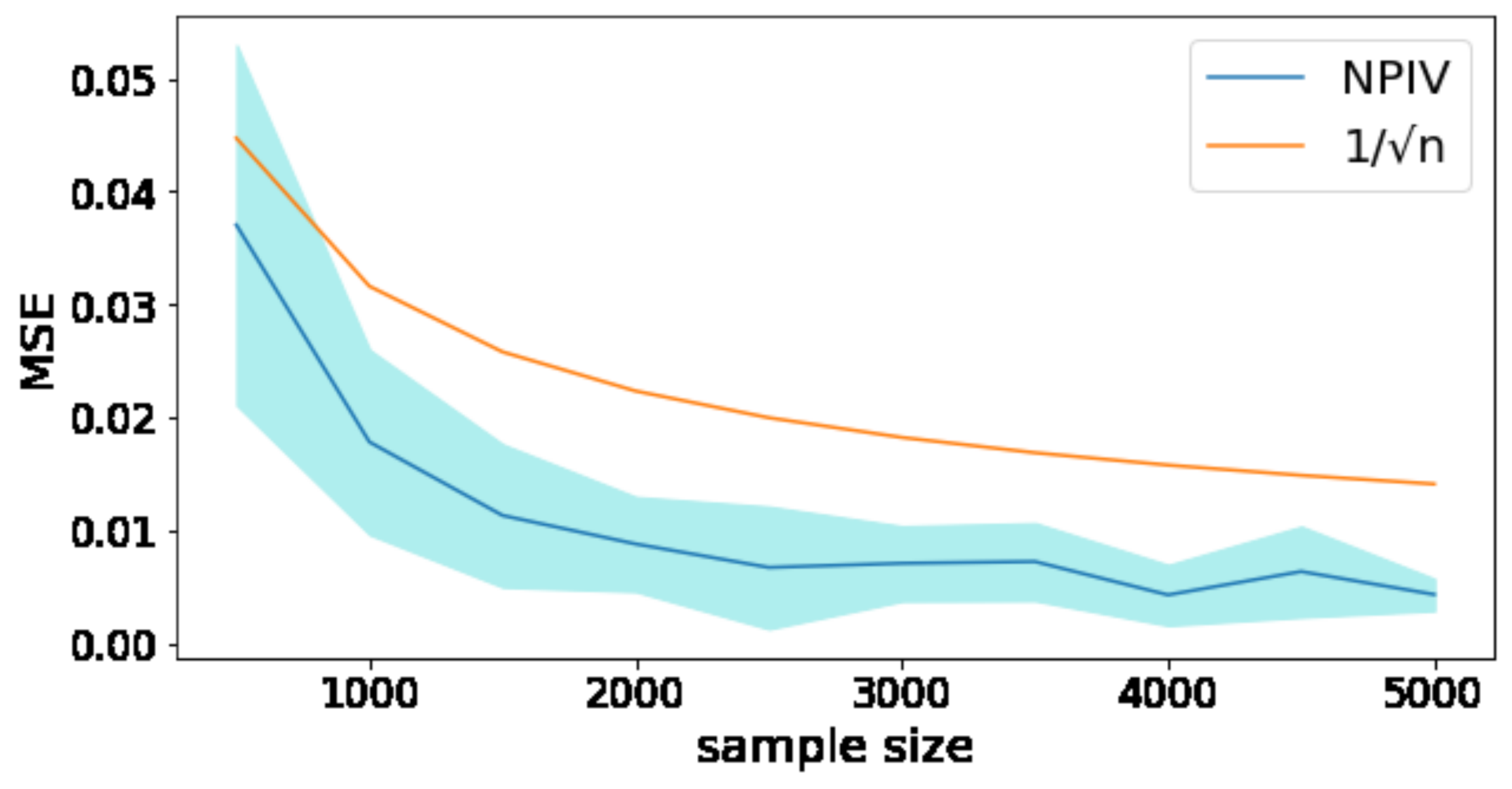}
\end{center}
\caption{MSEs of the IWMM in the setting of \citet{Newey2003} with various sample sizes. The light blue region represents the standard deviation.}
\label{fig:fig_convrate}
\begin{center}
\includegraphics[width=100mm]{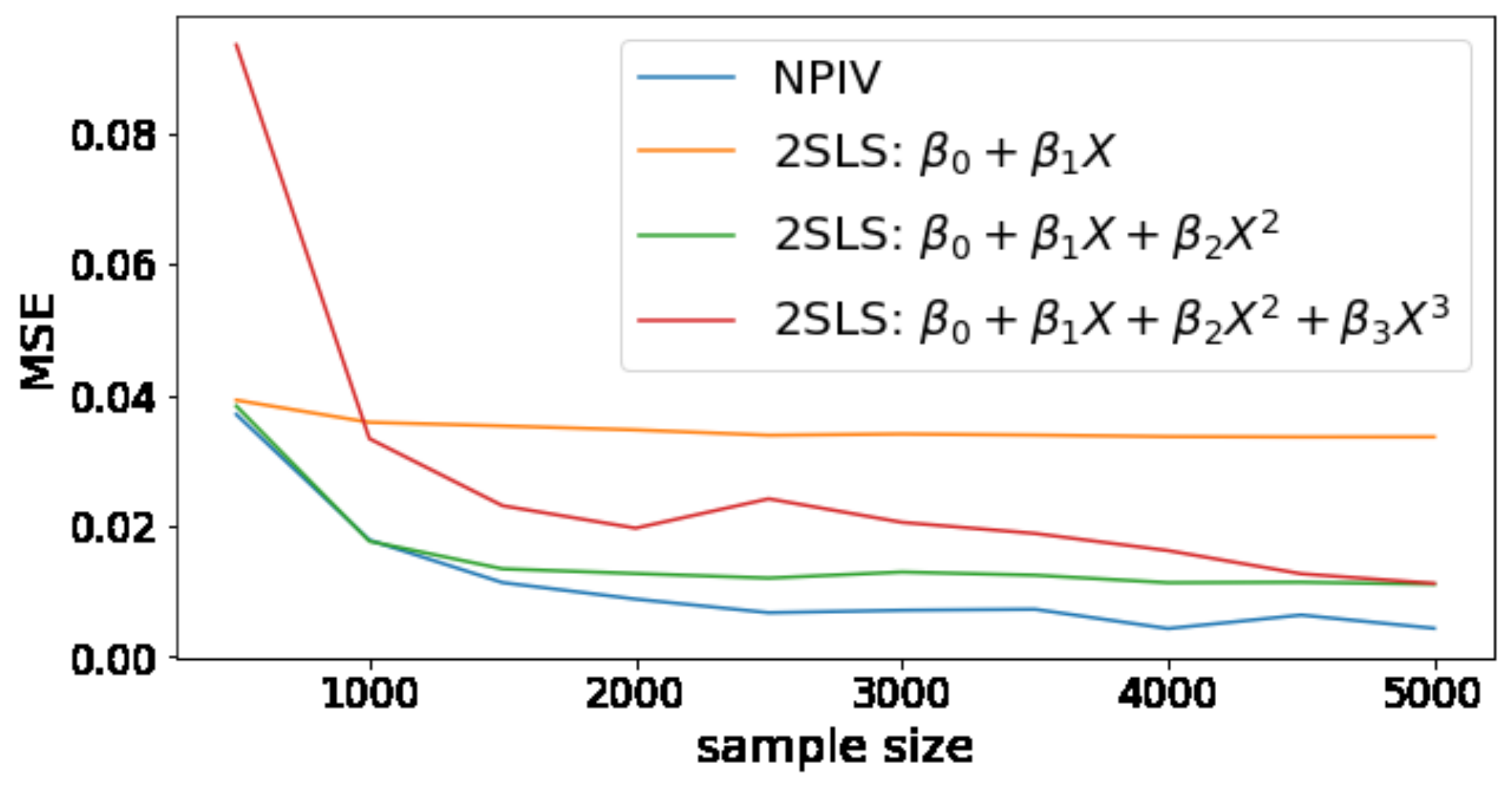}
\end{center}
\caption{MSE of the IWMM in the setting of \citet{Newey2003}.}
\label{fig:fig_compare_convrate}
\end{figure}

With the same setting, we also compare the MSEs of IWMM with the classical 2SLS in Figure~\ref{fig:fig_compare_convrate}, to understand the effectiveness of the NPIV method. We consider three models for the 2SLS: 
\begin{align*}
    &Y_i = \beta_0 + \beta_1 X_i + \varepsilon_i,\\
    &Y_i = \beta_0 + \beta_1 X_i + \beta_2 X^2_i + \varepsilon_i,\\
    &Y_i = \beta_0 + \beta_1 X_i + \beta_2 X^2_i + \beta_3 X^3_i + \varepsilon_i.
\end{align*}
The difference between these three models lies in the choice of the polynomial basis. 
For instance, the last model approximates the structure function by a cubic function. 
If a model introduces an infinite number of polynomial bases, they can approximate various smooth functions by the series expansion. 
In other words, as the number of the polynomial basis increases, the classical 2SLS approaches the NPIV.
This method of introducing a basis is called sieve regression in econometrics, and \citet{Newey2003} proposed using it to solve NPIV. The experimental result in Figure~\ref{fig:fig_compare_convrate} also shows that the second and third models, with the polynomial bases, $X^2_i$ and $X^3_i$, perform closer to the NPIV method than the first model using only $X_i$.

It is important to note that the 2SLS with a polynomial basis function is difficult to implement in high-dimensional situations.
In the setting of \citet{Newey2003}, because both $X$ and $Z$ have one dimension, polynomial approximation is effective.
However, if the dimension increases, the series expansion of multivariate functions requires a huge number of basis functions, hence approximation becomes very difficult. 
In machine learning, for instance, \citet{Singh2019kernelIV} proposes to introduce RKHS to avoid this difficulty.

\end{document}